\providecommand{\tabularnewline}{\\}
\theoremstyle{plain}
\newtheorem{thm}{\protect\theoremname}
  \theoremstyle{definition}
  \newtheorem{defn}[thm]{\protect\definitionname}
  \theoremstyle{plain}
  \newtheorem{prop}[thm]{\protect\propositionname}
  \theoremstyle{plain}
  \newtheorem{cor}[thm]{\protect\corollaryname}
  \theoremstyle{plain}
  \newtheorem{lem}[thm]{\protect\lemmaname}
  \providecommand{\corollaryname}{Corollary}
  \providecommand{\definitionname}{Definition}
  \providecommand{\lemmaname}{Lemma}
  \providecommand{\propositionname}{Proposition}
\providecommand{\theoremname}{Theorem}
\begin{document}
\global\long\def\one{\mathbbm{1}}
\global\long\def\id{\mathrm{id}}
\global\long\def\diag{\mathrm{diag}}
\global\long\def\C{\mathbb{C}}
\global\long\def\R{\mathbb{R}}
\global\long\def\Z{\mathbb{Z}}
\global\long\def\N{\mathbb{N}}
\global\long\def\ot{\otimes}
\global\long\def\L{\mathcal{L}}

\global\long\def\Tr#1{\mathrm{Tr}\left(#1\right)}
\global\long\def\bra#1{\langle#1|}
\global\long\def\ket#1{|#1\rangle}
\global\long\def\braket#1#2{\left\langle #1\middle|#2\right\rangle }
\global\long\def\Mtwo#1#2#3#4{\begin{pmatrix}#1  &  #2\\
#3  &  #4 
\end{pmatrix}}

\global\long\def\spec{\mathrm{Spec}\,}
\global\long\def\re{\mathrm{\, Re}\,}
\global\long\def\im{\,\mathrm{Im}\,}
\global\long\def\dt{\, dt}
\global\long\def\dy{\, dy}
\global\long\def\dz{\, dz}
\global\long\def\no#1{\left\Vert #1\right\Vert }
\global\long\def\sV{\Theta}
\global\long\def\sG{\Gamma}

\title{Analysis of energy transfer in quantum networks using kinetic network
approximations}

\author{David K. Moser\textsuperscript{1,2}{\small }\\
{\small }\\
{\small 1: Department of Mathematics}\\
{\small 2: Department of Physics}\\
{\small Northeastern University}\\
{\small Boston MA 02115}\\
{\small david.moser@gmx.net}}
\maketitle
\begin{abstract}
Coherent energy transfer in pigment-protein complexes has been studied
by mapping the quantum network to a kinetic network. This gives an
analytic way to find parameter values for optimal transfer efficiency.
In the case of the Fenna-Matthews-Olson (FMO) complex, the comparison
of quantum and kinetic network evolution shows that dephasing-assisted
energy transfer is driven by the two-site coherent interaction, and
not system-wide coherence. Using the Schur complement, we find a new
kinetic network that gives a closer approximation to the quantum network
by including all multi-site coherence contributions. Our new network
approximation can be expanded as a series with contributions representing
different numbers of coherently interacting sites.

For both kinetic networks we study the system relaxation time, the
time it takes for the excitation to spread throughout the complex.
We make mathematically rigorous estimates of the relaxation time when
comparing kinetic and quantum network. Numerical simulations comparing
the coherent model and the two kinetic network models, confirm our
bounds, and show that the relative error of the new kinetic network
approximation is several orders of magnitude smaller.

Keywords: exciton transfer, quantum efficiency, kinetic networks,
FMO, coherent energy transfer, quantum networks, Schur complement.
\end{abstract}
\tableofcontents{}

\section{Introduction}

Since coherent energy transfer in the Fenna-Matthews-Olson complex
(FMO) has been observed \cite{engel_evidence_2007,panitchayangkoon_long-lived_2010,wong_quantum-coherent_2010},
extensive experimental and theoretical research has been dedicated
to studying coherent resonant transfer \cite{clegg_foerster_2010}
and the coherent pigment-protein interaction\cite{scholes_quantum-coherent_2010,ishizaki_quantum_2010}.
In particular, numerical solutions of simple models have shown that
dephasing -- the destruction of the coherences -- at an intermediate
rate helps to increase the energy transfer efficiency \cite{plenio_dephasing-assisted_2008,rebentrost_environment-assisted_2009}.
This has been called dephasing- or environment-assisted energy transfer,
and is analogous to a critically damped oscillator. The dephasing
corresponds to damping and causes the exciton to relax to an equal
distribution for every pigment site instead of staying localized due
to the energy mismatch between the sites.

The models are based on two assumptions. First, only a single exciton
is present, it is located at any of the seven pigments. The pigment
exciton energy, and the pigment dipole-dipole interaction \cite{cho_exciton_2005,adolphs_how_2006}
then lead to an oscillatory evolution of the system. And second, the
site-environment interactions are assumed to be purely Markovian without
any temporal or spatial correlations. The environment interactions
are dephasing, recombination and trapping. Dephasing destroys the
site coherences without destroying the exciton itself, and phonon
recombination or photon re-emission lead to loss of the exciton to
the environment. Trapping is the transfer of the exciton to the reaction
center, where the electronic energy is converted to chemical energy,
in FMO it occurs at pigment 3. The \emph{transfer efficiency} is the
probability that an exciton starting at site 1 or site 6 reaches the
reaction center. For a general system with $n$ pigments, we convert
the master equation of the coherent model into vector form
\[
\dot{\vec{\rho}}=M\vec{\rho}
\]
where $\vec{\rho}\in\R^{n^{2}}$ is the density matrix in vector form
and $M$ is a real $n^{2}\times n^{2}$-matrix. Two procedures to
find $M$ are presented in \ref{sub:Converting-the-master} and \ref{sub:Numerical-simulations}.

To study population transfer channels and conditions for optimal transfer,
a mapping to kinetic networks has been proposed \cite{cao_optimization_2009,hoyer_limits_2010}.
A kinetic network is a system where the exciton jumps incoherently
between sites according to some fixed rates, i.e. a continuous-time
Markov process. In its simplest version this approximation only takes
into account the coherent interaction between pairs of sites to derive
the transfer rate between them. If the two sites interact with strength
$V$, have an energy separation $E$, and both sites experience dephasing
at rate $\gamma$ and population loss at rate $\kappa$ then the rate
is
\begin{equation}
\mu=\frac{2\left|V\right|^{2}(\gamma+\kappa)}{(\gamma+\kappa)^{2}+E^{2}}\,.\label{eq:intro-rate}
\end{equation}
This rate is maximized for the intermediate dephasing rate $\gamma=E-\kappa$
so the phenomenon of dephasing-assisted transfer is maintained in
this approximation. For a system with $n$ sites, these rates constitute
the off-diagonals of a $n\times n$ rate matrix $N_{0}$, and the
system populations evolve according to
\[
\dot{\vec{p}}=N_{0}\vec{p}
\]
where $\vec{p}\in\R^{n}$ is the time-dependent population vector.
Figure~\ref{fig:FMO-efficiency} displays the transfer efficiency
with models $M$ and $N_{0}$ for different $\gamma$, the dephasing-assisted
regime clearly shows as a peak around $\gamma\approx170cm^{-1}$.
At the peak the population evolution of $M$ is well approximated
by that of $N_{0}$, therefore dephasing-assisted energy transfer
can be explained by the relatively simple coherent dynamic between
pairs of sites that enters the rate $\mu$ and the influence of system-wide
coherence is small.

\begin{figure}
\begin{centering}
\begingroup
  \makeatletter
  \providecommand\color[2][]{    \GenericError{(gnuplot) \space\space\space\@spaces}{      Package color not loaded in conjunction with
      terminal option `colourtext'    }{See the gnuplot documentation for explanation.    }{Either use 'blacktext' in gnuplot or load the package
      color.sty in LaTeX.}    \renewcommand\color[2][]{}  }  \providecommand\includegraphics[2][]{    \GenericError{(gnuplot) \space\space\space\@spaces}{      Package graphicx or graphics not loaded    }{See the gnuplot documentation for explanation.    }{The gnuplot epslatex terminal needs graphicx.sty or graphics.sty.}    \renewcommand\includegraphics[2][]{}  }  \providecommand\rotatebox[2]{#2}  \@ifundefined{ifGPcolor}{    \newif\ifGPcolor
    \GPcolortrue
  }{}  \@ifundefined{ifGPblacktext}{    \newif\ifGPblacktext
    \GPblacktexttrue
  }{}    \let\gplgaddtomacro\g@addto@macro
    \gdef\gplbacktext{}  \gdef\gplfronttext{}  \makeatother
  \ifGPblacktext
        \def\colorrgb#1{}    \def\colorgray#1{}  \else
        \ifGPcolor
      \def\colorrgb#1{\color[rgb]{#1}}      \def\colorgray#1{\color[gray]{#1}}      \expandafter\def\csname LTw\endcsname{\color{white}}      \expandafter\def\csname LTb\endcsname{\color{black}}      \expandafter\def\csname LTa\endcsname{\color{black}}      \expandafter\def\csname LT0\endcsname{\color[rgb]{1,0,0}}      \expandafter\def\csname LT1\endcsname{\color[rgb]{0,1,0}}      \expandafter\def\csname LT2\endcsname{\color[rgb]{0,0,1}}      \expandafter\def\csname LT3\endcsname{\color[rgb]{1,0,1}}      \expandafter\def\csname LT4\endcsname{\color[rgb]{0,1,1}}      \expandafter\def\csname LT5\endcsname{\color[rgb]{1,1,0}}      \expandafter\def\csname LT6\endcsname{\color[rgb]{0,0,0}}      \expandafter\def\csname LT7\endcsname{\color[rgb]{1,0.3,0}}      \expandafter\def\csname LT8\endcsname{\color[rgb]{0.5,0.5,0.5}}    \else
            \def\colorrgb#1{\color{black}}      \def\colorgray#1{\color[gray]{#1}}      \expandafter\def\csname LTw\endcsname{\color{white}}      \expandafter\def\csname LTb\endcsname{\color{black}}      \expandafter\def\csname LTa\endcsname{\color{black}}      \expandafter\def\csname LT0\endcsname{\color{black}}      \expandafter\def\csname LT1\endcsname{\color{black}}      \expandafter\def\csname LT2\endcsname{\color{black}}      \expandafter\def\csname LT3\endcsname{\color{black}}      \expandafter\def\csname LT4\endcsname{\color{black}}      \expandafter\def\csname LT5\endcsname{\color{black}}      \expandafter\def\csname LT6\endcsname{\color{black}}      \expandafter\def\csname LT7\endcsname{\color{black}}      \expandafter\def\csname LT8\endcsname{\color{black}}    \fi
  \fi
  \setlength{\unitlength}{0.0500bp}  \begin{picture}(7200.00,5760.00)    \gplgaddtomacro\gplbacktext{    }    \gplgaddtomacro\gplfronttext{    }    \gplgaddtomacro\gplbacktext{      \csname LTb\endcsname      \put(588,576){\makebox(0,0)[r]{\strut{}-8}}      \put(588,1152){\makebox(0,0)[r]{\strut{}-6}}      \put(588,1728){\makebox(0,0)[r]{\strut{}-4}}      \put(588,2303){\makebox(0,0)[r]{\strut{}-2}}      \csname LTb\endcsname      \put(720,356){\makebox(0,0){\strut{}-3}}      \csname LTb\endcsname      \put(1440,356){\makebox(0,0){\strut{}-2}}      \csname LTb\endcsname      \put(2160,356){\makebox(0,0){\strut{}-1}}      \csname LTb\endcsname      \put(2880,356){\makebox(0,0){\strut{}0}}      \csname LTb\endcsname      \put(3600,356){\makebox(0,0){\strut{}1}}      \csname LTb\endcsname      \put(4320,356){\makebox(0,0){\strut{}2}}      \csname LTb\endcsname      \put(5040,356){\makebox(0,0){\strut{}3}}      \csname LTb\endcsname      \put(5760,356){\makebox(0,0){\strut{}4}}      \csname LTb\endcsname      \put(6480,356){\makebox(0,0){\strut{}5}}      \colorrgb{0.00,0.00,0.00}      \put(82,1727){\rotatebox{90}{\makebox(0,0){\strut{}$\log_{10}\frac{\Delta f}{f}$}}}      \colorrgb{0.00,0.00,0.00}      \put(3600,26){\makebox(0,0){\strut{}$\log_{10}\gamma\,(cm^{-1})$}}    }    \gplgaddtomacro\gplfronttext{    }    \gplgaddtomacro\gplbacktext{      \csname LTb\endcsname      \put(588,2880){\makebox(0,0)[r]{\strut{}0.7}}      \put(588,3648){\makebox(0,0)[r]{\strut{}0.8}}      \put(588,4415){\makebox(0,0)[r]{\strut{}0.9}}      \put(588,5183){\makebox(0,0)[r]{\strut{}1}}      \colorrgb{0.00,0.00,0.00}      \put(720,2660){\makebox(0,0){\strut{}}}      \colorrgb{0.00,0.00,0.00}      \put(1440,2660){\makebox(0,0){\strut{}}}      \colorrgb{0.00,0.00,0.00}      \put(2160,2660){\makebox(0,0){\strut{}}}      \colorrgb{0.00,0.00,0.00}      \put(2880,2660){\makebox(0,0){\strut{}}}      \colorrgb{0.00,0.00,0.00}      \put(3600,2660){\makebox(0,0){\strut{}}}      \colorrgb{0.00,0.00,0.00}      \put(4320,2660){\makebox(0,0){\strut{}}}      \colorrgb{0.00,0.00,0.00}      \put(5040,2660){\makebox(0,0){\strut{}}}      \colorrgb{0.00,0.00,0.00}      \put(5760,2660){\makebox(0,0){\strut{}}}      \colorrgb{0.00,0.00,0.00}      \put(6480,2660){\makebox(0,0){\strut{}}}      \colorrgb{0.00,0.00,0.00}      \put(-50,4031){\rotatebox{90}{\makebox(0,0){\strut{}$f$}}}    }    \gplgaddtomacro\gplfronttext{      \csname LTb\endcsname      \put(2568,5010){\makebox(0,0)[r]{\strut{}quantum}}      \csname LTb\endcsname      \put(2568,4790){\makebox(0,0)[r]{\strut{}kinetic $N_0$}}      \csname LTb\endcsname      \put(2568,4570){\makebox(0,0)[r]{\strut{}kinetic $N$}}    }    \gplbacktext
    \put(0,0){\includegraphics{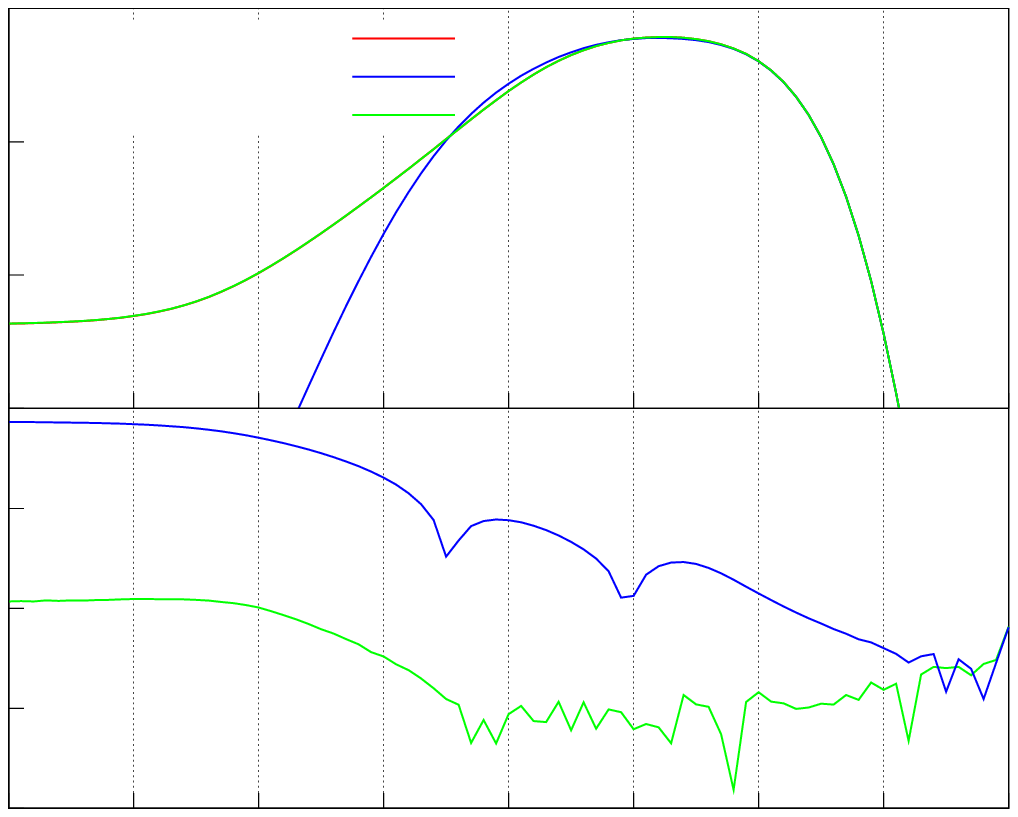}}    \gplfronttext
  \end{picture}\endgroup
\par\end{centering}

\caption{The efficiency of energy transfer in the FMO monomer. Model parameters
are described in \ref{sub:The-FMO-complex}\label{fig:FMO-efficiency}}

\end{figure}

To extract the limit of good approximation we introduce scaling variables,
$\sG$ which is proportional to the energy separations, dephasing
and population loss rates, and $\sV$ which is proportional to the
site interactions. We will show that the approximation of $N_{0}$
to $M$ becomes good as $\sV\sG^{-1}$ approaches 0. We generalize
the procedure of finding a kinetic network approximation in a mathematically
appealing way using block matrices. We find a kinetic network matrix
$N$ that follows the evolution of $M$ much closer -- it is over
three orders of magnitude more precise than the network $N_{0}$ as
shown in Figure~\ref{fig:FMO-efficiency}. Further, it can be expanded
in $\sV\sG^{-1}$ as
\[
N=\sum_{k=0}^{\infty}N_{k}
\]
where $N_{0}$ is the approximation described above, and the $N_{k}$
are rate corrections due to coherent interactions via $k$ intermediate
sites. The expansion terms become smaller for increasing $k$, $N_{k}\propto\sV\cdot(\sV\sG^{-1})^{k}$.
By stopping the expansion at a finite $k$ kinetic networks approximation
of varying accuracy can be formed allowing the study of coherent interaction
at different ``scales'' or number of involved sites. We restrict
our further investigation to the dominant contribution $N_{0}$ and
the entire sum $N$.

In our exact bounds we study the system with all population-loss mechanisms
removed. Due to dephasing the exciton spreads throughout the system
at the \emph{exciton relaxation time} $\tau$ and all populations
become equal. The difference $\Delta\tau$ between relaxation times
of $M$ and $N$ or $N_{0}$ gives a simple measure of how good the
kinetic networks approximate the quantum network. As $\sV\sG^{-1}$
becomes small the kinetic networks approach the quantum network and
$\Delta\tau$ becomes small as well.

We define $\tau$ and $\Delta\tau$ as follows, using the Euclidean
norm $\left\Vert \vec{p}\right\Vert _{2}=\sqrt{\sum_{i=1}^{n}p_{i}^{2}}$
to compare population vectors.
\begin{defn}
~
\begin{enumerate}
\item The map $T:\R^{n^{2}}\to\R^{n}$ is the restriction of density vectors
$\vec{\rho}$ to population vectors $\vec{p}$, and consequently $T^{\dagger}$
gives the embedding of population vector space in density vector space.
In particular, if the first $n$ components of $\vec{\rho}$ represent
the site populations, then $T=(\one_{n},0_{n\times(n^{2}-n)})$.
\item \label{enu:The-maximum-relaxation}The maximum relaxation time is
\[
\tau=\max_{\vec{p}_{0}}\left\Vert \int_{0}^{\infty}e^{Nt}\vec{p}_{0}-\frac{1}{n}\dt\right\Vert _{2}
\]
and the corresponding minimal relaxation rate is
\[
\mu=\frac{1}{\tau}
\]

\item \label{enu:The-maximum-deviation}The maximum deviation of relaxation
time between the quantum network $M$ and the kinetic network $N$
is
\[
\Delta\tau=\max_{\vec{p}_{0}}\left\Vert \int_{0}^{\infty}\left(Te^{Mt}T^{\dagger}-e^{Nt}\right)\vec{p}_{0}\dt\right\Vert _{2}\,.
\]

\item Define $\tau_{0}$, $\mu_{0}$ and $\Delta\tau_{0}$ in the same way,
replacing $N$ with $N_{0}$.
\end{enumerate}
\end{defn}
For our bounds we require that every site experiences dephasing. Further,
the network has to be \emph{connected}, meaning that any two sites
can exchange populations -directly or indirectly- such that the relaxed
state will have equal population everywhere. And finally we also require
our site interactions to be real -- but it is clear from our proofs
that the generalization to complex interactions could be treated in
a similar manner.

Our first results shows how fast the relaxation time of the two kinetic
networks $N_{0}$ and $N$ approximate that of the quantum network
$M$ as $\sV\sG^{-1}$ gets small.
\begin{thm}
\label{thm:intro-relax-bound}There are scaling invariant constants
$k_{1}$ and $k_{2}$, such that for $\sV\sG^{-1}$ small enough we
have the following bounds:
\begin{enumerate}
\item The relative difference of relaxation time between quantum evolution
$M$ and kinetic evolution $N_{0}$ is bounded by 
\[
\Delta\tau_{0,\,\mathrm{rel}}=\Delta\tau_{0}/\tau_{0}\leq k_{1}\sV\sG^{-1}\,.
\]

\item The relative difference of relaxation time between quantum evolution
$M$ and kinetic evolution $N$ is bounded by
\[
\Delta\tau_{\mathrm{rel}}=\Delta\tau/\tau\leq k_{2}\sV^{2}\sG^{-2}\,.
\]

\end{enumerate}
\end{thm}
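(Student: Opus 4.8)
The plan is to convert both relaxation times into norms of inverse generators and then compare those inverses perturbatively in $\sV\sG^{-1}$. Write $\pi=\tfrac1n\one$ for the uniform population, $V_{0}=\{\vec v:\one^{\dagger}\vec v=0\}$ for the mean-zero subspace, and $\tilde N,\tilde N_{0}$ for the restrictions of $N,N_{0}$ to $V_{0}$, which are invertible once the network is connected (their kernels are exactly $\mathrm{span}\,\pi$). Since $e^{Nt}\pi=\pi$ and total population is conserved, one has $\int_{0}^{\infty}(e^{Nt}\vec p_{0}-\pi)\dt=-\tilde N^{-1}(\vec p_{0}-\pi)$, so $\tau=\max_{\vec p_{0}}\no{\tilde N^{-1}(\vec p_{0}-\pi)}$ and likewise for $\tau_{0}$. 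For $\Delta\tau$ I would first note that $Te^{Mt}T^{\dagger}$ and $e^{Nt}$ relax to the same limit $\pi\one^{\dagger}$, so the integrand is absolutely integrable and the difference equals $(R_{M}-R_{N})(\vec p_{0}-\pi)$, where $R_{N}=-\tilde N^{-1}$ on $V_{0}$ and $R_{M}=\int_{0}^{\infty}(Te^{Mt}T^{\dagger}-\pi\one^{\dagger})\dt$.

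The engine is the block/Schur identity for the resolvent. Splitting $M$ into population and coherence blocks $M=\bigl(\begin{smallmatrix}A&B\\C&D\end{smallmatrix}\bigr)$, a direct block inversion gives
\[
T(sI-M)^{-1}T^{\dagger}=\bigl(sI-N(s)\bigr)^{-1},\qquad N(s)=A+B(sI-D)^{-1}C,
\]
so that $N=N(0)=A-BD^{-1}C$ is exactly the Schur complement. By an Abelian/final-value argument $R_{M}=\lim_{s\to0^{+}}\bigl((sI-N(s))^{-1}-\pi\one^{\dagger}/s\bigr)$. Trace preservation $\one^{\dagger}M=0$ forces $\one^{\dagger}A=0$ and $\one^{\dagger}B=0$, hence $\one^{\dagger}N(s)=0$ for all $s$; therefore $V_{0}$ is invariant under every $N(s)$ and the pole at $s=0$ lives only in the $\pi\one^{\dagger}$ direction. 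On $V_{0}$ this collapses $R_{M}$ to $\lim_{s\to0}(sI-N(s)|_{V_{0}})^{-1}=-\tilde N^{-1}=R_{N}$; the frequency dependence $N(s)-N(0)$ feeds only the $\pi$-component, which is annihilated by the maximisation over $\vec p_{0}-\pi\in V_{0}$. This is precisely why the $O(\sV\sG^{-1})$ term present for $N_{0}$ is absent for the full Schur complement $N$, leaving the quadratic bound in part (ii).

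For the scales, in the scaling variables $B,C=O(\sV)$, $D=O(\sG)$ with $D$ invertible, while $A=0$ in the loss-free model, so $N_{0}=O(\sV^{2}\sG^{-1})$. Splitting $D=D_{0}+D_{1}$ into its within-pair block-diagonal part $D_{0}$ and the inter-pair coherence coupling $D_{1}=O(\sV)$, the Neumann series $D^{-1}=\sum_{k\ge0}(-1)^{k}D_{0}^{-1}(D_{1}D_{0}^{-1})^{k}$ identifies $N_{0}=A-BD_{0}^{-1}C$ with the two-site rate network and yields $N-N_{0}=BD^{-1}D_{1}D_{0}^{-1}C=O(\sV^{3}\sG^{-2})$. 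Connectivity and full dephasing give two-sided bounds $\no{\tilde N^{-1}},\no{\tilde N_{0}^{-1}}\asymp\sV^{-2}\sG$, hence $\tau,\tau_{0}\asymp\sV^{-2}\sG$. For part (i) I would write $\tilde N^{-1}-\tilde N_{0}^{-1}=\tilde N_{0}^{-1}(N_{0}-N)|_{V_{0}}\tilde N^{-1}$ and estimate
\[
\Delta\tau_{0}\le K\,\no{\tilde N_{0}^{-1}}\,\no{N-N_{0}}\,\no{\tilde N^{-1}}=O(\sV^{-1}),
\]
so that $\Delta\tau_{0}/\tau_{0}=O(\sV\sG^{-1})$; part (ii) follows from the cancellation above together with the fact that the surviving remainder is smaller by a further factor $\sV\sG^{-1}$.

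I expect the main obstacle to be the scaling-invariant lower bound on $\tau$, equivalently an upper bound on the spectral gap of $\tilde N$. Because $\Delta\tau/\tau$ is a quotient of two maxima, one must exhibit an admissible $\vec p_{0}-\pi$ that excites the slowest relaxation mode, whose rate is $\asymp\sV^{2}\sG^{-1}$, and show this holds uniformly as $\sV\sG^{-1}\to0$; this is where connectivity must be used quantitatively rather than just qualitatively. It has to be combined with uniform control of the Neumann series for $D^{-1}$, which converges exactly in the regime $\sV\sG^{-1}$ small, so that every $O(\cdot)$ constant is genuinely independent of the overall scale and the constants $k_{1},k_{2}$ are well defined. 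The remaining steps are routine operator-norm bookkeeping.
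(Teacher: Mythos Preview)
Your route diverges from the paper's and, for part~(ii), is in fact sharper than you seem to realise. The Schur identity $T(sI-M)^{-1}T^{\dagger}=(sI-N(s))^{-1}$ together with continuity at $s=0$ gives $R_{M}|_{V_{0}}=-\tilde N^{-1}=R_{N}|_{V_{0}}$ \emph{exactly}, because $N(0)=N$; equivalently, the Banachiewicz formula applied to $M$ on $T^{\dagger}V_{0}\oplus C$ yields the population block of $M^{-1}$ as $\tilde N^{-1}$ on the nose. Thus $\Delta\tau=0$, and the bound in~(ii) holds with $k_{2}=0$. Your remarks about a ``surviving remainder smaller by a further factor $\sV\sG^{-1}$'' and about $N(s)-N(0)$ ``feeding only the $\pi$-component'' are both incorrect (the latter preserves $V_{0}$ for every $s$), but they are harmless: once you take the limit $s\to 0$ there is nothing left to bound. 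The paper does \emph{not} use this algebraic collapse; it rewrites $\Delta\tau$ as $\tfrac{1}{2\pi i}\oint z^{-1}S(z)\,dz$ along the imaginary axis and bounds $\|S(iy)\|$ pointwise (their Lemma~12), obtaining the nontrivial constant $\tfrac{4}{\pi}\kappa(1+\beta)$. That contour machinery is overkill for the integrated quantity but is what drives their time-dependent estimates $\|Te^{Mt}T^{\dagger}-e^{Nt}\|$, where the difference is genuinely nonzero and your limit argument no longer applies.

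For part~(i) your argument is essentially the paper's Theorem~9: having established $R_{M}=-\tilde N^{-1}$, you reduce $\Delta\tau_{0}$ to $\|\tilde N^{-1}-\tilde N_{0}^{-1}\|$ and estimate via the resolvent identity with $\|N-N_{0}\|=O(\sV^{3}\sG^{-2})$ and $\|\tilde N^{-1}\|,\|\tilde N_{0}^{-1}\|=O(\sV^{-2}\sG)$. The paper reaches the same place by the triangle inequality through $N$ (their Corollary~10); your use of the exact Schur identity makes the intermediate $M$--$N$ comparison unnecessary. The obstacle you anticipate --- a scaling-invariant \emph{lower} bound on $\tau_{0}$ --- is not the hard direction: since every entry of $N_{0}$ is $O(\sV^{2}\sG^{-1})$ one has $\|N_{0}\|=O(\sV^{2}\sG^{-1})$ and hence $\tau_{0}\ge c\,\sV^{-2}\sG$ automatically. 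What actually requires connectivity is the \emph{upper} bound $\|\tilde N_{0}^{-1}\|\le C\sV^{-2}\sG$ needed in your product estimate, and the paper handles this via its Proposition on spectral properties (strict negativity of $N_{0}$ on $V_{0}$, then $|\mu-\mu_{0}|/\mu_{0}\to 0$), which is the same Neumann-series control you sketch.
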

This Theorem follows from Theorem~\ref{thm:relax-bound-1} and Corollary~\ref{cor:relax-bound}
in Section~\ref{sec:Bounding-relaxation-time}.

We also find the following exponential bounds on the time dependence.
\begin{thm}
There are scaling invariant constants $k_{3}$, $k_{4}$ and $k_{5}$,
such that for any initial population distribution $\vec{p}_{0}$ we
have the following bounds, as long as $\sV\sG^{-1}$ is small enough:
\begin{enumerate}
\item For all times $t\geq0$ 
\[
\left\Vert Te^{Mt}T^{\dagger}\vec{p}_{0}-e^{N_{0}t}\vec{p_{0}}\right\Vert _{2}\leq k_{3}e^{-\mu_{0}t/2}\cdot\sV\sG^{-1}\,.
\]

\item For all times $t\geq0$
\[
\left\Vert Te^{Mt}T^{\dagger}\vec{p}_{0}-e^{Nt}\vec{p_{0}}\right\Vert _{2}\leq k_{4}e^{-\mu t/2}\cdot\sV^{2}\sG^{-2}(1+k_{5}\log\sV\sG^{-1})\,.
\]

\end{enumerate}
\end{thm}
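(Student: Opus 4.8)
The plan is to control the propagator difference through its Laplace transform, exploiting the same Schur complement that defines the kinetic networks. Splitting $M=\Mtwo{A}{B}{C}{D}$ into its population and coherence blocks, so that $A=TMT^{\dagger}$, one has for $\re s>0$ the exact identity
\[
T(sI-M)^{-1}T^{\dagger}=\bigl(sI-N(s)\bigr)^{-1},\qquad N(s)=A+B(sI-D)^{-1}C ,
\]
where $N(s)$ is the $s$-dependent (non-Markovian) effective rate matrix. The refined network is its Markovian limit $N=N(0)$, and $N_{0}$ is the leading block term of $N$. With all loss removed, the dephasing steady state is maximally mixed, so by the connectedness and dephasing hypotheses both $N$ (and $N_{0}$) and $N(s)$ at $s=0$ carry a simple eigenvalue $0$ with the uniform eigenvector and a spectral gap; hence $e^{Nt}$ and $Te^{Mt}T^{\dagger}$ share the same $s=0$ residue, the rank-one projection $P$ onto the uniform distribution. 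Representing each propagator as a Bromwich integral and subtracting, the $s=0$ residues cancel and I can deform the contour to the left of the gap, to $\re s=-\mu/2$:
\[
Te^{Mt}T^{\dagger}-e^{Nt}=\frac{1}{2\pi i}\int_{\re s=-\mu/2}e^{st}\Bigl[\bigl(sI-N(s)\bigr)^{-1}-(sI-N)^{-1}\Bigr]\,ds ,
\]
together with the analogous identity for $N_{0}$, $\mu_{0}$.

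This deformation is what manufactures the exponential prefactor: on the shifted line $|e^{st}|=e^{-\mu t/2}$, which pulls the decay factor out of the integral. The remaining integrand I handle by the resolvent identity
\[
\bigl(sI-N(s)\bigr)^{-1}-(sI-N)^{-1}=\bigl(sI-N(s)\bigr)^{-1}\bigl[N(s)-N\bigr](sI-N)^{-1},
\]
so everything reduces to bounding the memory correction $N(s)-N$ against the two resolvents. For the refined network, $N=N(0)$ and the resolvent identity for $(sI-D)^{-1}$ gives $N(s)-N=sB(sI-D)^{-1}D^{-1}C$, which vanishes to first order in $s$; this extra power of $s$ is exactly what upgrades the estimate from first to second order in $\sV\sG^{-1}$. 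For $N_{0}$ I instead split $N(s)-N_{0}=[N(s)-N]+[N-N_{0}]$, where the $s$-independent multi-site remainder $N-N_{0}$ no longer vanishes at $s=0$ and will furnish the leading $\sV\sG^{-1}$ term.

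The size estimates then follow by splitting the line $s=-\mu/2+i\omega$ into three bands set by the two intrinsic scales, the gap $\mu\sim\sV^{2}\sG^{-1}$ and the coherence-damping scale $\sG$. In the band $|\omega|\lesssim\mu$ both resolvents are of size $\mu^{-1}$ and the $s$-proportional correction is of size $\mu\,\sV^{2}\sG^{-2}$, contributing order $\sV^{2}\sG^{-2}$ (the ``$1$''). In the band $\mu\lesssim|\omega|\lesssim\sG$ one has $\no{(sI-N)^{-1}}\sim|\omega|^{-1}$ while $\no{N(s)-N}\sim|\omega|\,\sV^{2}\sG^{-2}$, so the integrand behaves like $\sV^{2}\sG^{-2}|\omega|^{-1}$ and integration produces $\log(\sG/\mu)\sim|\log\sV\sG^{-1}|$, the source of the logarithm. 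In the band $|\omega|\gtrsim\sG$ the correction saturates at $\sV^{2}\sG^{-1}$ and the $|\omega|^{-2}$ tail is integrable, again giving order $\sV^{2}\sG^{-2}$. Summing the three yields the factor $\sV^{2}\sG^{-2}(1+k_{5}\log\sV\sG^{-1})$ of part (2). For $N_{0}$ the same three-band split applied to the $s$-independent piece $N-N_{0}$ is dominated by the low band and returns the order $\sV\sG^{-1}$ of part (1), the genuine $N(s)-N$ part being of strictly higher order. Passing to the operator norm and using $\no{\vec p_{0}}_{2}\le1$ removes the $\vec p_{0}$-dependence, and since all powers of $\sV,\sG$ have been extracted explicitly the surviving constants $k_{3},k_{4},k_{5}$ are scaling invariant.

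The hard part will be the uniform control of the $s$-dependent resolvent $\bigl(sI-N(s)\bigr)^{-1}$ along the entire shifted contour. Two points must be secured for $\sV\sG^{-1}$ small: first, that the zeros of $\det\bigl(sI-N(s)\bigr)$ (the quantum population modes) never cross the line $\re s=-\mu/2$, so the contour lies in the resolvent set and no hidden residue is collected in the deformation from $\re s>0$; and second, that the resolvent bound degrades gracefully, staying of size $\mu^{-1}$ near the gap and decaying like $|\omega|^{-1}$ at high frequency. Both rest on a perturbative comparison of $N(s)$ with $N$ (respectively $N_{0}$): for $|\omega|\ll\sG$ one has $N(s)\approx N$, so the quantum modes hug $\spec(N)$ and stay away from $-\mu/2$, while the poles of $N(s)$ itself sit near $\spec(D)\approx-\sG$, far to the left. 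Making the gap and resolvent estimates behind Theorem~\ref{thm:relax-bound-1} uniform in $s$ up to $|\omega|\sim\sG$ and beyond, and verifying that in each band the perturbation is small relative to the distance to the spectrum, is the delicate step; once it is in hand, the three-band integration above is routine.
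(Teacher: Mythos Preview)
Your proposal is correct and follows essentially the same route as the paper: represent the propagator difference as a contour integral of the Schur-complement resolvent difference, shift the contour to $\re z=-\mu/2$ to extract the exponential prefactor, and split into three frequency bands (the middle band $\mu\lesssim|\omega|\lesssim\sG$ producing the logarithm), then reach $N_{0}$ via the $N-N_{0}$ remainder. The paper's only cosmetic differences are that it controls the resolvent via Neumann-type inverse bounds rather than your exact resolvent identity, and it restricts from the outset to the population-inequality subspace $I=\vec e^{\perp}$ rather than arguing that the $s=0$ residues cancel.
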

This Theorem follows from Theorem~\ref{thm:evolution-bound-1} and
Corollary~\ref{cor:evolution-bound} in Section~\ref{sec:Bounding-evolution-error}.
We expect that more sophisticated methods might yield the same bound
without the $\sV^{2}\sG^{-2}\log\sV\sG^{-1}$ term.

\section{The quantum network}

We first introduce the Master equation for the coherent model. Then
we reformulate the equation in vector form and combine the entire
dynamic in the real $n^{2}\times n^{2}$-matrix $M$. We describe
the general structure of $M$ as a preparation to the next section,
where we generate kinetic networks from parts of $M$.

\subsection{Master equation}

We consider the same quantum mechanical system studied in \cite{rebentrost_environment-assisted_2009}
with $n$ sites carrying a single excitation which is equivalent to
a system with $n$ states/levels. The site energies are $E_{k}\in\R$
so the energy operator is
\[
H=\sum_{k=1}^{n}E_{k}\ket k\bra k\,.
\]
The site $k$ couples to site $l$ with interaction strength $V_{kl}\in\C$
so the interaction operator is
\[
V=\sum_{k\neq l}V_{kl}\ket k\bra l
\]
 where $V_{kl}=\overline{V}_{lk}$. Site trapping, re-emission and
recombination can be incorporated by an anti-hermitian operator $A$.
Let $\kappa_{k}$ be the combined rate of exciton loss at site $k$
due to these effects, then $A$ is defined as
\[
A=\frac{-i}{2}\sum_{k=1}^{n}\kappa_{k}\ket k\bra k\,.
\]
Finally, every site is also under the influence of dephasing at rate
$\gamma_{k}\ge0$ incorporated in the Lindbladian superoperator
\[
\L(\rho)=\sum_{k=1}^{n}L_{k}\rho L_{k}^{\dagger}-\frac{1}{2}\{\rho,L_{k}^{\dagger}L_{k}\}
\]
with $L_{k}=\sqrt{\gamma_{k}}\ket i\bra i$. Setting $\hbar=1$, the
single exciton manifold of the quantum network is described by the
master equation 
\begin{equation}
\dot{\rho}=-i[H+V,\rho]-i\{A,\rho\}+\mathcal{L}(\rho)\label{eq:general-master-equation}
\end{equation}
where square and curly brackets represent commutator and anti-commutator
respectively.

For now we set $A=0$, ignoring exciton depleting processes as explained
above. We will mention how to include them in the kinetic network
approximations later on. Our approximation becomes exact in the limit
where the energy difference between sites is large, the dephasing
is large and the interactions are small. To be specific, we introduce
scaling parameters $\sG$ and $\sV$ and consider the limit $\sV\sG^{-1}\to0$.
Energies and dephasing scale like $\sG$ and interactions scale like
$\sV$
\begin{align*}
E_{k} & \propto\sG\,,\\
\gamma_{k} & \propto\sG\,,\\
V_{kl} & \propto\sV\,.
\end{align*}

With these assumptions the master equation turns into
\begin{equation}
\dot{\rho}=-i[\sG H+\sV V,\rho]+\sG\L(\rho)\,.\label{eq:master-equation-no-anti}
\end{equation}
Because this equation is linear in $\rho$ it can be converted into
vector form
\[
\dot{\vec{\rho}}=M\vec{\rho}
\]
where $\vec{\rho}\in\R^{n^{2}}$ is the density matrix in vector form
and $M$ is a real $n^{2}\times n^{2}$-matrix. Two procedures to
find $M$ are presented in \ref{sub:Converting-the-master} and \ref{sub:Numerical-simulations}.

\subsection{Converting to vector equation\label{sub:Converting-the-master}}

We rewrite the master equation (\ref{eq:master-equation-no-anti}),
skipping the scaling factors $\sV$ and $\sG$, it is easy to reintroduce
them at a later point
\begin{equation}
\dot{\rho}=-i[H+V,\rho]+\L(\rho)\,.\label{eq:master-equation-kinetic-networks}
\end{equation}
Our first goal is to convert this into the differential equation
\[
\dot{\vec{\rho}}=M\vec{\rho}
\]
for density ``vectors'' $\vec{\rho}\in\R^{n^{2}}$. Notice that
because $\rho=\rho^{\dagger}$ the space of density matrix has $n^{2}$
real dimensions, so we are not using any information when mapping
$\rho$ to $\vec{\rho}$.

We use the following conversion:
\begin{enumerate}
\item The first $n$ entries of the density vector are the populations --
the real diagonal entries of $\rho$.
\item For the entries $n+1$ to $n^{2}$ we alternate between real and imaginary
parts of the coherences -- the off-diagonal entries of $\rho$ --
starting with entry $\rho_{kl}$ where $k=1$ and $l=2$ continuing
by increasing $l$ until $l=n$, then moving to the entry $\rho_{23}$.
We multiply all these entries with $\sqrt{2}$, a normalization factor
useful to achieve simpler expressions later on.
\end{enumerate}
In terms of index equations this is:
\begin{enumerate}
\item For $k=1\dots n$
\[
\vec{\rho}_{k}=\rho_{kk}\,.
\]

\item For $k,l\in\{1,\dots,n\}$ with $k<l$
\begin{align*}
\vec{\rho}_{n+2n(k-1)-k(k+1)+2l-1} & =\sqrt{2}\re\rho_{kl}\\
\vec{\rho}_{n+2n(k-1)-k(k+1)+2l\phantom{-1}} & =\sqrt{2}\im\rho_{kl}\,.
\end{align*}

\end{enumerate}
Other mappings will yield the same kinetic networks, as long as they
allow for an easy separation of population and coherence space.

While somewhat tedious, it is now relatively straightforward to find
the matrix $M$ such that
\[
\dot{\vec{\rho}}=M\vec{\rho}\,.
\]
To find the rows $k=1\dots n$ we write out the diagonal components
of the RHS of (\ref{eq:master-equation-kinetic-networks}), and to
find rows $k=n+1,\dots,n^{2}$ we write out the off-diagonals of the
RHS of (\ref{eq:master-equation-kinetic-networks}). We follow this
procedure explicitly for the case $n=3$ in Appendix~\ref{app:Three-sites}.
From there it is obvious how the procedure generalizes to larger $n$.
Here we will only present the final form.

\subsection{The coherent evolution matrix $M$\label{sub:M}}

For simple notation and to simply extract the kinetic networks we
split up the density vector space $\R^{n^{2}}$. Let $P=\R^{n}$ be
the space of populations and let $C=\R^{n^{2}-n}$ be the space of
coherences. We can then write density vectors as $\vec{\rho}=\begin{pmatrix}\vec{p}\\
\vec{c}
\end{pmatrix}$ with $\vec{p}\in P$ and $\vec{c}\in C$. With this splitting the
matrix $M$ describing the quantum network looks like
\[
M=\begin{pmatrix}0 & -a^{\dagger}\\
a & b
\end{pmatrix}
\]
where $a:P\to C$ and $b:C\to C$ are real matrices (so $a^{\dagger}=a^{\top}$,
but we'll keep the more general notation for later). Notice that the
populations do not affect each other directly, but only via the coherences.

Matrix $a$ describes how populations couple to coherences, its entries
are real and imaginary parts of $V_{kl}$, naturally, site $k$ will
only couple to coherences $kl$ with $l\neq k$, thus of the $(n^{2}-n)$
entries in the $k$-th column of $a$ only $2(n-1)$ are nonzero.
Matrix $b$ describes how coherences couple to other coherences, if
considered as a block matrix with $2\times2$-blocks the diagonal
block for the coherence between site $k$ and site $l$ is of the
form
\begin{equation}
\begin{pmatrix}-\gamma_{kl} & -E_{kl}\\
E_{kl} & -\gamma_{kl}
\end{pmatrix}\label{eq:b-diagonal-blocks}
\end{equation}
where 
\begin{align*}
\gamma_{kl} & =\frac{1}{2}(\gamma_{k}+\gamma_{l})\\
E_{kl} & =E_{k}-E_{l}\,.
\end{align*}
 The off-diagonal blocks consist of real and imaginary parts of $V_{kl}$.

From the form of $M$, when ignoring the off-diagonal blocks of $b$,
we see that the site $k$ couples to the site $l$ via the coupling
strength $V_{kl}$, then some mixture of $\gamma_{kl}$ and $E_{kl}$
and then again via the coupling strength $V_{kl}$. This reminds us
of the rates of the form $\mu=\frac{2V^{2}\gamma}{\gamma^{2}+E^{2}}$
described in (\ref{eq:intro-rate}) that make up the matrix $N_{0}$.
We will make this intuition precise is the next subsections.

\section{Kinetic networks\label{sec:Kinetic-networks}}

In this section we show how the kinetic network $N$ emerges naturally
out of the study of the resolvent $(z-M)^{-1}$. We expand $N$ in
powers of $\sV\sG^{-1}$, giving the series
\[
N=\sum_{k=0}^{\infty}N_{k}
\]
with the leading order contribution being $N_{0}$. For some steps
involving matrix calculations we only give a simplified version. However,
in Appendix~(\ref{app:Three-sites}) we follow the procedure described
below, giving the full expressions in the case $n=3$.

\subsection{Extracting the kinetic network $N$\label{sub:kinetic-network-N}}

To extract kinetic networks from $M$ we consider its resolvent $(z-M)^{-1}$.
Remember that for any holomorphic function $f$ we have
\[
f(M)=\frac{1}{2\pi i}\oint f(z)(z-M)^{-1}\dz\,.
\]
Therefore, if one can bound the resolvent appropriately, one can also
bound the evolution operator $e^{Mt}$ and other related quantities.
Because we only care about approximating the population dynamics we
restrict our view to the population block of the resolvent of $M$.
The Banchiewicz formula \cite{banachiewicz_zur_1937} gives the inverse
of a $2\times2$-block matrix. The first block of the inverse -- in
our case the population block -- is called the Schur complement, and
due to its basic nature has many applications in applied mathematics,
statistics and physics \cite{zhang_schur_2005}. Here we use it to
``pull'' the coherence dynamic back into population space. Only
writing the Schur complement and skipping the other blocks of the
resolvent we have
\[
(z-M)^{-1}=\begin{pmatrix}(z-a^{\dagger}(b-z)^{-1}a)^{-1} & \cdot\\
\cdot & \cdot
\end{pmatrix}\,.
\]
Remember the operator $T$, the restriction to population space. With
our choice of density vector basis it has the form
\[
T=\begin{pmatrix}\one_{n} & 0_{n\times(n^{2}-n)}\end{pmatrix}\,.
\]
The difference of evolution for initial conditions $\vec{\rho}_{0}=\begin{pmatrix}\vec{p}_{0}\\
0
\end{pmatrix}=T^{\dagger}\vec{p}_{0}$ (zero coherences) between quantum network $M$ and kinetic network
$N$ is thus
\[
\left(Te^{Mt}T^{\dagger}-e^{Nt}\right)\vec{p}_{0}=\frac{1}{2\pi i}\oint e^{zt}\left((z-a^{\dagger}(b-z)^{-1}a)^{-1}-(z-N)^{\dagger}\right)\dz\,.
\]
For a good approximation we require
\begin{equation}
(z-a^{\dagger}(b-z)^{-1}a)^{-1}\approx(z-N)^{\dagger}\,.\label{eq:resolvent-approximation}
\end{equation}
At this point it is a small step to drop the second $z$ on the LHS,
in which case the formula becomes equality if we set
\[
N=a^{\dagger}b^{-1}a\,.
\]
To see intuitively that this approximation is good, consider the following.
Matrix $b$ contains terms proportional to $\sG$ on its diagonal
and terms proportional to $\sV$ on its off-diagonal, matrix $a$
is proportional to $\sV$, therefore
\[
N\propto\sV^{2}\sG^{-1}
\]
when $\sV\sG^{-1}$ becomes small. For values of $z$ that are smaller
than eigenvalues of $b$ the approximation (\ref{eq:resolvent-approximation})
is good because $(b-z)^{-1}\approx b^{-1}$, for values of $z$ larger
than eigenvalues of $b$ is good, because then $z$ is much larger
than the eigenvalues of $N$, and so both sides of (\ref{eq:resolvent-approximation})
are approximately $z^{-1}$. This basic insight is what drives our
bounds in Section~\ref{sec:Resolvent-difference-bounds}.

\subsection{Expanding $N$\label{sub:Expanding-N}}

As mentioned in \ref{sub:M}, $b$ consists of $2\times2$-blocks
proportional to $\sG$ on the diagonal and $2\times2$-blocks proportional
to $\sV$ on the off-diagonal. We separate this contributions defining
\[
b=b_{0}+\nu
\]
where $b_{0}\propto\sG$ and $\nu\propto\sV$ is the block-diagonal
and block-off-diagonal of $b$ respectively. If $\sV\sG^{-1}$ is
small enough and if $b_{0}$ is invertible we can expand
\[
b^{-1}=\sum_{k=0}^{\infty}b_{0}^{-1}\left(-\nu b_{0}^{-1}\right)^{k}\,.
\]
This leads to the expansion
\[
N=a^{\dagger}b^{-1}a=\sum_{k=0}^{\infty}N_{k}
\]
with
\begin{equation}
N_{k}=a^{\dagger}b_{0}^{-1}\left(-\nu b_{0}^{-1}\right)^{k}a\,.\label{eq:Nk-definition}
\end{equation}
When using explicit forms of $a$, $b_{0}$ and $\nu$ one can see
that the rates in $N_{k}$ consist of corrections due to interactions
via $k$ intermediates. Roughly speaking, every of the $(k+1)$ sites
along the chain contributes a factor of $\sV$, every of the $k$
coherences (links) contributes a factor of $\sG^{-1}$, thus $N_{k}$
scales like $\sV^{k+1}\sG^{-k}$.

\subsection{The network $N_{0}$\label{sub:kinetic-network-N0}}

We now present the explicit form of 
\[
N_{0}=a^{\dagger}b_{0}^{-1}a
\]
the dominant contribution to $N$. We only show the crucial parts
of the calculations that should make clear how to get the result for
general $n$.

Notice that, from \ref{sub:Expanding-N} and (\ref{eq:b-diagonal-blocks}),
it follows that $b_{0}$ is a $(n^{2}-n)\times(n^{2}-n)$ matrix with
the only nonzero entries being $2\times2$ blocks
\[
\begin{pmatrix}-\gamma_{kl} & -E_{kl}\\
E_{kl} & -\gamma_{kl}
\end{pmatrix}
\]
along the diagonal. With the unitary transformation
\[
U_{0}=\frac{1}{\sqrt{2}}\begin{pmatrix}-i & i\\
1 & 1
\end{pmatrix}
\]
we can diagonalize these $2\times2$ blocks. Hence, the entire matrix
$b_{0}$ can be diagonalized by applying the transformation
\begin{equation}
U=\one_{(n^{2}-n)/2}\otimes U_{0}\label{eq:U-transformation}
\end{equation}
and
\begin{equation}
\tilde{b}_{0}=U^{\dagger}b_{0}U=\diag(\alpha_{12},\bar{\alpha}_{12},\alpha_{13},\bar{\alpha}_{13},\dots,\bar{\alpha}_{n-1,n})\label{eq:b0-twidel}
\end{equation}
with
\[
\alpha_{kl}=-\gamma_{kl}+iE_{kl}
\]
where $\diag$ denotes a diagonal matrix with given diagonal entries.
In fact, $U$ also helps to simplify $a$, consider the case $n=3$
\begin{equation}
\tilde{a}=U^{\dagger}a=\begin{pmatrix}\overline{V}_{12} & -\overline{V}_{12}\\
V_{12} & -V_{12}\\
\overline{V}_{13} &  & -\overline{V}_{13}\\
V_{13} &  & -V_{13}\\
 & \overline{V}_{23} & -\overline{V}_{23}\\
 & V_{23} & -V_{23}
\end{pmatrix}\,,\label{eq:a-twidel-form}
\end{equation}
and the same happens for $\tilde{\nu}=U^{\dagger}\nu U$ (derivation
in Appendix~\ref{app:Three-sites}). Notice that both $\tilde{b}_{0}$
and $\tilde{a}$ are complex matrices, still, we can use the transformed
matrices $\tilde{a}$, $\tilde{b}_{0}$, and $\tilde{\nu}$ when finding
explicit expressions for the real matrices $N_{k}$, because $U$
cancels out. For example
\begin{align*}
N_{0} & =a^{\dagger}b_{0}^{-1}a\\
 & =a^{\dagger}UU^{\dagger}b_{0}^{-1}UU^{\dagger}a\,.\\
 & =(U^{\dagger}a)^{\dagger}(U^{\dagger}b_{0}^{-1}U)^{-1}(U^{\dagger}a)\\
 & =\tilde{a}^{\dagger}\tilde{b}_{0}^{-1}\tilde{a}\,.
\end{align*}
In the case $n=3$ we get
\begin{equation}
N_{0}=\begin{pmatrix}-\mu_{12}-\mu_{13} & \mu_{12} & \mu_{13}\\
\mu_{12} & -\mu_{12}-\mu_{23} & \mu_{23}\\
\mu_{13} & \mu_{23} & -\mu_{13}-\mu_{23}
\end{pmatrix}\label{eq:N0-rate-form}
\end{equation}
with
\[
\mu_{kl}=\frac{2\left|V_{kl}\right|^{2}\gamma_{kl}}{\gamma_{kl}^{2}+E_{kl}^{2}}\,.
\]
The following simplified calculation illustrates how the rates $\mu_{kl}$
result from the matrix multiplication $\tilde{a}^{\dagger}\tilde{b}_{0}^{-1}\tilde{a}$
\begin{align*}
\begin{pmatrix}\bar{V}\\
V
\end{pmatrix}^{\dagger}\begin{pmatrix}\alpha^{-1} & 0\\
0 & \bar{\alpha}^{-1}
\end{pmatrix}\begin{pmatrix}-\bar{V}\\
-V
\end{pmatrix} & =-V\bar{V}(\alpha^{-1}+\bar{\alpha}^{-1})\\
 & =\frac{2\left|V\right|^{2}\gamma}{\gamma^{2}+E^{2}}\,.
\end{align*}
More generally for any $n$ we have
\begin{equation}
\left(N_{0}\right)_{kl}=\mu_{kl}\label{eq:N0-form-a}
\end{equation}
for $i\neq j$ and
\begin{equation}
\left(N_{0}\right)_{kk}=-\sum_{l\neq k}\mu_{kl}\label{eq:N0-form-b}
\end{equation}
This is just the network described in \cite{cao_optimization_2009}
and the introduction.

\subsection{Including re-emission, recombination and trapping}

The population decreasing effects of re-emission, recombination and
trapping can all be described by the rates $\kappa_{k}$ of the diagonal
anti-hermitian operator
\[
A=\frac{-i}{2}\sum_{k=1}^{n}\kappa_{k}\ket k\bra k
\]
included in our general master equation (\ref{eq:general-master-equation}).
The contribution to the rate of change $\dot{\rho}$ is easily calculated
\[
-i\{A,\rho\}=-\sum_{k,l}\frac{1}{2}(\kappa_{k}+\kappa_{l})\ket i\bra j\,,
\]
and $M$ becomes
\[
M=\begin{pmatrix}c_{1} & -a^{\dagger}\\
a & b+c_{2}
\end{pmatrix}
\]
with the new contributions
\[
c_{1}=-\diag(\kappa_{1},\kappa_{2},\dots,\kappa_{n})
\]
and
\[
c_{2}=-\diag(\kappa_{12},\kappa_{12},\kappa_{13},\kappa_{13},\dots,\kappa_{n-1,n})
\]
with $\kappa_{kl}=\frac{1}{2}(\kappa_{k}+\kappa_{l})$ the rate that
decreases the coherence of sites $k$ and $l$. With this the networks
become
\begin{align}
N_{0} & =a^{\dagger}(b_{0}+c_{2})^{-1}a+c_{1}\nonumber \\
N & =a^{\dagger}(b+c_{2})^{-1}a+c_{1}\nonumber \\
N_{k} & =a^{\dagger}(b_{0}+c_{2})^{-1}\left(-\nu(b_{0}+c_{2})^{-1}\right)^{k}a\label{eq:all-N-complete}
\end{align}
which also hold with the replacements $a\to\tilde{a}$, $b\to\tilde{b}$
and $\nu\to\tilde{\nu}$, while leaving $c_{1}$ and $c_{2}$ unchanged.

The rates in $N_{0}$ can again be calculated directly
\[
\left(N_{0}\right)_{kl}=\mu_{kl}=\frac{2\left|V_{kl}\right|^{2}(\gamma_{kl}+\kappa_{kl})}{(\gamma_{kl}+\kappa_{kl})^{2}+E_{kl}^{2}}
\]
for $k\neq l$ and
\[
\left(N_{0}\right)_{kk}=-\kappa_{k}-\sum_{l\neq k}\mu_{kl}\,.
\]

\subsection{Numerical simulations\label{sub:Numerical-simulations}}

According to the last two subsections, network $N_{0}$ is easy to
calculate directly, while network $N$ and any $k$-site contribution
$N_{k}$ can be formed from the general definition of $\tilde{a}$,
$\tilde{b}_{0}$ and $\tilde{\nu}$ (see Appendix~(\ref{app:Constructing-nu}))
which can be somewhat tedious. However, there is another approach
related to numerical simulations. When running numerical calculations
to simulate a complex master equation (\ref{eq:general-master-equation})
on a software like Octave or Matlab, the need to convert the equation
to the form
\[
\dot{\vec{\rho}}=M\vec{\rho}
\]
with a real $M$ arises in any case. This can be done as we describe
it in \ref{sub:Converting-the-master}, or more easily -- because
we have the help of a computer -- by defining an orthonormal density
space basis. For example set
\begin{align*}
\sigma_{k} & =\ket k\bra k\\
\xi_{kl} & =(\ket k\bra l+\ket l\bra k)/\sqrt{2}\\
\eta_{kl} & =(-i\ket k\bra l+i\ket l\bra k)/\sqrt{2}
\end{align*}
for $k<l$. Then matrix $M$ can be formed by applying the master
equation to those vectors and finding their coordinates. The following
gives the population space block of $M$
\[
M_{kl}=\Tr{\sigma_{k}^{\dagger}\mathcal{M}(\sigma_{l})}
\]
where $\mathcal{M}$ is the superoperator formed by the RHS of the
master equation (\ref{eq:general-master-equation}). Once the entire
real matrix is found it is cut into population and coherence blocks
\[
M=\begin{pmatrix}m_{PP} & m_{PC}\\
m_{CP} & m_{CC}
\end{pmatrix}
\]
and a generalized kinetic network of the same form as $N$ is calculated
as
\[
N=m_{PC}m_{CC}^{-1}m_{CP}+m_{PP}\,.
\]
Hence, if one has already calculated $M$ in order to simulate a quantum
network, it only takes a few steps to find the kinetic network approximation
$N$.

\section{Preliminaries}

In this section we give some definitions and conditions. The conditions
allow us to infer basic facts about the spectra of the operators $N_{0}$,
$N$ and $M$, which are required for all our bounds in Sections~(\ref{sec:Bounding-relaxation-time}),
(\ref{sec:Bounding-evolution-error}) and (\ref{sec:Resolvent-difference-bounds}).

\subsection{Norm}

Because for our bounds of the relaxation time we remove all population
decreasing effects all evolutions $M$, $N_{0}$, and $N$ leave the
total population invariant. Therefore we split up the space of populations
$P$. Set
\[
\vec{e}=(1,1,\dots,1)^{\dagger}/n\in P
\]
the equal population vector. As we will prove in Proposition~\ref{prop:spectral-properties},
as long as the network meets certain conditions, both quantum and
kinetic evolutions will tend to $\vec{e}$ for any initial condition
with total population 1. Consequently, we are only interested in the
properties of our matrices in the space of population inequalities
\[
I=\vec{e}^{\top}=\left\{ \vec{v}\middle|\sum_{k}v_{k}=0\right\} \,.
\]

This is reflected in the norm we use, defines as follows. For $A:X_{1}\to X_{2}$
where $X_{1}$ and $X_{2}$ are equal to $I$ or $C$ we define the
operator norm as 
\[
\left\Vert A\right\Vert =\sup_{v\in X_{1}}\frac{\left\Vert Av\right\Vert _{2}}{\left\Vert v\right\Vert _{2}}
\]
where $\left\Vert \cdot\right\Vert _{2}$ is the Euclidean norm. Hence,
from now on, we think of our matrix blocks as
\begin{align*}
a: & I\to C\\
b: & C\to C\\
a^{\dagger}: & C\to I\,.
\end{align*}
Note that $\left\Vert a\right\Vert $ is the same if we maximize over
$I$ or $P$ because $a\vec{e}=0$, and that $\left\Vert a\right\Vert =\left\Vert a^{\dagger}\right\Vert $.
Also, according to Proposition~\ref{prop:spectral-properties} $N_{0}<0$
on $I$ and therefore $N_{0}^{-1}$ is well-defined. The same holds
for $N$. Define
\[
\mu=\left\Vert N^{-1}\right\Vert ^{-1}
\]
to be the eigenvalue closest to 0 in $N$ in $I$, define $\mu_{0}$
the same way for $N_{0}$.

\subsection{Conditions\label{sub:Conditions}}

For all our following bounds we have a set of conditions.
\begin{itemize}
\item First, we require that the network is \emph{connected}, in the sense
that any two sites $k$ and $l$ are coupled, at least via some intermediates,
i.e. for some integer $p\geq0$ there are sites $m_{j}$, $j=1\dots p$
such that the product
\[
V_{km_{1}}V_{m_{1}m_{2}}\dots V_{m_{p-1}m_{p}}V_{k_{p}j}
\]
is nonzero. This condition ensures that all sites can exchange population
and the evolution ultimately converges to $\vec{e}$.
\item Second, we require all the site dephasing rates to be strictly positive,$\gamma_{k}>0$.
This condition is essential for our approximation, as the coherences
need to decay for the evolution $M$ to become non-oscillatory. Notice
that the limit $\sV\sG^{-1}\to0$ does not require that the dephasing
rates get larger, but they will be much larger than the magnitude
of eigenvalues of $N$ or $N_{0}$, the population decay rates, because
$\sG\gg\sV^{2}\sG^{-1}$. 
\item Finally, we require that the $V_{kl}$ are \emph{real}. This ensures
that $N$ is symmetric and has a real spectrum (see Proposition~\ref{prop:spectral-properties}),
which allows simpler bounds in our proofs. While $N_{0}$ is always
symmetric, we first compare the evolutions of $M$ and $N$, and then
the evolutions of $N$ and $N_{0}$. Therefore we require this condition
for both $N_{0}$ and $N$. We are confident that our methods would
extend to the case of complex $V_{kl}$, but for the sake of clarity
we restrict ourselves to the simpler case.
\end{itemize}

\subsection{Inverse bounds\label{sub:Inverse-bounds}}

Or proofs consist mainly of using the following two bounds on the
inverse on different parts of resolvents.

First, consider the Taylor series of the inverse close to 1, which
for real numbers $x$ gives
\[
|(1+x)^{-1}-1|\leq2|x|
\]
for $|x|\leq1/2$. This is readily translated to a bound for operators
\begin{equation}
\left\Vert (A+B)^{-1}-A^{-1}\right\Vert \leq2\left\Vert A^{-1}\right\Vert ^{2}\left\Vert B\right\Vert \label{eq:inverse-bound}
\end{equation}
for $\left\Vert B\right\Vert \leq\frac{1}{2}\left\Vert A^{-1}\right\Vert ^{-1}$.

Second, if $A<-c<0$ is a negative definite, self-adjoint, finite
dimensional operator and $z\in\C$ with $\re z\ge0$ then
\begin{equation}
\left\Vert (z-A)^{-1}\right\Vert \leq c^{-1}\label{eq:inv-bound-const}
\end{equation}
and
\begin{equation}
\left\Vert (z-A)^{-1}\right\Vert \leq\left|z\right|^{-1}\,.\label{eq:inv-bound-yinv}
\end{equation}
these two bounds follow from the fact
\begin{align*}
\left\Vert (z-A)^{-1}\right\Vert  & =\max\,\left\{ \left|\lambda\right|\middle|\lambda\in\spec(z-A)^{-1}\right\} \\
 & =\max\,\left\{ \left|(z-\lambda)^{-1}\right|\middle|\lambda\in\spec A\right\} \,.
\end{align*}

\subsection{Spectral properties}

The following Proposition gives some basic facts about the spectra
of the kinetic networks $N$ and $N_{0}$. We will use these properties
for the proofs of our bounds.
\begin{prop}
\label{prop:spectral-properties}The matrices $N_{0}$ and $N$ as
defined in \ref{sub:kinetic-network-N0} and \ref{sub:kinetic-network-N}
have the following properties
\begin{enumerate}
\item $N_{0}$ is real and symmetric.
\item $N$ is real.
\item If the interactions $V_{kl}$ are real then $N$ is symmetric.
\item $N_{0}\vec{e}=N\vec{e}=0$
\item If $\gamma_{k}>0$ and the network is connected network then $N_{0}<0$
on $I$.
\item For $\sV\sG^{-1}$ small enough, $N_{0}<-\mu/2$ and $N<-\mu_{0}/2$
on $I$.
\end{enumerate}
\end{prop}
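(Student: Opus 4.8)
My plan is to dispatch the algebraic statements (1)--(4) directly from the block structure, prove (5) by a graph-Laplacian computation, and obtain (6) by perturbing off $N_0$ after rescaling. For the easy algebra, write $b=b_0+\nu$ and note that $b_0,\nu,a$ all have entries that are real and imaginary parts of the $V_{kl}$ and of the real energies $E_k$, hence are real matrices. Invertibility of $b$ (and of $b_0$) I would get structurally: the coherent generator $-i[H+V,\cdot]$ is real and antisymmetric in the orthonormal density basis, so its $CC$-block, which is precisely the $E_{kl}$-blocks \eqref{eq:b-diagonal-blocks} together with $\nu$, is antisymmetric, while dephasing contributes the symmetric part $-\diag(\gamma_{kl})\ot\one_2$, negative definite since $\gamma_k>0$. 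Thus $b=-D+S$ with $D\succ0$ and $S^{\top}=-S$, so $\re\langle v,bv\rangle<0$ for $v\neq0$ and $b$ is invertible; then $N=a^{\dagger}b^{-1}a=a^{\top}b^{-1}a$ is real, giving (2). Statement (1) I read off the explicit rate form \eqref{eq:N0-form-a}--\eqref{eq:N0-form-b}: $N_0$ is real with $(N_0)_{kl}=\mu_{kl}=\mu_{lk}$, hence symmetric. For (4) I use $a\vec e=0$ (the equal-population state $\rho=\one_n/n$ commutes with $H+V$ and so generates no coherence), whence $N\vec e=a^{\top}b^{-1}(a\vec e)=0$ and likewise $N_0\vec e=a^{\top}b_0^{-1}(a\vec e)=0$.

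The first genuinely structural point is (3). Since $N$ is real, $N^{\top}=a^{\top}(b^{\top})^{-1}a$, so it suffices to relate $b^{\top}$ to $b$. I would introduce the orthogonal involution $K$ on coherence space that is $+1$ on the real parts and $-1$ on the imaginary parts of the coherences (i.e.\ it implements $\rho\mapsto\bar\rho$). Conjugating the diagonal blocks \eqref{eq:b-diagonal-blocks} by $K$ sends $E_{kl}\to-E_{kl}$ and fixes $\gamma_{kl}$, i.e.\ transposes them; the crux is to check on the explicit form of $\nu$ (Appendix~\ref{app:Constructing-nu}) that, when the $V_{kl}$ are real, $K\nu K=\nu^{\top}$ as well, so that $KbK=b^{\top}$. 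Reality of $V$ also forces the population-to-coherence coupling into the imaginary parts only (the generated term $-iV_{kl}(\rho_{ll}-\rho_{kk})$ is purely imaginary), so $Ka=-a$. Combining, $N^{\top}=a^{\top}(KbK)^{-1}a=(Ka)^{\top}b^{-1}(Ka)=a^{\top}b^{-1}a=N$. I expect the identity $K\nu K=\nu^{\top}$ to be the main obstacle, since it is the one place the reality hypothesis is really used and it requires the explicit entries of $\nu$.

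For (5), using (1) together with the connectivity and $\gamma_k>0$ hypotheses of Section~\ref{sub:Conditions}, I would compute the quadratic form $\langle v,N_0v\rangle=-\tfrac12\sum_{k\neq l}\mu_{kl}(v_k-v_l)^2$, which is $\le0$ because every $\mu_{kl}=2|V_{kl}|^2\gamma_{kl}/(\gamma_{kl}^2+E_{kl}^2)\ge0$. It vanishes only when $v$ is constant across every edge with $\mu_{kl}>0$; since $\mu_{kl}>0$ exactly when $V_{kl}\neq0$ and the network is connected, such a $v$ is constant, hence a multiple of $\vec e$. Therefore $N_0<0$ on $I=\vec e^{\perp}$.

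Finally (6) is a perturbation argument that also requires making everything scaling invariant. With the dimensionless rates $\hat\mu_{kl}=\mu_{kl}\,\sG\sV^{-2}$, the normalized generators $\hat N_0=N_0\,\sG\sV^{-2}$ and $\hat N=N\,\sG\sV^{-2}$ are $O(1)$ and independent of the scaling parameter, with $\hat N=\hat N_0+\sum_{k\ge1}\hat N_k$ and $\hat N_k=O((\sV\sG^{-1})^k)$ by the scaling $N_k\propto\sV(\sV\sG^{-1})^k$ of \eqref{eq:Nk-definition}; hence $\no{\hat N-\hat N_0}\le C\,\sV\sG^{-1}$ for $\sV\sG^{-1}$ small. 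Both $\hat N_0$ and $\hat N$ are symmetric by (1) and (3) and preserve $I$ by (4), and by (5) the least eigenvalue of $-\hat N_0$ on $I$ is a fixed constant $\hat\mu_0>0$. Weyl's inequality then gives that the least eigenvalue $\hat\mu$ of $-\hat N$ on $I$ satisfies $|\hat\mu-\hat\mu_0|\le\no{\hat N-\hat N_0}\to0$; in particular $-\hat N\ge\hat\mu_0-\no{\hat N-\hat N_0}>\hat\mu_0/2>0$ on $I$ for $\sV\sG^{-1}$ small, so $N<0$ on $I$ and $\mu=\no{N^{-1}}^{-1}$ is well defined. For $\sV\sG^{-1}$ small enough one then has $-\hat N_0\ge\hat\mu_0>\hat\mu/2$ and $-\hat N>\hat\mu_0/2$ on $I$; multiplying back by $\sV^2\sG^{-1}$ and recalling $\mu=\hat\mu\,\sV^2\sG^{-1}$, $\mu_0=\hat\mu_0\,\sV^2\sG^{-1}$ yields $N_0<-\mu/2$ and $N<-\mu_0/2$ on $I$, as claimed.
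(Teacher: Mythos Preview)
Your proposal is correct and covers all six items. Parts (1), (2), (4), and (5) match the paper almost exactly; the interesting differences are in (3) and (6).

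For (3), the paper takes a computational route: it passes to the complexified coherence basis via the unitary $U$ of \eqref{eq:U-transformation}, observes that when the $V_{kl}$ are real the matrix $\tilde a$ of \eqref{eq:a-twidel-form} is real and the matrix $\tilde b$ is symmetric (read off from the explicit $\tilde M$ in Appendix~\ref{app:Three-sites}), and concludes $N^{\top}=\tilde a^{\top}(\tilde b^{\top})^{-1}\tilde a=N$. Your approach via the conjugation involution $K$ is more structural: the identity $\hat K\,(-i[H+V,\cdot])\,\hat K=+i[H+V,\cdot]$ for real $H+V$ immediately yields $KbK=b^{\top}$ and $Ka=-a$ without any case-by-case check, so in fact the verification of $K\nu K=\nu^{\top}$ you flag as ``the main obstacle'' comes for free once you notice that $\nu$ is part of the antisymmetric $CC$-block of the coherent generator. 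Your route has the added benefit of giving invertibility of $b$ (via $b=-D+S$ with $D\succ0$, $S^{\top}=-S$), which the paper leaves implicit.

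For (6), the paper bounds $\|N-N_0\|\le 2\|a\|^2\|b_0^{-1}\|^2\|\nu\|$ directly from the inverse perturbation bound \eqref{eq:inverse-bound} and then argues by scaling that this is $o(\mu_0)$. Your version rescales to dimensionless $\hat N_0,\hat N$ and invokes Weyl's inequality; this is the same idea made quantitative. One small slip: $\hat N$ is $O(1)$ but not literally independent of the scaling parameter (only $\hat N_0$ is); what you need, and what you actually use, is $\|\hat N-\hat N_0\|\to0$, which is correct.
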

\begin{proof}
1. These properties follow directly from the form in (\ref{eq:N0-form-a})
and (\ref{eq:N0-form-b}).

2. $N$ is real because it is a product of $a$, $b^{-1}$ and $a^{\dagger}$
which are also real.

3. If $V_{kl}$ is real then $\tilde{a}$ (see (\ref{eq:a-twidel-form}))is
real, so
\[
N=\tilde{a}^{\top}\tilde{b}^{-1}\tilde{a}
\]
Furthermore, $\tilde{b}^{\top}=\tilde{b}$ (see Appendix~\ref{app:Three-sites})
therefore
\begin{align*}
N^{\top} & =\tilde{a}^{\top}\left(\tilde{b}^{-1}\right)^{\top}\tilde{a}\\
 & =\tilde{a}^{\top}\left(\tilde{b}^{\top}\right)^{-1}\tilde{a}\\
 & =N\,.
\end{align*}

4. From (\ref{eq:a-twidel-form}) it is not hard to understand how
$\tilde{a}$ looks for any $n$. One sees that the two rows for the
coherence between sites $k$ and $l$ have exactly two non-zero entries,
the first has $V_{kl}$ and $-V_{kl}$, and the second has $\bar{V}_{kl}$
and $-\bar{V}_{kl}$. Therefore $\tilde{a}\vec{e}=0$ and so $N_{0}\vec{e}=N\vec{e}=0$.

5. For $\vec{v}\in I$ we have $\sum_{k}v_{k}=0$. Now 
\begin{align*}
\vec{v}^{\dagger}N_{0}\vec{v} & =-\sum_{k<l}\mu_{kl}(v_{k}-v_{l})^{2}\\
 & \leq0
\end{align*}
The condition for equality is as follows. Because $\gamma_{k}>0$
we have
\begin{align*}
V_{kl}\neq0 & \iff\mu_{kl}\neq0\,.
\end{align*}
Hence, because the network is connected, we have $v_{k}=v_{l}$ for
all $k$ and $l$ and with $\sum_{k}v_{k}=0$ it follows that $\vec{v}=0$,
thus $N_{0}<0$ on $I$.

6. Because $\mu_{0}=\left\Vert N_{0}^{-1}\right\Vert ^{-1}$ and $N_{0}<0$
we have $N_{0}\leq-\mu_{0}$ on $I$. Note that $\mu\propto\sV^{2}\sG^{-1}$
can grow as $\sV\sG^{-1}$ gets small, so $\mu-\mu_{0}$ can grow
in absolute value, however, as we now show the spectra of $N$ and
$N_{0}$ approach each other relative to their ``size''
\[
\mu-\mu_{0}\ll\mu_{0}
\]
We bound the distance of $N$ and $N_{0}$ with the inverse bound.
For $\sV\sG^{-1}$ small enough we have $\left\Vert \nu\right\Vert \leq\frac{1}{2}\left\Vert b_{0}^{-1}\right\Vert ^{-1}$
and we can apply (\ref{eq:inverse-bound}) on
\[
\left\Vert (b_{0}+\nu)^{-1}-b_{0}^{-1}\right\Vert \leq2\left\Vert b_{0}^{-1}\right\Vert ^{2}\left\Vert \nu\right\Vert \,.
\]
Now
\begin{align*}
\left\Vert N-N_{0}\right\Vert  & =\left\Vert a^{\dagger}\left(b^{-1}-b_{0}^{-1}\right)a\right\Vert \\
 & =\left\Vert a\right\Vert ^{2}2\left\Vert b_{0}^{-1}\right\Vert ^{2}\left\Vert \nu\right\Vert \,.
\end{align*}
So, the distance of $N$ and $N_{0}$ is proportional to $\sV^{2}\sG^{-2}\sV$,
and the eigenvalues in $N_{0}$ and $N$ -- in particular $\mu_{0}$
and $\mu$ -- are proportional to $\sV^{2}\sG^{-1}$. Comparing the
two gives
\[
\sV^{2}\sG^{-2}\sV\ll\sV^{2}\sG^{-1}
\]
because $\sV\sG^{-1}\ll1$. That means the eigenvalues are approaching
each other relative to their magnitude, in particular $N$ becomes
negative definite like $N_{0}$, and
\[
\frac{\left|\mu-\mu_{0}\right|}{\mu_{0}}\to0\,.
\]
Now it immediately follows that
\begin{align*}
N<- & \mu<\mu_{0}/2\\
N_{0}<- & \mu_{0}<\mu/2\,.
\end{align*}
\end{proof}

\section{Bounding relaxation time error\label{sec:Bounding-relaxation-time}}

We now give an explicit definition of relaxation time and the norms
we use to control it. Then we derive bounds first comparing the quantum
network $M$ to the kinetic network $N$, and then comparing the kinetic
networks $N$ and $N_{0}$.

As a simple check of sanity consider the following. If we scale $\sG\propto s$
and $\sV\propto s$ then also $M,N\propto s$ and time scales inversely
$\Delta\tau,\tau\propto s^{-1}$. Therefore the relative error $\Delta\tau_{\mathrm{rel}}=\Delta\tau/\tau$
stays unchanged and we expect bounds in terms of positive powers of
$\sV\sG^{-1}$. Our two bounds show exactly this behavior. The approximation
of $N$ to $M$ is proportional to $\sV^{2}\sG^{-2}$, while the approximation
of $N_{0}$ to $N$ is proportional to $\sV\sG^{-1}$, combining the
two approximations it follows that the approximation of $N_{0}$ to
$M$ is also proportional $\sV\sG^{-1}$.

Note that all the results in this Section require the conditions in
\ref{sub:Conditions}.

\subsection{Relaxation time}

By Proposition~(\ref{prop:spectral-properties}), the eigenvalues
of $N$ and $N_{0}$ on $I$ are all negative for $\sV\sG^{-1}$ small
enough, so for any initial distribution $\vec{p}_{0}\in I$
\begin{align*}
e^{Nt}\vec{p}_{0} & \to0\\
e^{N_{0}t}\vec{p}_{0} & \to0
\end{align*}
for large $t$. We can integrate
\begin{align*}
\int_{0}^{\infty}e^{Nt}\dt & =N^{-1}
\end{align*}
and applying the operator norm maximizes the relaxation time for the
kinetic network $N$ over all possible population inequalities $\vec{p}_{0}\in I$,
set
\begin{align*}
\tau=\mu^{-1} & =\no{N^{-1}}=\left\Vert (a^{\dagger}b^{-1}a)^{-1}\right\Vert \\
 & =\left\Vert \int_{0}^{\infty}e^{Nt}\dt\right\Vert 
\end{align*}
and in the same way we define $\tau_{0}=\mu_{0}^{-1}$ for the network
$N_{0}$. 

We define the error in relaxation time as the relaxation time difference
maximized over $I$
\[
\Delta\tau=\no{\int_{0}^{\infty}Te^{Mt}T^{\dagger}-e^{Nt}\dt}\,.
\]
Hence, bounding $\Delta\tau$ means controlling\emph{ }the \emph{worst
possible} error in relaxation time when approximating $M$ by $N$.
The relative error is
\[
\Delta\tau_{\mathrm{rel}}=\Delta\tau/\tau
\]
notice that we compare the worst possible relaxation time error to
the longest possible relaxation time, those two do not necessarily
occur for the same initial condition. We define $\Delta\tau_{0}$
and $\Delta\tau_{0,\,\mathrm{rel}}$ in the same way, comparing $N$
and $N_{0}$.

\subsection{Resolvent difference\label{sub:rewrite-for-Resolvent-difference}}

Converting the operator for the relaxation time error we get
\begin{align*}
\int_{0}^{\infty}Te^{Mt}T^{\dagger}-e^{Nt}\dt & =TM^{-1}T^{\dagger}-N^{-1}\\
 & =\frac{1}{2\pi i}\oint\frac{1}{z}\left(T\frac{1}{z-M}T^{\dagger}-\frac{1}{z-N}\right)\dz\\
 & =\frac{1}{2\pi i}\oint\frac{1}{z}\left(\frac{1}{z-a^{\dagger}(b-z)^{-1}a}-\frac{1}{z-a^{\dagger}b^{-1}a}\right)\dz\,,
\end{align*}
where the complex integration follows a contour surrounding both $\spec M$
and $\spec N$. Define $S(z)$ to be the difference of the two resolvents
\[
S(z)=\frac{1}{z-a^{\dagger}(b-z)^{-1}a}-\frac{1}{z-a^{\dagger}b^{-1}a}\,.
\]
We now seek a bound on
\begin{equation}
\no{\int_{0}^{\infty}Te^{Mt}T^{\dagger}-e^{Nt}\dt}=\left\Vert \frac{1}{2\pi i}\oint\frac{1}{z}S(z)\dz\right\Vert \,.\label{eq:delta-t}
\end{equation}

\subsection{Comparing the relaxation time of $M$ and $N$}

When bounding second order terms with the inverse bound we encounter
\begin{align*}
\kappa & =\left\Vert a\right\Vert ^{2}\left\Vert b^{-1}\right\Vert ^{2}
\end{align*}
and $\kappa_{0}$ for the corresponding terms with $b_{0}$ instead
of $b$. Notice the scaling behavior $\mu,\mu_{0}\propto\sV^{2}\sG^{-1}$
and $\kappa,\kappa_{0}\propto\sV^{2}\sG^{-2}$.

We will change the contour integration in (\ref{eq:delta-t}) to be
along the imaginary axis $z=iy$ for $y\in\R$. We prove the somewhat
technical bounds on $S(iy)$ in Lemma~\ref{lem:decay-bounds-right-plane}
in Section~\ref{sec:Resolvent-difference-bounds}.
\begin{thm}
\label{thm:relax-bound-1}If $\sV\sG^{-1}$ is small enough then
\[
\Delta\tau=\no{\int_{0}^{\infty}Te^{Mt}T^{\dagger}-e^{Nt}\dt}\leq\frac{4}{\pi}\kappa\mu^{-1}(1+\beta)
\]
where $\beta>0$ is the scaling independent constant from Lemma~\ref{lem:decay-bounds-right-plane}.
This gives a bound on the relative error
\[
\Delta\tau_{\mathrm{rel}}=\Delta\tau/\tau\leq\frac{4}{\pi}\kappa(1+\beta)=k_{2}\sV^{2}\sG^{-2}
\]
where $k_{2}$ is scaling invariant.\end{thm}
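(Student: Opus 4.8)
The plan is to start from the resolvent--difference representation already derived in \ref{sub:rewrite-for-Resolvent-difference}, namely
\[
\int_{0}^{\infty}Te^{Mt}T^{\dagger}-e^{Nt}\dt=\frac{1}{2\pi i}\oint\frac{1}{z}S(z)\dz,
\]
and to deform the contour onto the imaginary axis, where the technical pointwise bounds on $S(iy)$ from Lemma~\ref{lem:decay-bounds-right-plane} can be integrated directly. Writing $N(z)=a^{\dagger}(b-z)^{-1}a$ so that $N=N(0)$, the first step is the algebraic identity obtained from the resolvent identity $(b-z)^{-1}-b^{-1}=z(b-z)^{-1}b^{-1}$ together with $X^{-1}-Y^{-1}=X^{-1}(Y-X)Y^{-1}$:
\[
\frac{1}{z}S(z)=(z-N(z))^{-1}\,a^{\dagger}(b-z)^{-1}b^{-1}a\,(z-N)^{-1}.
\]
The crucial point is that the explicit factor $1/z$ has cancelled, so the integrand is regular at $z=0$; equivalently $S(0)=0$ because $(b-z)^{-1}=b^{-1}$ there, which is meaningful precisely because we work on $I$, where $N<0$ and $N^{-1}$ exists. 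Since $\spec M$ and $\spec N$ lie in the open left half-plane on $I$ by Proposition~\ref{prop:spectral-properties}, both resolvents are analytic for $\re z\ge 0$, and the displayed product decays like $z^{-3}$ as $|z|\to\infty$. Hence the integrand is analytic in the closed right half-plane and decays fast enough that the arc at infinity vanishes, which justifies replacing the original contour by the imaginary axis $z=iy$.

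After this deformation and the substitution $z=iy$, $\dz=i\dy$, I would bound
\[
\Delta\tau=\no{\frac{1}{2\pi i}\oint\frac{1}{z}S(z)\dz}\leq\frac{1}{2\pi}\int_{-\infty}^{\infty}\frac{\no{S(iy)}}{|y|}\dy.
\]
On the axis the self-adjoint factor is controlled elementarily: since $N\le-\mu$ on $I$ and $\re(iy)=0$, the two bounds (\ref{eq:inv-bound-const}) and (\ref{eq:inv-bound-yinv}) give $\no{(iy-N)^{-1}}\le\min(\mu^{-1},|y|^{-1})$. The remaining, non-self-adjoint factors $(iy-N(iy))^{-1}$ and $a^{\dagger}(b-iy)^{-1}b^{-1}a$ are exactly what Lemma~\ref{lem:decay-bounds-right-plane} is designed to estimate: in the near region $|y|\lesssim\sG$ one has $(b-iy)^{-1}\approx b^{-1}$ and $N(iy)\approx N$, so that $\no{S(iy)}/|y|\le 2\kappa\min(\mu^{-2},y^{-2})$, the factor two coming from bounding the Hermitian part of $N(iy)$ by $-\mu/2$ as in Proposition~\ref{prop:spectral-properties}, part 6; in the far region $|y|\gtrsim\sG$ the extra decay of $(b-iy)^{-1}$ makes the contribution a bounded multiple $\beta$ of the near one.

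Granting these bounds, the estimate reduces to the elementary integral
\[
\int_{-\infty}^{\infty}\min(\mu^{-2},y^{-2})\dy=4\mu^{-1},
\]
so that the near region yields $\tfrac{1}{2\pi}\cdot 2\kappa\cdot 4\mu^{-1}=\tfrac{4}{\pi}\kappa\mu^{-1}$ and the far region supplies the factor $(1+\beta)$, giving $\Delta\tau\le\frac{4}{\pi}\kappa\mu^{-1}(1+\beta)$. Dividing by $\tau=\mu^{-1}$ removes the $\mu^{-1}$ and produces $\Delta\tau_{\mathrm{rel}}\le\frac{4}{\pi}\kappa(1+\beta)$; since $\no{a}\propto\sV$ and $\no{b^{-1}}\propto\sG^{-1}$ give $\kappa=\no{a}^{2}\no{b^{-1}}^{2}\propto\sV^{2}\sG^{-2}$, the bound takes the form $k_{2}\sV^{2}\sG^{-2}$ with $k_{2}$ scaling invariant, as claimed. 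The genuinely hard analytic work is not in this argument but is packaged into Lemma~\ref{lem:decay-bounds-right-plane}: controlling the non-self-adjoint resolvent $(iy-N(iy))^{-1}$ uniformly in $y$ and separating the near- and far-$y$ behaviour so that the tail contributes only the bounded constant $\beta$. Within the present theorem the only real subtlety is the contour deformation, which works precisely because the $1/z$ singularity is removable ($S(0)=0$); after that the $y$-integration is routine and merely fixes the constants $4/\pi$ and $(1+\beta)$.
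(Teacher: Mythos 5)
Your proposal is correct and follows essentially the same route as the paper: deform the contour onto the imaginary axis, invoke the two pointwise bounds on $S(iy)$ from Lemma~\ref{lem:decay-bounds-right-plane} in the near and far regions, and integrate to get the $\frac{4}{\pi}\kappa\mu^{-1}(1+\beta)$ bound. The one genuine addition is your explicit factorization $\frac{1}{z}S(z)=(z-N(z))^{-1}a^{\dagger}(b-z)^{-1}b^{-1}a\,(z-N)^{-1}$, which makes the removability of the $1/z$ singularity at the origin (and the $z^{-3}$ decay at infinity) transparent, whereas the paper handles this only implicitly through the bound $\no{S(iy)}\leq4\kappa\mu^{-2}|y|$; your constant bookkeeping in the near region ($2\kappa$ versus the Lemma's $4\kappa$) is slightly loose but does not affect the final form of the estimate.
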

\begin{proof}
We set the integration contour in (\ref{eq:delta-t}) to be along
the complex axis $z=iy$ for $y\in\R$ with $y$ going from $-R$
to $+R$. We close the contour to the left in the half plane of negative
real parts along a circle of radius $R$. According to Lemma~\ref{lem:decay-bounds-right-plane},
$S(z)$ has no poles with $\re z\geq0$ and so all poles lie within
this contour for $R$ large enough and $\sV\sG^{-1}$ small enough.
As $R$ tends to infinity the integrand behaves like $\frac{1}{z^{3}}$
so the half-circle does not contribute to the integral. We can therefore
change to complex integral to an integral in $y$ over all of $\R$
\[
\no{\int_{0}^{\infty}Te^{Mt}T^{\dagger}-e^{Nt}\dt}=\left\Vert \frac{1}{2\pi}\int_{\R}\frac{1}{iy}S(iy)\dy\right\Vert \,.
\]

Now split up the integral into two regions $|y|\leq\mu$ and $|y|\geq\mu$
and then use the corresponding bounds from Lemma~\ref{lem:decay-bounds-right-plane}.
Choose $\sV\sG^{-1}$ small enough so that $\mu<\alpha$ and use part
1 of the Lemma to bound
\begin{align*}
\left\Vert \int_{-\mu}^{\mu}\frac{1}{iy}S(iy)\dy\right\Vert  & \leq\int_{-\mu}^{\mu}\frac{1}{|y|}\cdot4\kappa\mu^{-2}|y|\dy\\
 & \leq8\kappa\mu^{-1}\,,
\end{align*}
and use part 2 of the Lemma to bound
\begin{align*}
\left\Vert \int_{\mu}^{\infty}\frac{1}{iy}S(iy)\dy\right\Vert  & \leq\int_{\mu}^{\infty}\frac{1}{|y|}\cdot4\beta\kappa|y|^{-1}\dy\\
 & \leq4\beta\kappa\mu^{-1}\,.
\end{align*}
Adding the two bounds gives the result
\begin{align*}
\no{\int_{0}^{\infty}Te^{Mt}T^{\dagger}-e^{Nt}\dt} & \leq\left\Vert \frac{1}{2\pi i}\int_{\R}\frac{1}{iy}S(iy)\dy\right\Vert \\
 & \leq\frac{1}{2\pi}8\kappa\mu^{-1}(1+\beta)\,.
\end{align*}

\end{proof}

\subsection{Comparing the relaxation time of $N$ and $N_{0}$}
\begin{thm}
\label{thm:relax-bound-2}If $\sV\sG^{-1}$ is small enough then
\[
\Delta\tau_{1}=\no{\int_{0}^{\infty}e^{Nt}-e^{N_{0}t}\dt}\leq4\kappa\mu^{-2}\left\Vert \nu\right\Vert 
\]
where $\mu$ and $\kappa$ can also be replaced by $\mu_{0}$ and
$\kappa_{0}$. This gives a bound on the relative error
\[
\Delta\tau_{1,\,\mathrm{rel}}=\Delta\tau_{1}/\tau\leq4\kappa\mu^{-1}\left\Vert \nu\right\Vert =k'_{2}\sV\sG^{-1}
\]
where $k_{2}'$ is scaling invariant.\end{thm}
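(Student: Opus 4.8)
The plan is to reduce $\Delta\tau_{1}$ to a comparison of the two inverses $N^{-1}$ and $N_{0}^{-1}$, and then to invoke the operator inverse bound (\ref{eq:inverse-bound}). First I would note that, by Proposition~\ref{prop:spectral-properties}, both $N$ and $N_{0}$ are negative definite on $I$ once $\sV\sG^{-1}$ is small enough, so the two integrals converge and
\[
\int_{0}^{\infty}e^{Nt}-e^{N_{0}t}\dt=N^{-1}-N_{0}^{-1}
\]
up to an overall sign that is irrelevant under the norm. Hence $\Delta\tau_{1}=\no{N^{-1}-N_{0}^{-1}}$, and the task becomes estimating how far the inverses of the two rate matrices lie from each other.

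Next I would apply (\ref{eq:inverse-bound}) with $A=N_{0}$ and $A+B=N$, so that $B=N-N_{0}$ and $\no{N_{0}^{-1}}=\mu_{0}^{-1}$. This yields
\[
\no{N^{-1}-N_{0}^{-1}}\leq2\no{N_{0}^{-1}}^{2}\no{N-N_{0}}=2\mu_{0}^{-2}\no{N-N_{0}}\,.
\]
To control $\no{N-N_{0}}$ I would reuse the computation already carried out in the proof of Proposition~\ref{prop:spectral-properties}: writing $N-N_{0}=a^{\dagger}(b^{-1}-b_{0}^{-1})a$ with $b=b_{0}+\nu$ and applying (\ref{eq:inverse-bound}) to $b_{0}$ gives $\no{N-N_{0}}\leq2\no{a}^{2}\no{b_{0}^{-1}}^{2}\no{\nu}=2\kappa_{0}\no{\nu}$. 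Combining the two estimates produces $\Delta\tau_{1}\leq4\kappa_{0}\mu_{0}^{-2}\no{\nu}$; taking instead $A=N$, $A+B=N_{0}$ and bounding $\no{N-N_{0}}\leq2\kappa\no{\nu}$ (now via the inverse bound applied to $b$) gives the variant with $\mu$ and $\kappa$.

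The relative bound then follows by dividing by $\tau=\mu^{-1}$, so that $\Delta\tau_{1,\mathrm{rel}}\leq4\kappa\mu^{-1}\no{\nu}$. To confirm the final scaling I would track powers: $\kappa\propto\sV^{2}\sG^{-2}$, $\mu^{-1}\propto\sV^{-2}\sG$ and $\no{\nu}\propto\sV$, so the product scales like $\sV\sG^{-1}$ and the prefactor $k_{2}'=4\kappa\mu^{-1}\no{\nu}(\sV\sG^{-1})^{-1}$ is scaling invariant, as claimed.

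The one place that requires genuine attention -- and the source of the hypothesis ``$\sV\sG^{-1}$ small enough'' -- is verifying that (\ref{eq:inverse-bound}) may legitimately be applied, i.e. that $\no{N-N_{0}}\leq\tfrac{1}{2}\mu_{0}$ (respectively $\tfrac{1}{2}\mu$). This is immediate from the scaling already recorded in Proposition~\ref{prop:spectral-properties}, where $\no{N-N_{0}}\propto\sV^{3}\sG^{-2}$ while $\mu_{0}\propto\sV^{2}\sG^{-1}$, so their ratio is $O(\sV\sG^{-1})\to0$. I therefore do not expect any real obstacle beyond careful bookkeeping of the constants; the substance of the argument is the single observation that the relaxation-time difference is exactly $\no{N^{-1}-N_{0}^{-1}}$, after which everything reduces to the two inverse bounds from \ref{sub:Inverse-bounds}.
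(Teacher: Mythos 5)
Your proposal is correct and follows essentially the same route as the paper: evaluate the integral as $N^{-1}-N_{0}^{-1}$, bound $\left\Vert N-N_{0}\right\Vert$ by applying the inverse bound (\ref{eq:inverse-bound}) to $b^{-1}-b_{0}^{-1}$, and then apply the inverse bound a second time at the level of the rate matrices, checking the smallness condition via the scaling $\sV^{3}\sG^{-2}\ll\sV^{2}\sG^{-1}$. The only cosmetic difference is that you take $A=N_{0}$ first and obtain the $\kappa_{0},\mu_{0}$ version before swapping, whereas the paper takes $A=N$ directly; the theorem statement already records that both variants hold.
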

\begin{proof}
In this case we don't need to bound the resolvent, instead we can
evaluate the integral
\[
\int_{0}^{\infty}e^{Nt}-e^{N_{0}t}\dt=N^{-1}-N_{0}^{-1}\,.
\]
We use the inverse bound (\ref{eq:inverse-bound}) twice. First, because
$\left\Vert \nu\right\Vert \leq\frac{1}{2}\left\Vert b^{-1}\right\Vert ^{-1}$
as long as $\sV\sG^{-1}$ is small enough, we can apply the bound
on
\begin{equation}
\left\Vert (b-\nu)^{-1}-b^{-1}\right\Vert \leq2\left\Vert b^{-1}\right\Vert ^{2}\left\Vert \nu\right\Vert \,.\label{eq:dec-time-inside-bound}
\end{equation}
Now, apply the bound again with $A=N$ and $B=N_{0}-N$. The condition
for $B$ is 
\begin{align*}
\left\Vert B\right\Vert  & \leq\left\Vert a\right\Vert ^{2}\left\Vert (b-\nu)^{-1}-b^{-1}\right\Vert \\
 & \leq\left\Vert a\right\Vert ^{2}2\left\Vert b^{-1}\right\Vert ^{2}\left\Vert \nu\right\Vert \\
 & =2\kappa\left\Vert \nu\right\Vert \\
 & \leq2\left\Vert A^{-1}\right\Vert ^{-1}=2\mu
\end{align*}
where we used (\ref{eq:dec-time-inside-bound}) in the second step.
The last inequality is again achieved for $\sV\sG^{-1}$ small enough
because the two sides scale like
\[
\sV^{2}\sG^{-2}\sV\leq\sV^{2}\sG^{-1}\,.
\]
Now it follows that
\[
\left\Vert N^{-1}-N_{0}^{-1}\right\Vert \leq2\left\Vert A^{-1}\right\Vert ^{2}\left\Vert B\right\Vert =4\kappa\mu^{-2}\left\Vert \nu\right\Vert 
\]
as claimed. By switching the role of $b$ and $b_{0}$ we receive
the corresponding bound with $\kappa_{0}$ and $\mu_{0}$.
\end{proof}
As a corollary we receive a bound on the relaxation time difference
between the fully quantum mechanical evolution of $M$ and the simple
kinetic network evolution of $N_{0}$.
\begin{cor}
\label{cor:relax-bound}If $\sV\sG^{-1}$ is small enough then for
some scaling independent constant $k_{1}$
\[
\Delta\tau_{0,\,\mathrm{rel}}\leq k_{1}\sV\sG^{-1}
\]
\end{cor}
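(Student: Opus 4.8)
The plan is to obtain the corollary from the two preceding theorems by a single triangle inequality, using $e^{Nt}$ as an intermediary between the quantum evolution and the simplest kinetic network. Recall that $\Delta\tau_{0}$ is the operator norm over $I$ of $\int_{0}^{\infty}(Te^{Mt}T^{\dagger}-e^{N_{0}t})\dt$, and that all these integrals converge because Proposition~\ref{prop:spectral-properties} guarantees $M$, $N$ and $N_{0}$ are strictly negative on $I$ once $\sV\sG^{-1}$ is small. Inserting and subtracting $e^{Nt}$ inside the integral and applying the triangle inequality for the operator norm gives
\[
\Delta\tau_{0}\leq\no{\int_{0}^{\infty}Te^{Mt}T^{\dagger}-e^{Nt}\dt}+\no{\int_{0}^{\infty}e^{Nt}-e^{N_{0}t}\dt}=\Delta\tau+\Delta\tau_{1}.
\]

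Next I would pass to the relative error by dividing through by $\tau_{0}$. The only real bookkeeping point is that Theorems~\ref{thm:relax-bound-1} and~\ref{thm:relax-bound-2} are stated with the normalization $\tau=\mu^{-1}$, whereas the corollary asks for normalization by $\tau_{0}=\mu_{0}^{-1}$. To bridge this I invoke part~6 of Proposition~\ref{prop:spectral-properties}, which shows $\left|\mu-\mu_{0}\right|/\mu_{0}\to0$, hence $\tau/\tau_{0}=\mu_{0}/\mu\to1$; in particular $\tau/\tau_{0}\leq2$ for $\sV\sG^{-1}$ small enough. (Equivalently, one could simply use the $\mu_{0},\kappa_{0}$ versions of the bound permitted in Theorem~\ref{thm:relax-bound-2} together with a $\tau_{0}$-normalized reading of Theorem~\ref{thm:relax-bound-1}.) Combining this with the two theorems,
\[
\Delta\tau_{0,\,\mathrm{rel}}=\frac{\Delta\tau_{0}}{\tau_{0}}\leq\frac{\tau}{\tau_{0}}\left(\frac{\Delta\tau}{\tau}+\frac{\Delta\tau_{1}}{\tau}\right)\leq2\left(k_{2}\sV^{2}\sG^{-2}+k'_{2}\sV\sG^{-1}\right).
\]

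Finally, since $\sV\sG^{-1}$ is small we have $\sV^{2}\sG^{-2}\leq\sV\sG^{-1}$, so the quadratic term is absorbed into the linear one and the bound reads $\Delta\tau_{0,\,\mathrm{rel}}\leq k_{1}\sV\sG^{-1}$ with the scaling-invariant constant $k_{1}=2k_{2}+2k'_{2}$. I do not expect a genuine obstacle here: the analytic content lives entirely in Theorems~\ref{thm:relax-bound-1} and~\ref{thm:relax-bound-2}, and the only care needed is the comparison of the two normalizations $\tau$ and $\tau_{0}$, which is exactly what Proposition~\ref{prop:spectral-properties} part~6 supplies. The reason the leading behavior is linear rather than quadratic is that the $M$-to-$N$ error is $O(\sV^{2}\sG^{-2})$ while the $N$-to-$N_{0}$ error is only $O(\sV\sG^{-1})$, so the cruder approximation $N_{0}$ dominates the total error.
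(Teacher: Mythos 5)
Your argument is correct and follows essentially the same route as the paper: a triangle inequality giving $\Delta\tau_{0}\leq\Delta\tau+\Delta\tau_{1}$, the comparison of normalizations $\tau/\tau_{0}$ via part~6 of Proposition~\ref{prop:spectral-properties}, and absorption of the $\sV^{2}\sG^{-2}$ term into the $\sV\sG^{-1}$ term. The only cosmetic difference is that you fix the constant $\tau/\tau_{0}\leq2$ where the paper keeps a generic $c$.
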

\begin{proof}
According to Proposition~\ref{prop:spectral-properties} we have
$\left|\mu-\mu_{0}\right|/\mu_{0}\to0$, and therefore there is a
$c$ such that
\[
c\geq\tau/\tau_{0}
\]
for $\sV\sG^{-1}$ small enough. Then with Theorems~\ref{thm:relax-bound-1}
and \ref{thm:relax-bound-2} we have
\begin{align*}
\Delta\tau_{0,\,\mathrm{rel}} & =\Delta\tau_{0}/\tau_{0}\\
 & \leq(\Delta\tau+\Delta\tau_{1})/\tau_{0}\\
 & \leq c(\Delta\tau+\Delta\tau_{1})/\tau\\
 & \leq c\left(\Delta\tau_{\mathrm{rel}}+\Delta\tau_{1,\,\mathrm{rel}}\right)\\
 & \leq k_{1}\sV\sG^{-1}\,.
\end{align*}
\end{proof}

\section{Bounding evolution error\label{sec:Bounding-evolution-error}}

In this chapter we bound the difference of time evolution operators
for $M$, $N$ and $N_{0}$. Our error bounds looks as follows
\[
\no{e^{Mt}-e^{Nt}}\leq e^{-\mu t/2}\cdot X
\]
where $X$ is proportional to $\sV^{2}\sG^{-2}$ up to a logarithmic
term, and proportional to $\sV\sG^{-1}$ if $N$ is replaced with
$N_{0}$. The logarithmic term appears due to intermediate times.
It seems the integral over time performed in the last chapter seems
to have conveniently guided us around that logarithm. As for the time
dependence, using a shifting integration contour might give a bound
like $e^{-\mu t}\mu t$, but a better control of the spectrum would
be necessary to shift the contour close to $-\mu$ for long times.

As in the last chapter in \ref{sub:rewrite-for-Resolvent-difference},
we write the evolution difference as a complex integral before we
prove bounds
\begin{equation}
\no{Te^{Mt}T^{\dagger}-e^{Nt}}=\left\Vert \frac{1}{2\pi i}\oint e^{zt}S(z)dt\right\Vert \,.\label{eq:delta-t-1}
\end{equation}
Note that all the results in this Section again require the conditions
in \ref{sub:Conditions}.

\subsection{Comparing the evolution of $M$ and $N$}

We will change the contour integration in (\ref{eq:delta-t-1}) to
be parallel to the imaginary axis $z=iy-\mu/2$ for $y\in\R$. With
this choice the exponential in the integral yields exponential decay
at rate $\mu/2$. Again we give the technical bounds on $S(iy-\mu/2)$
in Lemma~\ref{lem:decay-bounds-sliver} in Section~\ref{sec:Resolvent-difference-bounds}.
\begin{thm}
\label{thm:evolution-bound-1}If $\sV\sG^{-1}$ is small enough then
for all $t\geq0$ we have
\[
\no{Te^{Mt}T^{\dagger}-e^{Nt}}\leq e^{-\mu t}\cdot k_{4}\sV^{2}\sG^{-2}\left(1+k_{5}\ln\sV^{-1}\sG\right)
\]
where $k_{4}$ and $k_{5}$ are a scaling independent constants.\end{thm}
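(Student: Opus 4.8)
The plan is to realise $Te^{Mt}T^{\dagger}-e^{Nt}$ as the contour integral (\ref{eq:delta-t-1}) and to deform the contour onto a vertical line in the left half-plane so that the factor $e^{zt}$ produces the exponential decay. First I would observe that the difference operator annihilates the equilibrium direction: by Proposition~\ref{prop:spectral-properties} we have $a\vec{e}=N\vec{e}=0$, and since $a^{\dagger}$ maps into $I$, the Schur complement $a^{\dagger}(b-z)^{-1}a$ also respects the splitting $P=I\oplus\langle\vec{e}\rangle$ and vanishes on $\vec{e}$. Hence $S(z)\vec{e}=0$, the apparent pole of both resolvents at $z=0$ cancels, and we may restrict everything to $I$, where $N$ has spectral gap $\mu$. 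I would then move the contour to the line $z=iy-\mu/2$, $y\in\R$. This is legitimate because, by Lemma~\ref{lem:decay-bounds-sliver} together with Proposition~\ref{prop:spectral-properties} (which gives $N<-\mu/2$ and the analogous gap for the population part of $\spec M$), $S$ is holomorphic in the strip $\re z\geq-\mu/2$ once $\sV\sG^{-1}$ is small; for $t\geq0$ the integrand decays as $\re z\to-\infty$, so closing by a large left semicircle contributes nothing, exactly as in the proof of Theorem~\ref{thm:relax-bound-1}.

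Parametrising $z=iy-\mu/2$ and pulling the modulus of $e^{zt}=e^{-\mu t/2}e^{iyt}$ out of the integral gives
\[
\no{Te^{Mt}T^{\dagger}-e^{Nt}}\leq\frac{e^{-\mu t/2}}{2\pi}\int_{\R}\no{S(iy-\mu/2)}\dy\,,
\]
so the whole problem reduces to bounding the $t$-independent integral on the right by $\kappa\,(1+k_{5}\ln\sV^{-1}\sG)$. I would split $\R$ into three regimes set by the two scales $\mu\propto\sV^{2}\sG^{-1}$ and $\sG$, feeding in the decay estimates of Lemma~\ref{lem:decay-bounds-sliver}. In the inner region $|y|\lesssim\mu$ the resolvent difference is bounded by a constant of order $\kappa\mu^{-1}$, so integrating over a window of width $\mu$ contributes $O(\kappa)$. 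In the outer region $|y|\gtrsim\sG$ both resolvents in $S$ are close to $z^{-1}$, giving a bound of order $\kappa\sG|y|^{-2}=\mu|y|^{-2}$, which again integrates to $O(\kappa)$.

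The decisive middle region $\mu\lesssim|y|\lesssim\sG$ is where $\no{S(iy-\mu/2)}$ decays only like $\kappa|y|^{-1}$; this comes from the resolvent identity $S=(z-a^{\dagger}(b-z)^{-1}a)^{-1}[a^{\dagger}((b-z)^{-1}-b^{-1})a](z-N)^{-1}$, in which the inner difference carries one factor $|z|$ while each outer resolvent contributes $|y|^{-1}$. Integrating $\kappa|y|^{-1}$ over this window produces $\kappa\ln(\sG/\mu)$, and since $\sG/\mu\propto(\sV^{-1}\sG)^{2}$ we have $\ln(\sG/\mu)=2\ln\sV^{-1}\sG+O(1)$. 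Assembling the three contributions and using $\kappa\propto\sV^{2}\sG^{-2}$ yields
\[
\int_{\R}\no{S(iy-\mu/2)}\dy\leq 2\pi\,k_{4}\,\sV^{2}\sG^{-2}\bigl(1+k_{5}\ln\sV^{-1}\sG\bigr)
\]
for scaling-invariant $k_{4},k_{5}$, whence the claimed estimate with decay rate $\mu/2$ follows.

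I expect the real obstacle to be twofold. The technical part is establishing the hypotheses that make the deformation and the $y$-integral legitimate -- that no singularity of $S$ enters the strip $\re z\geq-\mu/2$ and that $\no{S(iy-\mu/2)}=O(|y|^{-2})$ uniformly at infinity -- which is precisely what is deferred to Lemma~\ref{lem:decay-bounds-sliver}; note also that the gap estimate of Proposition~\ref{prop:spectral-properties} forces the line to sit at $-\mu/2$ rather than $-\mu$, which is exactly why the decay rate is $\mu/2$. The more essential point is that the logarithm is intrinsic to this method: the $\kappa|y|^{-1}$ behaviour in the middle regime is integrated over a $\log$-window of width $\sim\ln\sV^{-1}\sG$, and -- unlike the relaxation-time bound of Theorem~\ref{thm:relax-bound-1}, where the extra factor $z^{-1}$ renders the same region harmless -- here there is nothing to absorb it. Removing the spurious $\log$ would require sliding the contour adaptively toward $-\mu$ for large $t$, which in turn demands finer control of $\spec M$ near the gap than is currently available.
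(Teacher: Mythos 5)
Your proposal is correct and follows essentially the same route as the paper: the contour is shifted to the vertical line $z=iy-\mu/2$ (justified by Lemma~\ref{lem:decay-bounds-sliver} and the absence of poles in $\re z\geq-\mu/2$), the factor $e^{-\mu t/2}$ is pulled out, and the remaining $y$-integral is split into the same three regions --- $|y|\lesssim\mu$, $\mu\lesssim|y|\lesssim\hat{\alpha}\propto\sG$, and $|y|\gtrsim\hat{\alpha}$ --- with the middle window's $\kappa|y|^{-1}$ decay producing the $\kappa\ln(\hat{\alpha}/\mu)\propto\sV^{2}\sG^{-2}\ln\sV^{-1}\sG$ term exactly as in the paper. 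Your identification of the logarithm as intrinsic to the fixed-contour method, and of the decay rate $\mu/2$ (rather than the $\mu$ appearing in the theorem's displayed statement, which is a typo relative to the proof and to the introduction), matches the paper's own discussion.
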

\begin{proof}
We set the integration contour in (\ref{eq:delta-t-1}) to parallel
to the complex axis $z=iy-\mu/2$ for $y\in\R$ with $y$ going from
$-R$ to $+R$. We close the contour to the left in the half plane
of negative real parts along a circle or radius $R$. According to
Lemmas~\ref{lem:decay-bounds-right-plane} and \ref{lem:decay-bounds-sliver},
$S(z)$ is bounded for $\re z\geq-\mu/2$ and hence has no poles.
Therefore all the poles lie within the contour for $R$ large enough
and $\sV\sG^{-1}$ small enough. As $R$ tends to infinity the integrand
behaves like $\frac{1}{z^{2}}e^{\re z}$ so the half-circle does not
contribute to the integral. We can therefore change to complex integral
to an integral in $y$ over all of $\R$
\[
\no{Te^{Mt}T^{\dagger}-e^{Nt}}=\left\Vert \frac{1}{2\pi}\int_{\R}e^{(iy-\mu/2)t}S(iy-\mu/2)\dy\right\Vert \,.
\]

Now split up the integral into three regions with $|y|$ in the intervals
$[0,\mu]$, $[\mu,\hat{\alpha}]$ and $[\hat{\alpha},+\infty)$ and
then use the bounds from Lemma~\ref{lem:decay-bounds-sliver}. Choose
$\sV\sG^{-1}$ small enough so that $\mu<\hat{\alpha}$ and use part
1 of the Lemma to bound
\begin{align*}
\left\Vert \int_{0}^{\mu/2}e^{(iy-\mu/2)t}S(iy-\mu/2)\dy\right\Vert  & \leq e^{-\mu t/2}\int_{0}^{\mu/2}16\kappa\mu^{-2}|iy-\mu/2|\dy\\
 & \leq e^{-\mu t/2}16\kappa
\end{align*}
and
\begin{align*}
\left\Vert \int_{\mu/2}^{\hat{\alpha}}e^{(iy-\mu/2)t}S(iy-\mu/2)\dy\right\Vert  & \leq e^{-\mu t/2}\int_{\mu/2}^{\hat{\alpha}}4\kappa\cdot|y|^{-2}|iy-\mu/2|\dy\\
 & \leq e^{-\mu t/2}\int_{\mu/2}^{\hat{\alpha}}4\kappa\cdot2y^{-1}\dy\\
 & =e^{-\mu t/2}8\kappa\ln(2\hat{\alpha}/\mu)\,,
\end{align*}
and we use part 2 of the Lemma to bound
\begin{align*}
\left\Vert \int_{\hat{\alpha}}^{\infty}e^{(iy-\mu/2)t}S(iy-\mu/2)\dy\right\Vert  & \leq e^{-\mu t/2}\int_{\hat{\alpha}}^{\infty}4|y|^{-2}\no a^{2}(b_{\min}-\mu/2)^{-1}\dy\\
 & \leq e^{-\mu t/2}4\hat{\alpha}^{-1}\no a^{2}(b_{\min}-\mu/2)^{-1}\,.
\end{align*}
Adding the three bounds gives the result
\begin{align*}
\no{Te^{Mt}T^{\dagger}-e^{Nt}} & \leq e^{-\mu t/2}4\left(4\kappa+2\kappa\ln(2\hat{\alpha}/\mu)+\hat{\alpha}^{-1}\no a^{2}(b_{\min}-\mu/2)^{-1}\right)\,.
\end{align*}
The middle term of the parenthesis has the worst scaling behavior
\[
2\kappa\ln(2\hat{\alpha}/\mu)\propto\sV^{2}\sG^{-2}\ln\sV^{-1}\sG
\]
while the other two terms scale like $\sV^{2}\sG^{-2}$. Therefore
there are some scaling independent constants $k_{4}$ and $k_{5}$
such that
\[
\no{Te^{Mt}T^{\dagger}-e^{Nt}}\leq e^{-\mu t/2}\cdot k_{4}\sV^{2}\sG^{-2}\left(1+k_{5}\ln\sV^{-1}\sG\right)\,.
\]

\end{proof}

\subsection{Comparing the evolution of $N$ and $N_{0}$}
\begin{thm}
\label{thm:evolution-bound-2}If $\sV\sG^{-1}$ is small enough then
for all $t\geq0$ we have
\[
\no{e^{Nt}-e^{N_{0}t}}\leq e^{-\mu t/2}\cdot k_{4}'\sV\sG^{-1}
\]
where $k_{4}'$ is scaling independent, and where $\mu$ and $\kappa$
can also be replaced by $\mu_{0}$ and $\kappa_{0}$.\end{thm}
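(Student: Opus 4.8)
The plan is to mirror the contour argument of Theorem~\ref{thm:evolution-bound-1}, but in the much simpler setting where both operators are honest $n\times n$ matrices and no Schur complement is involved. First I would note that on the equal-population direction $\vec e$ both evolutions act as the identity (since $N\vec e=N_0\vec e=0$), so the difference lives on $I$, where by Proposition~\ref{prop:spectral-properties} the matrices $N$ and $N_0$ are real, symmetric, and negative definite with the margins $N<-\mu/2$ and $N_0<-\mu/2$. I would then write the evolution difference as a contour integral of the resolvent difference,
\[
e^{Nt}-e^{N_0t}=\frac{1}{2\pi i}\oint e^{zt}\,R(z)\dz,\qquad R(z)=(z-N)^{-1}-(z-N_0)^{-1},
\]
and shift the contour to the vertical line $z=iy-\mu/2$. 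The exponential then contributes exactly the factor $e^{-\mu t/2}$. The contour shift and the closing half-circle are justified because, by Proposition~\ref{prop:spectral-properties}, every eigenvalue of $N$ and of $N_0$ has real part below $-\mu/2$, so $R$ is analytic on and to the right of the line, and (as shown next) $R(z)=O(z^{-2})$ at infinity, so the half-circle drops out.

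Next I would bound the integrand on the contour. Using the resolvent identity I rewrite
\[
R(z)=(z-N)^{-1}(N-N_0)(z-N_0)^{-1},
\]
so that $\no{R(z)}\le\no{N-N_0}\cdot\no{(z-N)^{-1}}\cdot\no{(z-N_0)^{-1}}$. For the middle factor I reuse the computation already carried out in Proposition~\ref{prop:spectral-properties} and Theorem~\ref{thm:relax-bound-2}, namely $\no{N-N_0}\le 2\kappa_0\no\nu$ via the inverse bound \eqref{eq:inverse-bound}; this scales like $\sV^{2}\sG^{-2}\cdot\sV$. For each resolvent I write $z-N=iy-(N+\mu/2)$, where $N+\mu/2$ is self-adjoint and negative definite, and apply the two inverse bounds \eqref{eq:inv-bound-const} and \eqref{eq:inv-bound-yinv} (together with the analogous margin for $N_0$ from Proposition~\ref{prop:spectral-properties}) to get $\no{(z-N)^{-1}},\no{(z-N_0)^{-1}}\le\min(C\mu^{-1},|y|^{-1})$ for a scaling-invariant constant $C$.

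It then remains to integrate. Splitting at $|y|=\mu$ and using $\int_{\R}\min(C\mu^{-1},|y|^{-1})^{2}\dy=O(\mu^{-1})$ gives
\[
\no{e^{Nt}-e^{N_0t}}\le e^{-\mu t/2}\cdot C'\,\frac{\no{N-N_0}}{\mu}\le e^{-\mu t/2}\cdot k_4'\,\sV\sG^{-1},
\]
since $\no{N-N_0}/\mu\propto(\sV^{2}\sG^{-2}\sV)/(\sV^{2}\sG^{-1})=\sV\sG^{-1}$. The interchangeability of $\mu,\kappa$ with $\mu_0,\kappa_0$ follows, as in Theorem~\ref{thm:relax-bound-2}, by swapping the roles of $b$ and $b_0$. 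I expect the main subtlety to be establishing the uniform margin $\mu/2$ for \emph{both} operators on the shifted contour, so that no pole is crossed and both resolvents are simultaneously controlled by $\mu^{-1}$ and by $|y|^{-1}$; this rests entirely on part~6 of Proposition~\ref{prop:spectral-properties} and on $\mu_0\to\mu$ relatively. The pleasant feature here, in contrast to Theorem~\ref{thm:evolution-bound-1}, is that the product of two decaying resolvents yields a genuinely integrable $|y|^{-2}$ tail, so no intermediate-scale region arises and the logarithmic factor is absent. As a shorter alternative one could instead use the Duhamel formula $e^{Nt}-e^{N_0t}=\int_0^t e^{N(t-s)}(N-N_0)e^{N_0 s}\dt$ with $\no{e^{Nu}}\le e^{-\mu u}$, but the contour approach is preferable since it delivers the sharp rate $e^{-\mu t/2}$ directly without losing a factor in the exponent.
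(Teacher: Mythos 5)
Your proposal is correct and follows essentially the same route as the paper: represent $e^{Nt}-e^{N_{0}t}$ as a contour integral of the resolvent difference along the shifted line $z=iy-\mu/2$ (justified by the spectral margins of Proposition~\ref{prop:spectral-properties}), bound that difference by a quantity of order $\left\Vert N-N_{0}\right\Vert \cdot\min(\mu^{-1},|y|^{-1})^{2}$, and split the $y$-integral at scale $\mu$ to obtain an integrable tail with no logarithmic factor. The only substantive difference is that you control the resolvent difference via the exact identity $(z-N)^{-1}(N-N_{0})(z-N_{0})^{-1}$, whereas the paper reaches the same estimate by writing $N_{0}=N+a^{\dagger}Xa$ with $X=(b-\nu)^{-1}-b^{-1}$ and applying the perturbative inverse bound (\ref{eq:inverse-bound}) twice; your variant is marginally cleaner because the identity requires no smallness hypothesis, at the mild cost of needing to bound both resolvents on the shifted contour rather than one, which part 6 of Proposition~\ref{prop:spectral-properties} supplies in any case.
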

\begin{proof}
We are bounding the integral
\[
\frac{1}{2\pi i}\oint e^{zt}\tilde{S}(z)dz
\]
with resolvent difference
\[
\tilde{S}(z)=\frac{1}{z-N}-\frac{1}{z-N_{0}}\,.
\]
We use the same contour as in Proposition~\ref{thm:evolution-bound-1},
$z=iy-\mu/2$. According to Proposition~\ref{prop:spectral-properties},
all poles of $\tilde{S}(z)$ lie within this contour when $\sV\sG^{-1}$
is small enough and $R$ is large enough. Because of the $e^{zt}$
factor and $T(z)$ tending to zero, the integral over the half-circle
tends to 0 as $R$ becomes large.

We bound $\tilde{S}(z)$ in much the same way that we bounded $S(z)$
in Lemma~\ref{lem:decay-bounds-sliver}, however, the procedure is
more straightforward. Set
\[
X=(b-\nu)^{-1}-b^{-1}\,,
\]
because $\left\Vert \nu\right\Vert \leq\frac{1}{2}\left\Vert b^{-1}\right\Vert ^{-1}$
we can use the inverse bound (\ref{eq:inverse-bound})
\[
\left\Vert X\right\Vert \leq2\left\Vert b^{-1}\right\Vert ^{2}\left\Vert \nu\right\Vert \,.
\]
Now rewrite
\[
\tilde{S}(z)=(z-a^{\dagger}b^{-1}a)^{-1}+(z-a^{\dagger}b^{-1}a-a^{\dagger}Xa)^{-1}\,.
\]
For any $z$ with $\re z=-\mu/2$ and for $\sV\sG^{-1}$ small enough
we have
\begin{align*}
\left\Vert a^{\dagger}Xa\right\Vert  & \leq2\left\Vert a\right\Vert ^{2}\left\Vert b^{-1}\right\Vert ^{2}\left\Vert \nu\right\Vert \\
 & \leq\frac{1}{2}\left\Vert z-a^{\dagger}b^{-1}a\right\Vert 
\end{align*}
and so we can apply (\ref{eq:inverse-bound}) again
\begin{align*}
\left\Vert \tilde{S}(z)\right\Vert  & \leq2\left\Vert (z-a^{\dagger}ba)^{-1}\right\Vert ^{2}\left\Vert a^{\dagger}Xa\right\Vert \\
 & \leq4\left\Vert (z-a^{\dagger}ba)^{-1}\right\Vert ^{2}\kappa\left\Vert \nu\right\Vert \,.
\end{align*}
Now we apply inverse bounds (\ref{eq:inv-bound-const}) and (\ref{eq:inv-bound-yinv})
to receive the bounds
\begin{align}
\left\Vert \tilde{S}(z)\right\Vert  & \leq16\mu^{-2}\kappa\left\Vert \nu\right\Vert \label{eq:tz-const-bound}\\
\left\Vert \tilde{S}(z)\right\Vert  & \leq4\left|z\right|^{-2}\kappa\left\Vert \nu\right\Vert \label{eq:tz-z-2-bound}
\end{align}
as long as $\re z=-\mu/2$.

To estimate the integral
\[
\frac{1}{2\pi}\int_{\R}e^{(iy-\mu/2)t}T(iy-\mu/2)\dy
\]
we split it into the two regions $\left|y\right|\leq\mu/2$ and $\left|y\right|>\mu/2$.
Use (\ref{eq:tz-const-bound}) to bound
\begin{align*}
\left\Vert \int_{-\mu/2}^{\mu/2}e^{(iy-\mu/2)t}\tilde{S}(iy-\mu/2)\dy\right\Vert  & \leq e^{-\mu t/2}\cdot\mu\cdot16\mu^{-2}\kappa\left\Vert \nu\right\Vert \,,
\end{align*}
and use (\ref{eq:tz-z-2-bound}) to bound
\begin{align*}
2\left\Vert \int_{\mu/2}^{\infty}e^{(iy-\mu/2)t}\tilde{S}(iy-\mu/2)\dy\right\Vert  & \leq2e^{-\mu t/2}\cdot8\mu^{-1}\kappa\left\Vert \nu\right\Vert \,.
\end{align*}
Adding the two bounds gives
\begin{align*}
\no{e^{Nt}-e^{N_{0}t}} & \leq32e^{-\mu t/2}\mu^{-1}\kappa\left\Vert \nu\right\Vert \\
 & \leq e^{-\mu t/2}\cdot k_{6}\sV\sG^{-1}\,.
\end{align*}
where $k_{6}$ is scaling independent. The whole proof works just
as well when exchanging $\mu$ with $\mu_{0}$, $\kappa$ with $\kappa_{0}$
giving a similar bound.
\end{proof}
As a corollary we receive a bound on the decay time difference between
the fully quantum mechanical evolution of $M$ and the simple kinetic
network evolution of $N_{0}$.
\begin{cor}
\label{cor:evolution-bound}If $\sV\sG^{-1}$ is small enough then
for all $t\geq0$ we have
\[
\no{Te^{Mt}T^{\dagger}-e^{N_{0}t}}\leq e^{-\mu t/2}\cdot k_{3}\sV\sG^{-1}
\]
where $k_{3}$ is a scaling independent constant.\end{cor}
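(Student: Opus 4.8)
The plan is to derive the $N_{0}$ bound from the two preceding theorems by inserting $e^{Nt}$ as an intermediate step and applying the triangle inequality. Concretely, I would write
\[
\no{Te^{Mt}T^{\dagger}-e^{N_{0}t}}\leq\no{Te^{Mt}T^{\dagger}-e^{Nt}}+\no{e^{Nt}-e^{N_{0}t}}
\]
and bound the first summand with Theorem~\ref{thm:evolution-bound-1} and the second with Theorem~\ref{thm:evolution-bound-2}. The only point requiring care is to keep the decay rate $\mu$ (rather than $\mu_{0}$) in the second bound, which is permitted by the remark at the end of that theorem. With this choice both summands carry the common factor $e^{-\mu t/2}$, matching the exponential in the claimed corollary, so no mismatch of decay rates arises.

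After this step the combined estimate reads
\[
\no{Te^{Mt}T^{\dagger}-e^{N_{0}t}}\leq e^{-\mu t/2}\left(k_{4}\sV^{2}\sG^{-2}(1+k_{5}\ln\sV^{-1}\sG)+k_{4}'\sV\sG^{-1}\right),
\]
and it remains to absorb the bracketed expression into a single term of the form $k_{3}\sV\sG^{-1}$. The genuinely leading contribution is the one from Theorem~\ref{thm:evolution-bound-2}, which is already $O(\sV\sG^{-1})$. The term $k_{4}\sV^{2}\sG^{-2}$ is dominated by $k_{4}\sV\sG^{-1}$ since $\sV\sG^{-1}\leq1$. The only nonobvious piece is the logarithmic one: rewriting it as $k_{4}k_{5}\sV\sG^{-1}\cdot\bigl(\sV\sG^{-1}\ln\sV^{-1}\sG\bigr)$, I would note that the factor in parentheses is $x\ln(1/x)$ with $x=\sV\sG^{-1}$, which tends to $0$ as $\sV\sG^{-1}\to0$ and hence is bounded by a constant for $\sV\sG^{-1}$ small enough. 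Thus this term is also $O(\sV\sG^{-1})$, and choosing $k_{3}$ to dominate $k_{4}+k_{4}'$ plus the bound on the logarithmic factor yields the claim.

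I do not expect any real obstacle here, as the corollary is essentially bookkeeping layered on top of the two evolution theorems. The conceptual content worth highlighting is simply that the error between $M$ and $N$ (order $\sV^{2}\sG^{-2}$, up to the log) is of strictly smaller order than the error between $N$ and $N_{0}$ (order $\sV\sG^{-1}$), so the coarser $N_{0}$ approximation is governed by the second comparison. This is exactly why the resulting rate is $\sV\sG^{-1}$ rather than $\sV^{2}\sG^{-2}$, in parallel with the relaxation-time estimate of Corollary~\ref{cor:relax-bound}.
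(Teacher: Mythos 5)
Your proposal is correct and follows essentially the same route as the paper: a triangle inequality through $e^{Nt}$, the two evolution theorems (using the $\mu$ rather than $\mu_0$ version of Theorem~\ref{thm:evolution-bound-2}), and the observation that $\sV^{2}\sG^{-2}\ln\sV^{-1}\sG\leq\sV\sG^{-1}$ for $\sV\sG^{-1}\leq1$. Your handling of the logarithmic term via the boundedness of $x\ln(1/x)$ is exactly the fact the paper invokes.
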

\begin{proof}
The bound follows from Theorems~\ref{thm:evolution-bound-1} and
\ref{thm:evolution-bound-2} and the fact that
\[
\sV^{2}\sG^{-2}\ln\sV^{-1}\sG\leq\sV\sG^{-1}
\]
for $\sV\sG^{-1}\leq1$.\end{proof}

\section{Applications}

The rate of direct population exchange 
\[
\mu_{kl}=\frac{2\left|V_{kl}\right|^{2}\gamma_{kl}}{\gamma_{kl}^{2}+E_{kl}^{2}}
\]
determines the strength of the link between sites $k$ and $l$ for
the network $N_{0}$. Because of our condition that $\gamma_{k}>0$,
the network topology is fully determined by the $V_{kl}$, but the
strength of the links is also affected by $\gamma_{k}$ and $E_{k}$.

As applications, we consider two idealized networks. The first is
a highly connected network where all sites are linked, the second
is a circular chain where where only nearest neighbors are linked.
We numerically calculate the relaxation times for the networks $M$,
$N_{0}$ and $N$ and compare the relative errors. Then we compare
these networks to randomized networks with the same network topology.
We also discuss the dimension dependence of our bounds from Sections~\ref{sec:Bounding-relaxation-time}
and again compare it to numerical simulations. All the simulations
agree with our bounds, but they show much room for improvement when
considering large dimensions.

Finally, we discuss the FMO-complex and our model for which some results
were already shown in the introduction in Figure~\ref{fig:FMO-efficiency}.

For clarity of notation we recall that $\Delta\tau$, $\Delta\tau_{0}$
and $\Delta\tau_{1}$ are relaxation time differences between the
network pairs $M-N$, $M-N_{0}$ and $N-N_{0}$ respectively. This
only makes the discussion more precise, while generally $\Delta\tau_{0}$
and $\Delta\tau_{1}$ show the same dimension and scaling behavior,
with small corrections to constants.

\subsection{Highly connected network}

Consider a highly connected network
\begin{align*}
V_{kl} & =\sV\\
E_{k} & =0\\
\gamma_{k} & =\sG\,.
\end{align*}
In Figure~\ref{fig:highly-connected} we made a plot for the computed
relative relaxation time differences $\Delta\tau_{\mathrm{rel}}$
and $\Delta\tau_{0,\,\mathrm{rel}}$ for different $\sV\sG^{-1}$
with the initial state localized at site 1. Both axes plot logarithms,
hence a straight line with slope $n$ represents a $(\sV\sG^{-1})^{n}$
proportionality. 

The difference $\Delta\tau_{\mathrm{rel}}$ is too small to show any
clear behavior. The difference $\Delta\tau_{0,\,\mathrm{rel}}$ is
linear with slope approximately $2$, hence the approximation is better
than the slope $1$ expected from Theorem~\ref{thm:relax-bound-2}.
In the same figure we compare our idealized network to random networks
where all $V_{kl}$ are chosen randomly between 0 and $\sV$ and all
$E_{k}$ are chosen randomly between 0 and $\sG$, hence they have
the same topology. The magnitudes of the errors are similar for the
range considered, but the slopes are different. All the samples show
an error slope of $1$ for $\Delta\tau_{0,\,\mathrm{rel}}$, while
the error slope for $\Delta\tau_{\mathrm{rel}}$ is varying, but in
most parts steeper than the slope of $\Delta\tau_{0,\,\mathrm{rel}}$.
This behavior is closer to the behavior expected from our bounds.
Generally, the agreement is about six orders of magnitude better for
the network $N$ than the network $N_{0}$.

For the ideal highly connected network we derive the quantities used
in Theorem~\ref{thm:relax-bound-1} and \ref{thm:relax-bound-2}
analytically in Appendix~\ref{app:Calculations-for-applications}.
The resulting bounds are
\begin{align*}
\Delta\tau_{\mathrm{rel}} & \leq c_{1}n\sV^{2}\sG^{-2}\\
\Delta\tau_{1,\,\mathrm{rel}} & \leq c_{2}n\sV\sG^{-1}
\end{align*}
for dimension and scaling independent constants $c_{1}$ and $c_{2}$.
The simulation of $M$ has a relatively high error and becomes slow
very fast as $n$ gets larger. Hence, we can only get meaningful results
for $\Delta\tau_{1,\,\mathrm{rel}}$, the relaxation time difference
of networks $N$ and $N_{0}$. The result in Figure~\ref{fig:high-dimensions}
actually shows that the difference increases with slope 2 or proportional
to $n^{2}$. The reason is that in Theorem~\ref{thm:relax-bound-2}
we have the condition$\left\Vert \nu\right\Vert \leq\frac{1}{2}\left\Vert b^{-1}\right\Vert ^{-1}$where
the LHS is proportional to $n$ and the RHS is constant (also discussed
in the Appendix). If we increase the dimension at constant scaling,
this condition and our bound break down. To still get a bound for
large $n$ we would need to readjust the scaling.

\begin{figure}
\begin{centering}
\begingroup
  \makeatletter
  \providecommand\color[2][]{    \GenericError{(gnuplot) \space\space\space\@spaces}{      Package color not loaded in conjunction with
      terminal option `colourtext'    }{See the gnuplot documentation for explanation.    }{Either use 'blacktext' in gnuplot or load the package
      color.sty in LaTeX.}    \renewcommand\color[2][]{}  }  \providecommand\includegraphics[2][]{    \GenericError{(gnuplot) \space\space\space\@spaces}{      Package graphicx or graphics not loaded    }{See the gnuplot documentation for explanation.    }{The gnuplot epslatex terminal needs graphicx.sty or graphics.sty.}    \renewcommand\includegraphics[2][]{}  }  \providecommand\rotatebox[2]{#2}  \@ifundefined{ifGPcolor}{    \newif\ifGPcolor
    \GPcolortrue
  }{}  \@ifundefined{ifGPblacktext}{    \newif\ifGPblacktext
    \GPblacktexttrue
  }{}    \let\gplgaddtomacro\g@addto@macro
    \gdef\gplbacktext{}  \gdef\gplfronttext{}  \makeatother
  \ifGPblacktext
        \def\colorrgb#1{}    \def\colorgray#1{}  \else
        \ifGPcolor
      \def\colorrgb#1{\color[rgb]{#1}}      \def\colorgray#1{\color[gray]{#1}}      \expandafter\def\csname LTw\endcsname{\color{white}}      \expandafter\def\csname LTb\endcsname{\color{black}}      \expandafter\def\csname LTa\endcsname{\color{black}}      \expandafter\def\csname LT0\endcsname{\color[rgb]{1,0,0}}      \expandafter\def\csname LT1\endcsname{\color[rgb]{0,1,0}}      \expandafter\def\csname LT2\endcsname{\color[rgb]{0,0,1}}      \expandafter\def\csname LT3\endcsname{\color[rgb]{1,0,1}}      \expandafter\def\csname LT4\endcsname{\color[rgb]{0,1,1}}      \expandafter\def\csname LT5\endcsname{\color[rgb]{1,1,0}}      \expandafter\def\csname LT6\endcsname{\color[rgb]{0,0,0}}      \expandafter\def\csname LT7\endcsname{\color[rgb]{1,0.3,0}}      \expandafter\def\csname LT8\endcsname{\color[rgb]{0.5,0.5,0.5}}    \else
            \def\colorrgb#1{\color{black}}      \def\colorgray#1{\color[gray]{#1}}      \expandafter\def\csname LTw\endcsname{\color{white}}      \expandafter\def\csname LTb\endcsname{\color{black}}      \expandafter\def\csname LTa\endcsname{\color{black}}      \expandafter\def\csname LT0\endcsname{\color{black}}      \expandafter\def\csname LT1\endcsname{\color{black}}      \expandafter\def\csname LT2\endcsname{\color{black}}      \expandafter\def\csname LT3\endcsname{\color{black}}      \expandafter\def\csname LT4\endcsname{\color{black}}      \expandafter\def\csname LT5\endcsname{\color{black}}      \expandafter\def\csname LT6\endcsname{\color{black}}      \expandafter\def\csname LT7\endcsname{\color{black}}      \expandafter\def\csname LT8\endcsname{\color{black}}    \fi
  \fi
  \setlength{\unitlength}{0.0500bp}  \begin{picture}(7200.00,5760.00)    \gplgaddtomacro\gplbacktext{    }    \gplgaddtomacro\gplfronttext{    }    \gplgaddtomacro\gplbacktext{      \csname LTb\endcsname      \put(588,785){\makebox(0,0)[r]{\strut{}-6}}      \put(588,1204){\makebox(0,0)[r]{\strut{}-4}}      \put(588,1623){\makebox(0,0)[r]{\strut{}-2}}      \put(588,2042){\makebox(0,0)[r]{\strut{}0}}      \put(588,2460){\makebox(0,0)[r]{\strut{}2}}      \put(588,2879){\makebox(0,0)[r]{\strut{}4}}      \colorrgb{0.00,0.00,0.00}      \put(720,356){\makebox(0,0){\strut{}-5}}      \colorrgb{0.00,0.00,0.00}      \put(1543,356){\makebox(0,0){\strut{}-4}}      \colorrgb{0.00,0.00,0.00}      \put(2366,356){\makebox(0,0){\strut{}-3}}      \colorrgb{0.00,0.00,0.00}      \put(3189,356){\makebox(0,0){\strut{}-2}}      \colorrgb{0.00,0.00,0.00}      \put(4011,356){\makebox(0,0){\strut{}-1}}      \colorrgb{0.00,0.00,0.00}      \put(4834,356){\makebox(0,0){\strut{}0}}      \colorrgb{0.00,0.00,0.00}      \put(5657,356){\makebox(0,0){\strut{}1}}      \colorrgb{0.00,0.00,0.00}      \put(6480,356){\makebox(0,0){\strut{}2}}      \colorrgb{0.00,0.00,0.00}      \put(82,1727){\rotatebox{90}{\makebox(0,0){\strut{}$\log_{10}\Delta\tau_\mathrm{rel}$}}}      \colorrgb{0.00,0.00,0.00}      \put(3600,26){\makebox(0,0){\strut{}$\log_{10}V\Gamma^{-1}$}}    }    \gplgaddtomacro\gplfronttext{      \csname LTb\endcsname      \put(2040,2706){\makebox(0,0)[r]{\strut{}$M$-$N$}}      \csname LTb\endcsname      \put(2040,2486){\makebox(0,0)[r]{\strut{}$M$-$N_0$}}    }    \gplgaddtomacro\gplbacktext{      \csname LTb\endcsname      \put(588,3089){\makebox(0,0)[r]{\strut{}-6}}      \put(588,3508){\makebox(0,0)[r]{\strut{}-4}}      \put(588,3927){\makebox(0,0)[r]{\strut{}-2}}      \put(588,4346){\makebox(0,0)[r]{\strut{}0}}      \put(588,4764){\makebox(0,0)[r]{\strut{}2}}      \put(588,5183){\makebox(0,0)[r]{\strut{}4}}      \colorrgb{0.00,0.00,0.00}      \put(720,2660){\makebox(0,0){\strut{}}}      \colorrgb{0.00,0.00,0.00}      \put(1543,2660){\makebox(0,0){\strut{}}}      \colorrgb{0.00,0.00,0.00}      \put(2366,2660){\makebox(0,0){\strut{}}}      \colorrgb{0.00,0.00,0.00}      \put(3189,2660){\makebox(0,0){\strut{}}}      \colorrgb{0.00,0.00,0.00}      \put(4011,2660){\makebox(0,0){\strut{}}}      \colorrgb{0.00,0.00,0.00}      \put(4834,2660){\makebox(0,0){\strut{}}}      \colorrgb{0.00,0.00,0.00}      \put(5657,2660){\makebox(0,0){\strut{}}}      \colorrgb{0.00,0.00,0.00}      \put(6480,2660){\makebox(0,0){\strut{}}}      \colorrgb{0.00,0.00,0.00}      \put(82,4031){\rotatebox{90}{\makebox(0,0){\strut{}$\log_{10}\Delta\tau_\mathrm{rel}$}}}    }    \gplgaddtomacro\gplfronttext{    }    \gplbacktext
    \put(0,0){\includegraphics{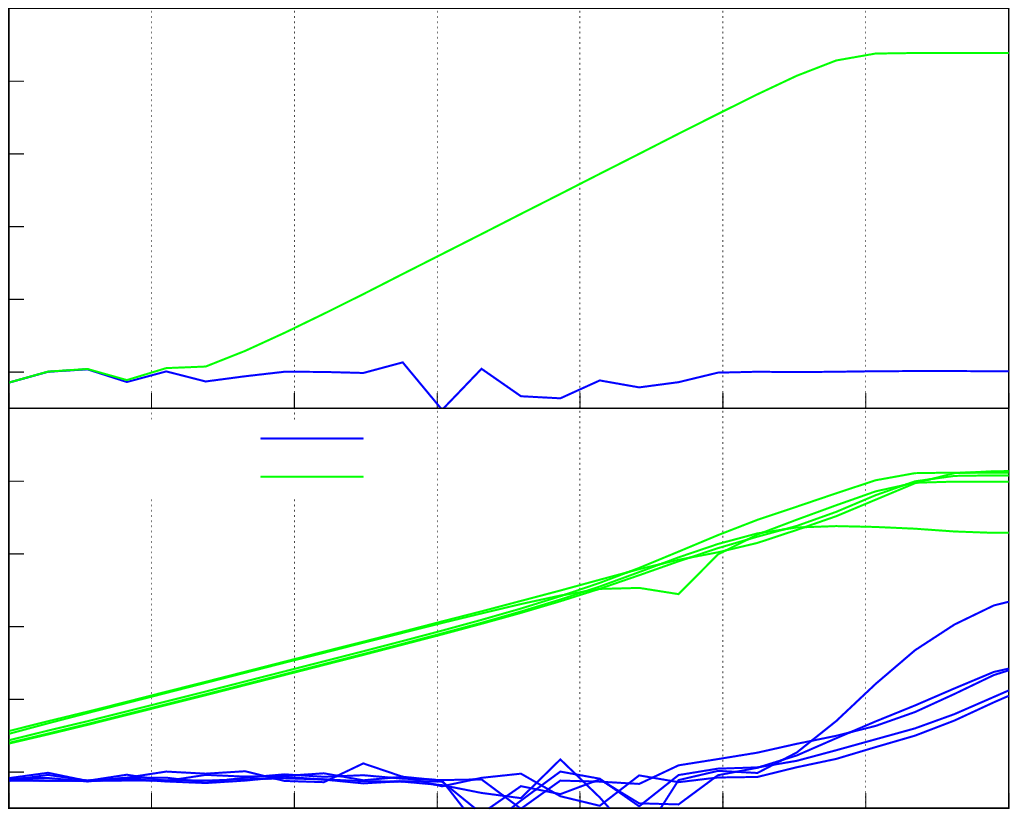}}    \gplfronttext
  \end{picture}\endgroup
\par\end{centering}

\caption{Relative error for the highly connected network\label{fig:highly-connected}}

\end{figure}

\begin{figure}
\begin{centering}
\begingroup
  \makeatletter
  \providecommand\color[2][]{    \GenericError{(gnuplot) \space\space\space\@spaces}{      Package color not loaded in conjunction with
      terminal option `colourtext'    }{See the gnuplot documentation for explanation.    }{Either use 'blacktext' in gnuplot or load the package
      color.sty in LaTeX.}    \renewcommand\color[2][]{}  }  \providecommand\includegraphics[2][]{    \GenericError{(gnuplot) \space\space\space\@spaces}{      Package graphicx or graphics not loaded    }{See the gnuplot documentation for explanation.    }{The gnuplot epslatex terminal needs graphicx.sty or graphics.sty.}    \renewcommand\includegraphics[2][]{}  }  \providecommand\rotatebox[2]{#2}  \@ifundefined{ifGPcolor}{    \newif\ifGPcolor
    \GPcolortrue
  }{}  \@ifundefined{ifGPblacktext}{    \newif\ifGPblacktext
    \GPblacktexttrue
  }{}    \let\gplgaddtomacro\g@addto@macro
    \gdef\gplbacktext{}  \gdef\gplfronttext{}  \makeatother
  \ifGPblacktext
        \def\colorrgb#1{}    \def\colorgray#1{}  \else
        \ifGPcolor
      \def\colorrgb#1{\color[rgb]{#1}}      \def\colorgray#1{\color[gray]{#1}}      \expandafter\def\csname LTw\endcsname{\color{white}}      \expandafter\def\csname LTb\endcsname{\color{black}}      \expandafter\def\csname LTa\endcsname{\color{black}}      \expandafter\def\csname LT0\endcsname{\color[rgb]{1,0,0}}      \expandafter\def\csname LT1\endcsname{\color[rgb]{0,1,0}}      \expandafter\def\csname LT2\endcsname{\color[rgb]{0,0,1}}      \expandafter\def\csname LT3\endcsname{\color[rgb]{1,0,1}}      \expandafter\def\csname LT4\endcsname{\color[rgb]{0,1,1}}      \expandafter\def\csname LT5\endcsname{\color[rgb]{1,1,0}}      \expandafter\def\csname LT6\endcsname{\color[rgb]{0,0,0}}      \expandafter\def\csname LT7\endcsname{\color[rgb]{1,0.3,0}}      \expandafter\def\csname LT8\endcsname{\color[rgb]{0.5,0.5,0.5}}    \else
            \def\colorrgb#1{\color{black}}      \def\colorgray#1{\color[gray]{#1}}      \expandafter\def\csname LTw\endcsname{\color{white}}      \expandafter\def\csname LTb\endcsname{\color{black}}      \expandafter\def\csname LTa\endcsname{\color{black}}      \expandafter\def\csname LT0\endcsname{\color{black}}      \expandafter\def\csname LT1\endcsname{\color{black}}      \expandafter\def\csname LT2\endcsname{\color{black}}      \expandafter\def\csname LT3\endcsname{\color{black}}      \expandafter\def\csname LT4\endcsname{\color{black}}      \expandafter\def\csname LT5\endcsname{\color{black}}      \expandafter\def\csname LT6\endcsname{\color{black}}      \expandafter\def\csname LT7\endcsname{\color{black}}      \expandafter\def\csname LT8\endcsname{\color{black}}    \fi
  \fi
  \setlength{\unitlength}{0.0500bp}  \begin{picture}(7200.00,5760.00)    \gplgaddtomacro\gplbacktext{      \colorrgb{0.00,0.00,0.00}      \put(803,634){\makebox(0,0)[r]{\strut{}-4.5}}      \colorrgb{0.00,0.00,0.00}      \put(803,1304){\makebox(0,0)[r]{\strut{}-4}}      \colorrgb{0.00,0.00,0.00}      \put(803,1975){\makebox(0,0)[r]{\strut{}-3.5}}      \colorrgb{0.00,0.00,0.00}      \put(803,2645){\makebox(0,0)[r]{\strut{}-3}}      \colorrgb{0.00,0.00,0.00}      \put(803,3316){\makebox(0,0)[r]{\strut{}-2.5}}      \colorrgb{0.00,0.00,0.00}      \put(803,3986){\makebox(0,0)[r]{\strut{}-2}}      \colorrgb{0.00,0.00,0.00}      \put(803,4657){\makebox(0,0)[r]{\strut{}-1.5}}      \colorrgb{0.00,0.00,0.00}      \put(803,5327){\makebox(0,0)[r]{\strut{}-1}}      \colorrgb{0.00,0.00,0.00}      \put(935,414){\makebox(0,0){\strut{}0.4}}      \colorrgb{0.00,0.00,0.00}      \put(1865,414){\makebox(0,0){\strut{}0.6}}      \colorrgb{0.00,0.00,0.00}      \put(2795,414){\makebox(0,0){\strut{}0.8}}      \colorrgb{0.00,0.00,0.00}      \put(3725,414){\makebox(0,0){\strut{}1}}      \colorrgb{0.00,0.00,0.00}      \put(4655,414){\makebox(0,0){\strut{}1.2}}      \colorrgb{0.00,0.00,0.00}      \put(5585,414){\makebox(0,0){\strut{}1.4}}      \colorrgb{0.00,0.00,0.00}      \put(6515,414){\makebox(0,0){\strut{}1.6}}      \colorrgb{0.00,0.00,0.00}      \put(33,2980){\rotatebox{90}{\makebox(0,0){\strut{}$\log_{10}\Delta\tau_{1,\,\mathrm{rel}}$}}}      \colorrgb{0.00,0.00,0.00}      \put(3725,84){\makebox(0,0){\strut{}$\log_{10}n$}}    }    \gplgaddtomacro\gplfronttext{      \csname LTb\endcsname      \put(1859,5154){\makebox(0,0)[r]{\strut{}highly}}      \csname LTb\endcsname      \put(1859,4934){\makebox(0,0)[r]{\strut{}chain}}    }    \gplbacktext
    \put(0,0){\includegraphics{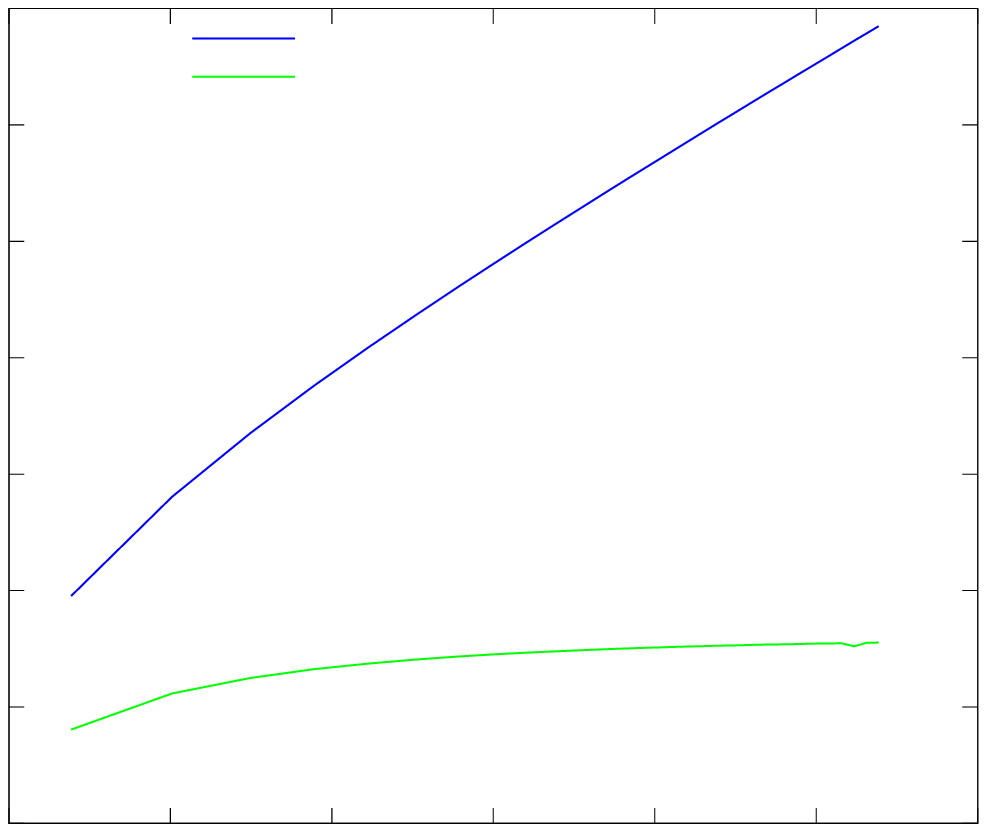}}    \gplfronttext
  \end{picture}\endgroup
\par\end{centering}

\caption{Relative errors between $N$ and $N_{0}$ for the highly connected
network and the cyclical chain with increasing dimension and $\sV=0.01$
and $\sG=1$.\label{fig:high-dimensions}}

\end{figure}

\subsection{Linear network}

Assume the sites are positioned on a circle and only nearest neighbors
interact with strength $\sV$
\[
V_{kl}=\begin{cases}
\sV & |k-l|=1\\
0 & \mathrm{else}
\end{cases}
\]
where we use the equivalence $n\equiv0$. Further $\gamma_{k}=\sG$
and $E_{k}$ such that $E_{kl}=\sG E$ when $|k-l|=1$ which is possible
for $n$ even.

In Figure~\ref{fig:chain} we made a plot of the computed relative
relaxation time differences $\Delta\tau_{\mathrm{rel}}$ and $\Delta\tau_{0,\,\mathrm{rel}}$
for different $\sV\sG^{-1}$ with the initial state localized at site
1. Interestingly the quality of approximation by $N_{0}$ is improved
over the highly connected model, while the quality of approximation
by $N$ has decreased. Also, both models show the same slope of about
$2$. We compare the ideal chain to random chains for which the $V_{kl}$
that equal $\sV$ in the idealized case are instead chosen randomly
between 0 and $\sV$, and all $E_{k}$ are chosen randomly between
0 and $\sG$. We get essentially the same behavior with all slopes
being 2. That hints at a possible improvement of our bound in Theorem~\ref{thm:relax-bound-2}
in the case where the network is a chain, improving the proportionality
from $\sV\sG^{-1}$ to $\sV^{2}\sG^{-2}$. Generally, the agreement
is about five orders of magnitude better for the network $N$ than
the network $N_{0}$.

As in the last section, we can derive the necessary quantities for
our bounds and get

\begin{align*}
\Delta\tau_{\mathrm{rel}} & \leq c_{3}\sV^{2}\sG^{-2}n^{2}\\
\Delta\tau_{1,\,\mathrm{rel}} & \leq c_{4}\sV\sG^{-1}n^{2}
\end{align*}
for dimension and scaling independent constants $c_{3}$ and $c_{4}$.
This time the condition$\left\Vert \nu\right\Vert \leq\frac{1}{2}\left\Vert b^{-1}\right\Vert ^{-1}$
does not break down and the bounds hold for large dimensions as well.
The $n^{2}$ terms are due to the lowest eigenvalue of $N_{0}$ being
proportional to $n^{-2}$. This is a weakness of our strategy to use
the operator norm for our bounds. Better bounds should be possible
when only considering localized exciton as initial state. This initial
state would the a superposition of all the eigenstates on $N_{0}$,
and the average relaxation time would enter the bounds, instead of
the longest relaxation time (the smallest eigenvalue of $N_{0}$).

As above we skip the simulation of $M$ because the error is too large,
and consider $\Delta\tau_{1,\,\mathrm{rel}}$ only. The result in
Figure~\ref{fig:high-dimensions} shows that the difference seems
to approximate a constant value for larger dimensions. So, both our
bounds could be improved for large dimensions.

\begin{figure}
\begin{centering}
\begingroup
  \makeatletter
  \providecommand\color[2][]{    \GenericError{(gnuplot) \space\space\space\@spaces}{      Package color not loaded in conjunction with
      terminal option `colourtext'    }{See the gnuplot documentation for explanation.    }{Either use 'blacktext' in gnuplot or load the package
      color.sty in LaTeX.}    \renewcommand\color[2][]{}  }  \providecommand\includegraphics[2][]{    \GenericError{(gnuplot) \space\space\space\@spaces}{      Package graphicx or graphics not loaded    }{See the gnuplot documentation for explanation.    }{The gnuplot epslatex terminal needs graphicx.sty or graphics.sty.}    \renewcommand\includegraphics[2][]{}  }  \providecommand\rotatebox[2]{#2}  \@ifundefined{ifGPcolor}{    \newif\ifGPcolor
    \GPcolortrue
  }{}  \@ifundefined{ifGPblacktext}{    \newif\ifGPblacktext
    \GPblacktexttrue
  }{}    \let\gplgaddtomacro\g@addto@macro
    \gdef\gplbacktext{}  \gdef\gplfronttext{}  \makeatother
  \ifGPblacktext
        \def\colorrgb#1{}    \def\colorgray#1{}  \else
        \ifGPcolor
      \def\colorrgb#1{\color[rgb]{#1}}      \def\colorgray#1{\color[gray]{#1}}      \expandafter\def\csname LTw\endcsname{\color{white}}      \expandafter\def\csname LTb\endcsname{\color{black}}      \expandafter\def\csname LTa\endcsname{\color{black}}      \expandafter\def\csname LT0\endcsname{\color[rgb]{1,0,0}}      \expandafter\def\csname LT1\endcsname{\color[rgb]{0,1,0}}      \expandafter\def\csname LT2\endcsname{\color[rgb]{0,0,1}}      \expandafter\def\csname LT3\endcsname{\color[rgb]{1,0,1}}      \expandafter\def\csname LT4\endcsname{\color[rgb]{0,1,1}}      \expandafter\def\csname LT5\endcsname{\color[rgb]{1,1,0}}      \expandafter\def\csname LT6\endcsname{\color[rgb]{0,0,0}}      \expandafter\def\csname LT7\endcsname{\color[rgb]{1,0.3,0}}      \expandafter\def\csname LT8\endcsname{\color[rgb]{0.5,0.5,0.5}}    \else
            \def\colorrgb#1{\color{black}}      \def\colorgray#1{\color[gray]{#1}}      \expandafter\def\csname LTw\endcsname{\color{white}}      \expandafter\def\csname LTb\endcsname{\color{black}}      \expandafter\def\csname LTa\endcsname{\color{black}}      \expandafter\def\csname LT0\endcsname{\color{black}}      \expandafter\def\csname LT1\endcsname{\color{black}}      \expandafter\def\csname LT2\endcsname{\color{black}}      \expandafter\def\csname LT3\endcsname{\color{black}}      \expandafter\def\csname LT4\endcsname{\color{black}}      \expandafter\def\csname LT5\endcsname{\color{black}}      \expandafter\def\csname LT6\endcsname{\color{black}}      \expandafter\def\csname LT7\endcsname{\color{black}}      \expandafter\def\csname LT8\endcsname{\color{black}}    \fi
  \fi
  \setlength{\unitlength}{0.0500bp}  \begin{picture}(7200.00,5760.00)    \gplgaddtomacro\gplbacktext{    }    \gplgaddtomacro\gplfronttext{    }    \gplgaddtomacro\gplbacktext{      \csname LTb\endcsname      \put(588,832){\makebox(0,0)[r]{\strut{}-6}}      \put(588,1344){\makebox(0,0)[r]{\strut{}-4}}      \put(588,1855){\makebox(0,0)[r]{\strut{}-2}}      \put(588,2367){\makebox(0,0)[r]{\strut{}0}}      \put(588,2879){\makebox(0,0)[r]{\strut{}2}}      \colorrgb{0.00,0.00,0.00}      \put(720,356){\makebox(0,0){\strut{}-4}}      \colorrgb{0.00,0.00,0.00}      \put(1680,356){\makebox(0,0){\strut{}-3}}      \colorrgb{0.00,0.00,0.00}      \put(2640,356){\makebox(0,0){\strut{}-2}}      \colorrgb{0.00,0.00,0.00}      \put(3600,356){\makebox(0,0){\strut{}-1}}      \colorrgb{0.00,0.00,0.00}      \put(4560,356){\makebox(0,0){\strut{}0}}      \colorrgb{0.00,0.00,0.00}      \put(5520,356){\makebox(0,0){\strut{}1}}      \colorrgb{0.00,0.00,0.00}      \put(6480,356){\makebox(0,0){\strut{}2}}      \colorrgb{0.00,0.00,0.00}      \put(82,1727){\rotatebox{90}{\makebox(0,0){\strut{}$\log_{10}\Delta\tau_\mathrm{rel}$}}}      \colorrgb{0.00,0.00,0.00}      \put(3600,26){\makebox(0,0){\strut{}$\log_{10}V\Gamma^{-1}$}}    }    \gplgaddtomacro\gplfronttext{      \csname LTb\endcsname      \put(2040,2706){\makebox(0,0)[r]{\strut{}$M$-$N$}}      \csname LTb\endcsname      \put(2040,2486){\makebox(0,0)[r]{\strut{}$M$-$N_0$}}    }    \gplgaddtomacro\gplbacktext{      \csname LTb\endcsname      \put(588,3136){\makebox(0,0)[r]{\strut{}-6}}      \put(588,3648){\makebox(0,0)[r]{\strut{}-4}}      \put(588,4159){\makebox(0,0)[r]{\strut{}-2}}      \put(588,4671){\makebox(0,0)[r]{\strut{}0}}      \put(588,5183){\makebox(0,0)[r]{\strut{}2}}      \colorrgb{0.00,0.00,0.00}      \put(720,2660){\makebox(0,0){\strut{}}}      \colorrgb{0.00,0.00,0.00}      \put(1680,2660){\makebox(0,0){\strut{}}}      \colorrgb{0.00,0.00,0.00}      \put(2640,2660){\makebox(0,0){\strut{}}}      \colorrgb{0.00,0.00,0.00}      \put(3600,2660){\makebox(0,0){\strut{}}}      \colorrgb{0.00,0.00,0.00}      \put(4560,2660){\makebox(0,0){\strut{}}}      \colorrgb{0.00,0.00,0.00}      \put(5520,2660){\makebox(0,0){\strut{}}}      \colorrgb{0.00,0.00,0.00}      \put(6480,2660){\makebox(0,0){\strut{}}}      \colorrgb{0.00,0.00,0.00}      \put(82,4031){\rotatebox{90}{\makebox(0,0){\strut{}$\log_{10}\Delta\tau_\mathrm{rel}$}}}    }    \gplgaddtomacro\gplfronttext{    }    \gplbacktext
    \put(0,0){\includegraphics{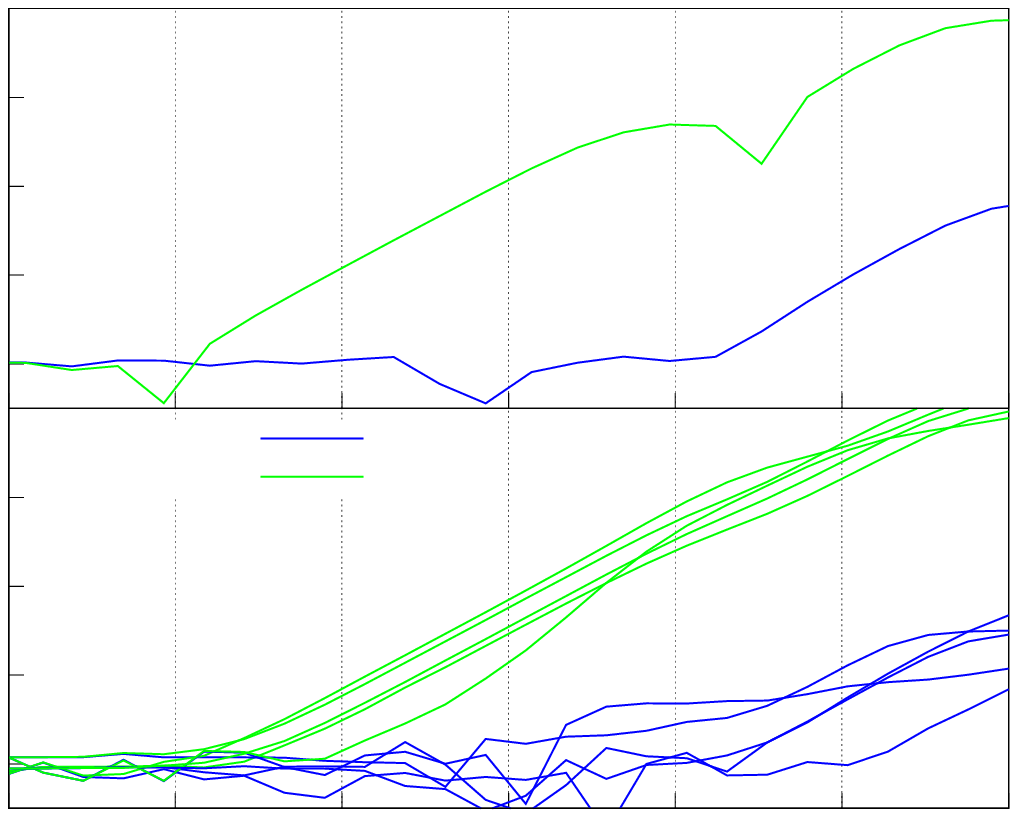}}    \gplfronttext
  \end{picture}\endgroup
\par\end{centering}

\caption{Relative error for the circular chain\label{fig:chain}}
\end{figure}

\subsection{The FMO-complex\label{sub:The-FMO-complex}}

The FMO-complex is pigment-protein with trimer structure. Each monomer
contains seven bacteriochlorophyll $a$ pigments that capture and
transport light. The excitons start out at site 1 or 6 and the trapping
occurs at site 3 \cite{adolphs_how_2006}, we set the initial state
to be
\[
\vec{p}_{0}=(1/2,0,0,0,0,1/2,0)^{\dagger}\,.
\]
We use the same numerical values as \cite{rebentrost_environment-assisted_2009},
with interactions and energies from \cite{cho_exciton_2005}. The
system Hamiltonian is
\[
H+V=\begin{pmatrix}280 & -106 & 8 & -5 & 6 & -8 & -4\\
-106 & 420 & 28 & 6 & 2 & 13 & 1\\
8 & 28 & 0 & -62 & -1 & -9 & 17\\
-5 & 6 & -62 & 175 & -70 & -19 & -57\\
6 & 2 & -1 & -70 & 320 & 40 & -2\\
-8 & 13 & -9 & -19 & 40 & 360 & 32\\
-4 & 1 & 17 & -57 & -2 & 32 & 260
\end{pmatrix}
\]
with all the numbers in $cm^{-1}$ (or $2.9978\cdot10^{10}s^{-1}$).
Exciton recombination at rate $\kappa=1ns^{-1}$ and reaction center
trapping at rate $\kappa_{3}=1ps^{-1}$ enter the anti-hermitian operator
\[
A=-\frac{i}{2}\left(\sum_{k}\kappa\ket k\bra k+\kappa_{3}\ket 3\bra 3\right)\,.
\]
We use the same dephasing rate for every site $\gamma_{k}=\gamma$,
and vary $\gamma$ from $10^{-3}$ to $10^{5}cm^{-1}$. Efficiency
is calculated as
\[
f=\kappa_{3}\int_{0}^{\infty}\rho_{33}(t)\dt
\]
we calculated $f$ for the three models in Figure~\ref{fig:FMO-efficiency}.
Peak efficiency is reached for $\gamma\approx170cm^{-1}$ close to
the average energy gap along the chain which is $146cm^{-1}$. The
approximation $N$ has less than $1\%$ error, even for the lowest
$\gamma$ used, and the approximation $N_{0}$ gets below $1\%$ error
for $\gamma\approx2cm^{-1}$. Comparing this to our bounds we have
\[
\left\Vert a\right\Vert =\left\Vert a^{\dagger}\right\Vert =215cm^{-1}
\]
and for large $\gamma$ 
\[
\left\Vert b^{-1}\right\Vert ^{-1}=\gamma\,.
\]
The numerical factor $\beta$ is changing because of the changing
ratio between energies and dephasing, for large $\gamma$ however
it is approximately equal to 100. Hence, our bound becomes
\[
\Delta\tau_{\mathrm{rel}}\lesssim100\left(215cm^{-1}\gamma^{-1}\right)^{2}\,.
\]
The $1\%$ error margin is reached only when $\gamma=21500cm^{-1}$,
so our numerical factors could certainly be much improved. But this
is not unexpected, since our main goal was to find the leading behavior
in $\sV\sG^{-1}$.

We give $N_{0}$ for maximal transfer efficiency
\[
N_{0}(\gamma=170cm^{-1})=\begin{pmatrix}-80 & 79 & 0 & 0 & 0 & 1 & 0\\
79 & -82 & 1 & 0 & 0 & 2 & 0\\
0 & 1 & -58 & 22 & 0 & 0 & 1\\
0 & 0 & 22 & -88 & 33 & 2 & 31\\
0 & 0 & 0 & 33 & -52 & 18 & 0\\
1 & 2 & 0 & 2 & 18 & -31 & 9\\
0 & 0 & 1 & 31 & 0 & 9 & -41
\end{pmatrix}\,.
\]
It is interesting that the rate between sites 2 and 3 is actually
smaller than the rate between sites 2 and 6 even though $\left|V_{23}\right|>\left|V_{26}\right|$.
The reason is the large energy gap between sites 2 and 3 of $420cm^{-1}$
while sites 2 and 6 have an energy gap of $60cm^{-1}$. However, the
values for site energies are still up to some debate \cite{adolphs_how_2006,cho_exciton_2005},
and small changes can easily turn this behavior to the opposite again.

\section{Resolvent difference bounds\label{sec:Resolvent-difference-bounds}}

The following three Lemmas are the main technical parts of our bounds.
They all consist of bounding the operator norm of the resolvent difference
\[
S(z)=\frac{1}{z-a^{\dagger}(b-z)^{-1}a}-\frac{1}{z-a^{\dagger}b^{-1}a}
\]
for different values of $z$. Conceptually the bounding procedure
is simple, we only employ the inverse bounds introduced in \ref{sub:Inverse-bounds}.
Loosely speaking, if $\left|z\right|<\sG$ we can expand $(b-z)^{-1}$
and then the two terms in $S(z)$ only have a small difference in
the denominator, so, using another inverse bound, they almost cancel.
If $\left|z\right|>\sG$ then $\left|z\right|\gg\left\Vert a^{\dagger}b^{-1}a\right\Vert $
and we can directly use the second step from the case $\left|z\right|<\Gamma$.

Of course we also have to keep in mind where the poles of $S(z)$
are. According to Proposition~\ref{prop:spectral-properties} $(z-N)^{-1}$
has poles on the real axis below $-\mu$ which move according to the
scaling $\sV^{2}\sG^{-1}$. On the other hand $(z-a^{\dagger}(b-z)^{-1}a)^{-1}$
has poles close to the poles of $(z-N)^{-1}$ that approximately cancel
each other, but it also has poles close to the eigenvalues of $b$
which are approximately $\alpha_{ij}=-\gamma_{ij}+iE_{ij}$ and $\bar{\alpha}_{ij}$,
scaling like $\sG$. Comparing the two sets of poles, the $b$-poles
are much further to the left (negative real values) than the $N$-poles
because $\sG\gg\sV^{2}\sG^{-1}$. Our lemma steer clear of this poles
by keeping $\re z\geq-\mu/2$.

Lemma~\ref{lem:decay-bounds-right-plane} contains bounds for $\re z\geq0$
which on the one hand ensures there are no poles on the right side
of the complex plane, and on the other hand we use the bounds for
$z=iy$ to bound the relaxation time. Lemma~\ref{lem:decay-bounds-sliver}
contains bounds for the region $-\mu/2\leq\re z\leq0$ the bounds
are derived in a similar fashion as in Lemma~\ref{lem:decay-bounds-right-plane},
but there are some additional complications.

\subsection{Bounds in the right half plane}
\begin{lem}
\label{lem:decay-bounds-right-plane}If $\sV\sG^{-1}$ is small enough
and $\re z\geq0$ then $S(z)$ is bounded by
\begin{enumerate}
\item $\left\Vert S(z)\right\Vert \leq4\kappa\mu^{-2}|z|$ if $|z|\leq\alpha$,
where $\alpha\propto\sG$ depends on $a$ and $b$,
\item $\left\Vert S(z)\right\Vert \leq4\beta\kappa|z|^{-1}$ for any $z$
with $\re z\geq0$, where $\beta$ is a scaling independent constant
depending on $a$ and $b$.
\end{enumerate}
\end{lem}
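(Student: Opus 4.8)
The plan is to factor the resolvent difference and estimate each factor separately. Writing $N(z)=a^{\dagger}(b-z)^{-1}a$, the first term of $S(z)$ is $(z-N(z))^{-1}$ and the second is $(z-N)^{-1}$ with $N=N(0)=a^{\dagger}b^{-1}a$. Using the identity $P^{-1}-Q^{-1}=P^{-1}(Q-P)Q^{-1}$ I would write
\[
S(z)=(z-N(z))^{-1}\bigl(N(z)-N\bigr)(z-N)^{-1},
\]
which reduces the whole Lemma to estimating the three factors $\no{(z-N(z))^{-1}}$, $\no{N(z)-N}$ and $\no{(z-N)^{-1}}$.

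The right factor is immediate: $N$ is real symmetric and, by Proposition~\ref{prop:spectral-properties}, $N\leq-\mu<0$ on $I$, so the inverse bounds (\ref{eq:inv-bound-const}) and (\ref{eq:inv-bound-yinv}) give $\no{(z-N)^{-1}}\leq\mu^{-1}$ and $\no{(z-N)^{-1}}\leq|z|^{-1}$ for $\re z\geq0$; in fact $\no{(z-N)^{-1}}^{-1}\geq\max(|z|,\mu)$. The key structural input I need next is a resolvent bound on $b$ valid on the entire right half-plane. Since the $V_{kl}$ are real, $b$ is real and $\tfrac12(b+b^{\top})=\tfrac12(b_0+b_0^{\top})+\tfrac12(\nu+\nu^{\top})\leq-\gamma_{\min}\one+\no{\nu}\one$, and because $\gamma_{\min}\propto\sG$ while $\no{\nu}\propto\sV$, for $\sV\sG^{-1}$ small this is $\leq-c\,\one$ with $c\propto\sG$. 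The numerical-range estimate $\re\langle v,(z-b)v\rangle\geq c\,\no{v}_2^2$ then yields $\no{(b-z)^{-1}}\leq c^{-1}$ for every $z$ with $\re z\geq0$, and in particular $\no{b^{-1}}\leq c^{-1}$. This dissipativity of $b$, resting on the hypothesis $\gamma_k>0$, is what lets me control $N(z)$ uniformly, and I expect it to be the main thing to get right.

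With this available I treat the middle factor in two regimes. For $|z|\leq\tfrac12\no{b^{-1}}^{-1}$ (a region of size $\propto\sG$) I expand, applying (\ref{eq:inverse-bound}) to $(b-z)^{-1}-b^{-1}=b^{-1}\bigl((\one-zb^{-1})^{-1}-\one\bigr)$, to get $\no{(b-z)^{-1}-b^{-1}}\leq2|z|\no{b^{-1}}^2$ and hence $\no{N(z)-N}\leq2\kappa|z|$; for the remaining $z$ I simply use $\no{N(z)-N}\leq\no{N(z)}+\no N\leq2\no{a}^2c^{-1}$. To handle $(z-N(z))^{-1}$ I perturb off $(z-N)^{-1}$: whenever $\no{N(z)-N}\leq\tfrac12\no{(z-N)^{-1}}^{-1}=\tfrac12\max(|z|,\mu)$, the inverse bound (\ref{eq:inverse-bound}) gives $\no{(z-N(z))^{-1}}\leq2\no{(z-N)^{-1}}$. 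Because $\kappa\propto\sV^2\sG^{-2}$ and $\no{a}^2c^{-1}\propto\sV^2\sG^{-1}$ are both small, this smallness condition holds for $\sV\sG^{-1}$ small on each regime, yielding $\no{(z-N(z))^{-1}}\leq2\mu^{-1}$ in the small-$|z|$ range (where I use $\no{(z-N)^{-1}}\leq\mu^{-1}$) and $\no{(z-N(z))^{-1}}\leq2|z|^{-1}$ otherwise.

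For part 1 I restrict to $|z|\leq\alpha:=\min\bigl(\tfrac12\no{b^{-1}}^{-1},\mu/(4\kappa)\bigr)$, which is $\propto\sG$, and multiply $\no{(z-N(z))^{-1}}\leq2\mu^{-1}$, $\no{N(z)-N}\leq2\kappa|z|$ and $\no{(z-N)^{-1}}\leq\mu^{-1}$ to obtain $\no{S(z)}\leq4\kappa\mu^{-2}|z|$. For part 2 I combine $\no{(z-N(z))^{-1}}\leq2|z|^{-1}$ with $\no{(z-N)^{-1}}\leq|z|^{-1}$: for $|z|\leq\tfrac12\no{b^{-1}}^{-1}$ this gives $\no{S(z)}\leq4\kappa|z|^{-1}$, while for $|z|\geq\tfrac12\no{b^{-1}}^{-1}$ the bounded middle factor gives $\no{S(z)}\leq4\no{a}^2c^{-1}|z|^{-2}$, which on this range I convert into the stated form $4\beta\kappa|z|^{-1}$ by taking $\beta=\max\bigl(1,2/(c\no{b^{-1}})\bigr)$ --- a scaling-invariant constant, since $c\no{b^{-1}}$ is. Finally, the simultaneous finiteness of all three factors for $\re z\geq0$ shows $S(z)$ has no poles there, which is what the applications of this Lemma invoke. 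The main obstacle is precisely the uniform control of the perturbed Schur complement $(z-N(z))^{-1}$ across the whole right half-plane; once the dissipativity bound on $(b-z)^{-1}$ is in place, the rest is bookkeeping with the two inverse bounds.
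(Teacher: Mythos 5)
Your proof is correct, and at the structural level it is the same as the paper's: the same split of the right half-plane at $\alpha=\min\bigl(\tfrac12\no{b^{-1}}^{-1},\tfrac14\kappa^{-1}\mu\bigr)$, the same use of the inverse bound (\ref{eq:inverse-bound}) on $(b-z)^{-1}-b^{-1}$ to get $\no{a^{\dagger}Xa}\leq2\kappa|z|$, and the same pair of spectral bounds (\ref{eq:inv-bound-const}), (\ref{eq:inv-bound-yinv}) on $(z-N)^{-1}$, yielding identical constants. Two of your choices differ in a way worth noting. First, you factor $S(z)$ with the exact second resolvent identity and bound three factors, whereas the paper applies its perturbation estimate (\ref{eq:inverse-bound}) directly to the difference of inverses with $a^{\dagger}Xa$ as the perturbation; these are interchangeable and cost the same. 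Second, and more substantively, for $|z|$ large you control $\no{(b-z)^{-1}}$ uniformly on $\re z\geq0$ via the numerical range, i.e.\ by showing the symmetric part of $b$ satisfies $\tfrac12(b+b^{\top})\leq-(\gamma_{\min}-\no{\nu})\one$, so that $\no{(b-z)^{-1}}\leq c^{-1}$ with $c\propto\sG$. The paper instead bounds $\no{(b-z)^{-1}}$ by $b_{\min}^{-1}$ where $b_{\min}$ is the distance of $\spec b$ to the imaginary axis; since $b$ is not normal, that step as written is only an approximation (justified there by $b$ being a small perturbation of the normal $b_{0}$). Your dissipativity argument closes that gap cleanly and rests directly on the hypothesis $\gamma_{k}>0$, which is exactly where that hypothesis should enter; the price is only a slightly different (still scaling-invariant) constant in $\beta$.
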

\begin{proof}
1. Assume $\re z\geq0$ and $|z|\leq\alpha\propto\sG$, where
\begin{equation}
\alpha=\min\left\{ \frac{1}{2}\left\Vert b^{-1}\right\Vert ^{-1},\,\frac{1}{4}\kappa^{-1}\mu\right\} \,.\label{eq:alpha-def}
\end{equation}
Set
\[
X=(b-z)^{-1}-b^{-1}
\]
and because $|z|\leq\frac{1}{2}\left\Vert b^{-1}\right\Vert ^{-1}$
we can use (\ref{eq:inverse-bound}) and have
\[
\left\Vert X\right\Vert \leq2\left\Vert b^{-1}\right\Vert ^{2}|z|\,.
\]
Rewrite
\[
S(z)=(z-a^{\dagger}b^{-1}a-a^{\dagger}Xa)^{-1}-(z-a^{\dagger}b^{-1}a)^{-1}\,.
\]
To use (\ref{eq:inverse-bound}) on this expression notice that
\[
|z|\leq\frac{1}{4}\kappa^{-1}\mu
\]
and therefore
\begin{align*}
\left\Vert a^{\dagger}Xa\right\Vert  & \leq2\kappa|z|\\
 & \leq\frac{1}{2}\mu\\
 & \leq\frac{1}{2}\left\Vert (z-a^{\dagger}b^{-1}a)^{-1}\right\Vert ^{-1}
\end{align*}
where (\ref{eq:inv-bound-const}) was applied in the last step, using
the fact that $a^{\dagger}b^{-1}a$ is self-adjoint from Proposition~\ref{prop:spectral-properties}.
This is just the condition for the bound
\begin{align*}
\left\Vert S(z)\right\Vert  & \leq2\left\Vert (z-a^{\dagger}b^{-1}a)^{-1}\right\Vert ^{2}\no{a^{\dagger}Xa}\\
 & \leq4\kappa\left\Vert (z-a^{\dagger}b^{-1}a)^{-1}\right\Vert ^{2}|z|
\end{align*}
again using (\ref{eq:inv-bound-const}) and also (\ref{eq:inv-bound-yinv})
we get the bounds
\begin{align}
\left\Vert S(z)\right\Vert  & \leq4\kappa\mu^{-2}|z|\nonumber \\
\left\Vert S(z)\right\Vert  & \leq4\kappa|z|^{-1}\label{eq:y-1-bound-1}
\end{align}
for $|z|\leq\alpha$. The first bound is bound 1 of the Lemma, the
second bound will be used below.\medskip{}

2. We now derive a bound when $|z|\geq\alpha$ and $\re z\geq0$,
we will combine it with (\ref{eq:y-1-bound-1}) to receive bound 2
for all $z\in\R$. If $\sV\sG^{-1}$ is small enough then we have
\begin{align*}
\left\Vert a^{\dagger}b^{-1}a\right\Vert  & \leq\frac{\alpha}{2}\leq\frac{1}{2}|z|\\
\left\Vert a^{\dagger}(b-z)^{-1}a\right\Vert  & \leq\frac{\alpha}{2}\leq\frac{1}{2}|z|\,.
\end{align*}

Where the latter inequality uses the fact that the spectrum of $b$
approaches the spectrum of $b_{0}$ as $\sV\sG^{-1}$ becomes small,
and the spectrum of $b_{0}$, which is $-\gamma_{ij}\pm iE_{ij}$,
has negative real part $-\gamma_{ij}<0$. The last two inequalities
are the conditions to use (\ref{eq:inverse-bound}) and get the two
bounds
\begin{align*}
\left\Vert (z-a^{\dagger}(b-z)^{-1}a)^{-1}-z^{-1}\right\Vert  & \leq2|z|^{-2}\left\Vert a^{\dagger}(b-z)^{-1}a\right\Vert \\
\left\Vert (z-a^{\dagger}b^{-1}a)^{-1}-z^{-1}\right\Vert  & \leq2|z|^{-2}\left\Vert a^{\dagger}b^{-1}a\right\Vert 
\end{align*}
set
\begin{equation}
b_{\min}=\min\,\left\{ \left|\re\lambda\right|\middle|\lambda\in\spec b\right\} \propto\sG\label{eq:b-min}
\end{equation}
the closest any eigenvalue of $b$ gets to the imaginary axis. Then
$\left\Vert b^{-1}\right\Vert \leq b_{\min}^{-1}$ and $\left\Vert (b-z)^{-1}\right\Vert \leq b_{\min}^{-1}$
so 
\begin{align*}
\left\Vert S(z)\right\Vert  & \leq4|z|^{-2}\left\Vert a\right\Vert ^{2}d^{-1}\,.
\end{align*}

Comparing to (\ref{eq:y-1-bound-1}) with
\[
\beta=\max\left\{ 1,\,1/\left(\alpha b_{\min}\left\Vert b^{-1}\right\Vert ^{2}\right)\right\} \propto1
\]
we have
\[
4|z|^{-2}\left\Vert a\right\Vert ^{2}b_{\min}^{-1}\leq\beta\cdot4\kappa|z|^{-1}
\]
for $|z|\geq\alpha$ and therefore
\begin{equation}
\left\Vert S(z)\right\Vert \leq4\beta\kappa|z|^{-1}\label{eq:better-y-1-bound-1}
\end{equation}
for all $z$ with $\re z\geq0$. Which is bound 2 of the Lemma.
\end{proof}

\subsection{Bounds parallel to the imaginary axis}

The following Lemma establishes bounds along the imaginary axis $z=iy-\tilde{\mu}$.
These bounds are used to prove the evolution bounds.
\begin{lem}
\label{lem:decay-bounds-sliver}If we choose $\sV\sG^{-1}$ small
enough then for $0\leq\tilde{\mu}\leq\mu/2$ the resolvent difference
$S(iy-\tilde{\mu})$ is bounded by
\begin{enumerate}
\item $\left\Vert S(iy-\tilde{\mu})\right\Vert \leq16\kappa\mu^{-2}|iy-\tilde{\mu}|$
and $\left\Vert S(iy-\tilde{\mu})\right\Vert \leq4\kappa|y|^{-2}|iy-\tilde{\mu}|$
if $|y|\leq\hat{\alpha}$, where $\hat{\alpha}\propto\sG$ depends
on $a$ and $b$,
\item $\left\Vert S(y)\right\Vert \leq4|y|^{-2}\left\Vert a\right\Vert ^{2}(b_{\min}-\tilde{\mu})^{-1}$
for $|y|>\hat{\alpha}$ with $b_{\min}\propto\sG$.
\end{enumerate}
\end{lem}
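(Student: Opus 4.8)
The plan is to mirror the proof of Lemma~\ref{lem:decay-bounds-right-plane} step by step, the only genuine change being that the two inverse bounds (\ref{eq:inv-bound-const}) and (\ref{eq:inv-bound-yinv})—which require $\re z\geq0$—are no longer available on the shifted line $\re z=-\tilde\mu$ and must be replaced by direct spectral estimates. The key observation is that, by Proposition~\ref{prop:spectral-properties}, $N=a^\dagger b^{-1}a$ is real, symmetric and satisfies $N\leq-\mu$ on $I$, so $\spec N\subseteq(-\infty,-\mu]$. For $z=iy-\tilde\mu$ with $0\leq\tilde\mu\leq\mu/2$ and any eigenvalue $\lambda\leq-\mu$ of $N$ we have $|z-\lambda|\geq|\re(z-\lambda)|=-\lambda-\tilde\mu\geq\mu/2$ and also $|z-\lambda|\geq|\im(z-\lambda)|=|y|$, hence
\[
\no{(z-N)^{-1}}\leq2\mu^{-1}\qquad\text{and}\qquad\no{(z-N)^{-1}}\leq|y|^{-1}.
\]
These take over the roles of (\ref{eq:inv-bound-const}) and (\ref{eq:inv-bound-yinv}); the factor $2$ in the first estimate is precisely what produces the constant $16$ (rather than $4$) in part~1.

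For part~1 I would fix $\hat\alpha\propto\sG$ as the analogue of $\alpha$ in (\ref{eq:alpha-def}), small enough that $|y|\leq\hat\alpha$ forces $|z|=|iy-\tilde\mu|$ below both $\tfrac12\no{b^{-1}}^{-1}$ and a suitable multiple of $\kappa^{-1}\mu$; this is possible for $\sV\sG^{-1}$ small since $\tilde\mu\leq\mu/2\ll\kappa^{-1}\mu,\,\no{b^{-1}}^{-1}$. Setting $X=(b-z)^{-1}-b^{-1}$ and invoking (\ref{eq:inverse-bound}) with $A=b$, $B=-z$ gives $\no X\leq2\no{b^{-1}}^2|z|$, hence $\no{a^\dagger Xa}\leq2\kappa|z|$. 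Rewriting $S(z)=(z-N-a^\dagger Xa)^{-1}-(z-N)^{-1}$ and applying (\ref{eq:inverse-bound}) once more—the smallness condition $\no{a^\dagger Xa}\leq\tfrac12\no{(z-N)^{-1}}^{-1}$ being guaranteed by the choice of $\hat\alpha$ together with $\no{(z-N)^{-1}}^{-1}\geq\mu/2$—yields $\no{S(z)}\leq4\kappa\no{(z-N)^{-1}}^2|z|$. Feeding in the two resolvent estimates above gives the two bounds of part~1.

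For part~2, with $|y|>\hat\alpha$ so that $|z|\geq|y|\gtrsim\sG$, I would compare each resolvent in $S(z)$ to the scalar $z^{-1}$, exactly as in Lemma~\ref{lem:decay-bounds-right-plane}. Both $\no{a^\dagger b^{-1}a}$ and $\no{a^\dagger(b-z)^{-1}a}$ are $O(\sV^2\sG^{-1})$ and hence $\leq\tfrac12|z|$ for $\sV\sG^{-1}$ small, so (\ref{eq:inverse-bound}) with $A=z$ gives $\no{(z-a^\dagger(b-z)^{-1}a)^{-1}-z^{-1}}\leq2|z|^{-2}\no{a^\dagger(b-z)^{-1}a}$ and the analogous bound with $b$ in place of $b-z$. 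The $z^{-1}$ terms cancel in the difference, and using $\no{a^\dagger(b-z)^{-1}a}\leq\no a^2(b_{\min}-\tilde\mu)^{-1}$ together with $|z|\geq|y|$ delivers $\no{S(z)}\leq4|y|^{-2}\no a^2(b_{\min}-\tilde\mu)^{-1}$.

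The main obstacle is the resolvent estimate $\no{(b-z)^{-1}}\leq(b_{\min}-\tilde\mu)^{-1}$ used in part~2: because $b=b_0+\nu$ is not normal, a reciprocal-distance-to-spectrum bound is not automatic. I would justify it as in Lemma~\ref{lem:decay-bounds-right-plane}, passing to the unitarily equivalent $\tilde b=\tilde b_0+\tilde\nu$ with $\tilde b_0$ diagonal. Since $\spec\tilde b_0=\{-\gamma_{kl}\pm iE_{kl}\}$ (see (\ref{eq:b0-twidel})) has real parts $\leq-b_{\min}$ while $\re z=-\tilde\mu$ with $\tilde\mu\leq\mu/2\ll b_{\min}$, one has $\no{(\tilde b_0-z)^{-1}}\leq(b_{\min}-\tilde\mu)^{-1}$, and the perturbation $\tilde\nu\propto\sV$ costs only a factor $1+o(1)$ that is absorbed for $\sV\sG^{-1}$ small. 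Beyond this point every step is a routine transcription of the right-half-plane lemma under the shift $z\mapsto z-\tilde\mu$.
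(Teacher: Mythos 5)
Your proposal is correct and follows essentially the same route as the paper's proof: the same two-step application of the inverse bound (\ref{eq:inverse-bound}) with $X=(b-z)^{-1}-b^{-1}$ in the region $|y|\leq\hat{\alpha}$, and the same comparison of both resolvents to the scalar $(iy-\tilde{\mu})^{-1}$ in the region $|y|>\hat{\alpha}$; your direct spectral estimates $\no{(z-N)^{-1}}\leq2\mu^{-1}$ and $\no{(z-N)^{-1}}\leq|y|^{-1}$ for the symmetric $N$ are precisely how the paper's use of (\ref{eq:inv-bound-const}) and (\ref{eq:inv-bound-yinv}) on the shifted operator $a^{\dagger}b^{-1}a+\tilde{\mu}\leq-\mu/2$ produces the constant $16$. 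The one (welcome) refinement is that you explicitly justify $\no{(b-z)^{-1}}\leq(b_{\min}-\tilde{\mu})^{-1}$ for the non-normal $b$ by passing to $\tilde{b}_{0}+\tilde{\nu}$ and absorbing the $O(\sV)$ perturbation, a step the paper asserts without comment.
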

\begin{proof}
We proceed almost identically as in the proof of Lemma~\ref{lem:decay-bounds-right-plane}
using the inverse bounds \ref{eq:inverse-bound}, (\ref{eq:inv-bound-const})
and (\ref{eq:inv-bound-yinv}) for the same parts of the resolvent
terms. \medskip{}

1. We use the $\alpha$ from (\ref{eq:alpha-def}) to define
\[
\hat{\alpha}=\min\left\{ \frac{1}{2}\left\Vert b^{-1}\right\Vert ^{-1},\,\frac{1}{8}\kappa^{-1}\mu\right\} -\mu
\]
notice that the scaling $\hat{\alpha}\propto\sG$ is only approximate
and that $\sV\sG^{-1}$ needs to be small enough such that $\hat{\alpha}>0$.
Now require $|y|\leq\hat{\alpha}\propto\sG$. Set
\[
X=(b-iy+\tilde{\mu})^{-1}-b^{-1}
\]
and because we have 
\begin{align*}
|iy-\tilde{\mu}| & \leq|y|+\tilde{\mu}\\
 & \leq\frac{1}{2}\left\Vert b^{-1}\right\Vert ^{-1}-\mu+\tilde{\mu}\\
 & \leq\frac{1}{2}\left\Vert b^{-1}\right\Vert ^{-1}
\end{align*}
we can use (\ref{eq:inverse-bound}) to get the bound
\[
\left\Vert X\right\Vert \leq2\left\Vert b^{-1}\right\Vert ^{2}|iy-\tilde{\mu}|\,.
\]
Rewrite
\[
S(iy-\tilde{\mu})=(iy-\tilde{\mu}-a^{\dagger}b^{-1}a-a^{\dagger}Xa)^{-1}-(iy-\tilde{\mu}-a^{\dagger}b^{-1}a)^{-1}\,.
\]
To use (\ref{eq:inverse-bound}) on this expression notice that we
have
\begin{align*}
|iy-\tilde{\mu}| & \leq|y|+\tilde{\mu}\\
 & \leq\left(\frac{1}{8}\kappa^{-1}\mu-\mu\right)+\tilde{\mu}\\
 & \leq\frac{1}{8}\kappa^{-1}\mu
\end{align*}
and therefore
\begin{align*}
\left\Vert a^{\dagger}Xa\right\Vert  & \leq2\kappa|iy-\tilde{\mu}|\\
 & \leq\frac{1}{4}\mu\\
 & \leq\frac{1}{2}\left\Vert (iy-\tilde{\mu}-a^{\dagger}b^{-1}a)^{-1}\right\Vert ^{-1}
\end{align*}
where (\ref{eq:inv-bound-const}) was applied in the last step, using
the fact that $\tilde{\mu}+a^{\dagger}b^{-1}a\leq-\mu/2$ from Proposition~\ref{prop:spectral-properties}.
This is just the condition for the bound
\begin{align*}
\left\Vert S(iy-\tilde{\mu})\right\Vert  & \leq2\left\Vert (iy-\tilde{\mu}-a^{\dagger}b^{-1}a)^{-1}\right\Vert ^{2}\no{a^{\dagger}Xa}\\
 & \leq4\kappa\left\Vert (iy-\tilde{\mu}-a^{\dagger}b^{-1}a)^{-1}\right\Vert ^{2}|iy-\tilde{\mu}|
\end{align*}
again using (\ref{eq:inv-bound-const}) and also (\ref{eq:inv-bound-yinv})
we get the bounds
\begin{align*}
\left\Vert S(iy-\tilde{\mu})\right\Vert  & \leq16\kappa\mu^{-2}|iy-\tilde{\mu}|\\
\left\Vert S(iy-\tilde{\mu})\right\Vert  & \leq4\kappa|y|^{-2}|iy-\tilde{\mu}|
\end{align*}
for $|y|\leq\hat{\alpha}$. These are the bounds in part 1 of our
Lemma.\medskip{}

2. We now derive a bound when $|y|\geq\hat{\alpha}$. If $\sV\sG^{-1}$
is small enough then
\begin{align*}
\left\Vert a^{\dagger}b^{-1}a+\tilde{\mu}\right\Vert  & \leq\frac{\hat{\alpha}}{2}\leq\frac{1}{2}|y|\\
\left\Vert a^{\dagger}(b-iy+\tilde{\mu})^{-1}a+\tilde{\mu}\right\Vert  & \leq\frac{\hat{\alpha}}{2}\leq\frac{1}{2}|y|\,.
\end{align*}
The last two inequalities are the conditions to use (\ref{eq:inverse-bound})
and get the two bounds
\begin{align*}
\left\Vert (iy-\tilde{\mu}-a^{\dagger}(b-iy+\tilde{\mu})^{-1}a)^{-1}-(iy-\tilde{\mu})^{-1}\right\Vert  & \leq2|y|^{-2}\left\Vert a^{\dagger}(b-iy+\tilde{\mu})^{-1}a\right\Vert \\
\left\Vert (iy-\tilde{\mu}-a^{\dagger}b^{-1}a)^{-1}-(iy-\tilde{\mu})^{-1}\right\Vert  & \leq2|y|^{-2}\left\Vert a^{\dagger}b^{-1}a\right\Vert \,.
\end{align*}
Use $b_{\min}$ from (\ref{eq:b-min}), giving
\begin{align*}
\left\Vert (b-iy+\tilde{\mu})^{-1}\right\Vert  & \leq(b_{\min}-\tilde{\mu})^{-1}\\
\left\Vert (b-iy+\tilde{\mu})^{-1}\right\Vert  & \leq(b_{\min}-\tilde{\mu})^{-1}
\end{align*}
and so
\begin{align*}
\left\Vert S(iy-\tilde{\mu})\right\Vert  & \leq4|y|^{-2}\left\Vert a\right\Vert ^{2}(b_{\min}-\tilde{\mu})^{-1}\,.
\end{align*}
for $|y|>\hat{\alpha}$, which is the bound in part 2 of our Lemma.\end{proof}

\section{Conclusion}

We studied to kinetic networks that approximate the energy transfer
in a quantum network subject to dephasing. The first network $N_{0}$
derives its rates only from nearest neighbor interactions, while the
second $N$ includes higher order corrections. We proved that the
relaxation times are proportional to $\sV\sG^{-1}$ and $\sV^{2}\sG^{-2}$
respectively. Hence, the approximations are good if the interaction
gets weak, or the dephasing and/or energy gaps get large. In the case
of the FMO complex, both kinetic networks are good approximations
in the regime of dephasing-assisted energy transfer. With simulations
we found that the more complex kinetic network $N$ provides approximations
with a percentage error 5-6 magnitudes smaller than the simple kinetic
network.

The study of these approximations could be extended in several ways.
First, one could study the higher order corrections involved in $N$.
Second, when the interactions $V_{kl}$ are complex, $N$ can be non-symmetric,
meaning population exchange between sites is directed, this might
relate to coherent cancellations along loops as mentioned in \cite{cao_optimization_2009}.
And finally, it would be interesting how our method of splitting population
and coherence space to achieve kinetic network approximations could
be generalized to other quantum networks and how it relates to existing
models to approximate coherent evolution with incoherent statistical
evolution.

\section{Acknowledgments}

I want to thank Chris King for his support, ideas and many useful
discussions.

\appendix

\section{Three sites\label{app:Three-sites}}

In the following we write out parts of the master equation (\ref{eq:master-equation-no-anti})
for the case $n=3$ and then derive the form of the matrix $M$. Then
we explain how to generalize that form to higher $n$. For simplicity
of notation we omit the scaling factors $\sG$ and $\sV$, until we
reach a block matrix expression. First note that with a standard calculation
one finds $\L(\rho)$ to decrease the coherences in the manner
\[
\left(\L(\rho)\right)_{kl}=-\gamma_{kl}\rho_{kl}
\]
where $k\neq l$ and $\gamma_{kl}=\frac{1}{2}(\gamma_{k}+\gamma_{l})$
and $\left(\L(\rho)\right)_{kk}=0$. This gives a diagonal contribution
$-\gamma_{kl}$ in the diagonal of the two rows corresponding to the
real and imaginary part of $\rho_{kl}$.

Now, we evaluate the commutator
\[
\left[\begin{pmatrix}E_{1} & V_{12} & V_{13}\\
V_{21} & E_{2} & V_{23}\\
V_{31} & V_{32} & E_{3}
\end{pmatrix},\begin{pmatrix}\rho_{11} & \rho_{12} & \rho_{13}\\
\rho_{21} & \rho_{22} & \rho_{23}\\
\rho_{31} & \rho_{32} & \rho_{33}
\end{pmatrix}\right]\,.
\]
From the 1x1 entry we get
\begin{align*}
\dot{\rho}_{11} & =-i(E_{1}\rho_{11}+V_{12}\rho_{21}+V_{13}\rho_{31}-E_{1}\rho_{11}-V_{21}\rho_{12}-V_{31}\rho_{13})\\
 & =-i(V_{12}\bar{\rho}_{12}+V_{13}\bar{\rho}_{13}-\overline{V}_{12}\rho_{12}-\overline{V}_{13}\rho_{13})\\
 & =2\im(V_{12}\bar{\rho}_{12}+V_{13}\bar{\rho}_{13})\\
 & =2(-V_{12}^{r}\rho_{12}^{i}+V_{12}^{i}\rho_{12}^{r}-V_{13}^{r}\rho_{13}^{i}+V_{13}^{i}\rho_{13}^{r})
\end{align*}
where superscripts $r$ and $i$ are shortcuts for real and imaginary
parts, and from the 1x2 entry we get
\begin{align*}
\dot{\rho}_{12} & =-i(E_{1}\rho_{12}+V_{12}\rho_{22}+V_{13}\rho_{32}-V_{12}\rho_{11}-E_{2}\rho_{12}-V_{32}\rho_{13})-\gamma_{12}\rho_{12}\\
 & =-i((E_{1}-E_{2})\rho_{12}+V_{12}\rho_{22}-V_{12}\rho_{11}+V_{13}\bar{\rho}_{23}-\overline{V}_{23}\rho_{13})-\gamma_{12}\rho_{12}
\end{align*}
with real and imaginary parts
\begin{align*}
\dot{\rho}_{12}^{r} & =-V_{12}^{i}\rho_{11}+V_{12}^{i}\rho_{22}-\gamma_{12}\rho_{12}^{r}+(E_{1}-E_{2})\rho_{12}^{i}+V_{23}^{i}\rho_{13}^{r}-V_{23}^{r}\rho_{13}^{i}+V_{13}^{i}\rho_{23}^{r}-V_{13}^{r}\rho_{23}^{i}\\
\dot{\rho}_{12}^{i} & =V_{12}^{r}\rho_{11}-V_{12}^{r}\rho_{22}-(E_{1}-E_{2})\rho_{12}^{r}-\gamma_{12}\rho_{12}^{i}+V_{23}^{r}\rho_{13}^{r}+V_{23}^{i}\rho_{13}^{i}-V_{13}^{r}\rho_{23}^{r}-V_{13}^{i}\rho_{23}^{i}\,.
\end{align*}
From these results we can read off lines 1, 4 and 5 of the following
matrix and fill in the remaining lines in the same fashion

\[
M=\left(\begin{array}{ccc|cccccc}
 &  &  & \sqrt{2}V_{12}^{i} & -\sqrt{2}V_{12}^{r} & \sqrt{2}V_{13}^{i} & -\sqrt{2}V_{13}^{r}\\
 & 0 &  & -\sqrt{2}V_{12}^{i} & \sqrt{2}V_{12}^{r} &  &  & \sqrt{2}V_{23}^{i} & -\sqrt{2}V_{23}^{r}\\
 &  &  &  &  & -\sqrt{2}V_{13}^{i} & \sqrt{2}V_{13}^{r} & -\sqrt{2}V_{23}^{i} & \sqrt{2}V_{23}^{r}\\
\hline -\sqrt{2}V_{12}^{i} & \sqrt{2}V_{12}^{i} &  & -\gamma_{12} & E_{12} & V_{23}^{i} & -V_{23}^{r} & V_{13}^{i} & -V_{13}^{r}\\
\sqrt{2}V_{12}^{r} & -\sqrt{2}V_{12}^{r} &  & -E_{12} & -\gamma_{12} & V_{23}^{r} & V_{23}^{i} & -V_{13}^{r} & -V_{13}^{i}\\
-\sqrt{2}V_{13}^{i} &  & \sqrt{2}V_{13}^{i} & -V_{23}^{i} & -V_{23}^{r} & -\gamma_{13} & E_{13} & V_{12}^{i} & V_{12}^{r}\\
\sqrt{2}V_{13}^{r} &  & -\sqrt{2}V_{13}^{r} & V_{23}^{r} & -V_{23}^{i} & -E_{13} & -\gamma_{13} & -V_{12}^{r} & V_{12}^{i}\\
 & -\sqrt{2}V_{23}^{i} & \sqrt{2}V_{23}^{i} & -V_{13}^{i} & V_{13}^{r} & -V_{12}^{i} & V_{12}^{r} & -\gamma_{23} & E_{23}\\
 & \sqrt{2}V_{23}^{r} & -\sqrt{2}V_{23}^{r} & V_{13}^{r} & V_{13}^{i} & -V_{12}^{r} & -V_{12}^{i} & -E_{23} & -\gamma_{23}
\end{array}\right)
\]
where we define $E_{ij}=E_{i}-E_{j}$. Written as a block matrix
\[
M=\begin{pmatrix}0 & -a^{\dagger}\\
a & b
\end{pmatrix}
\]
one can see the explicit form of the matrices $a$, and $b$. Remember
that we also separated $b$ into two parts. We set the 2x2-block diagonal
that scales like $\sG$ (the $E_{ij}$ and $\gamma_{ij}$ entries)
to be $b_{0}$ and we set the block-off-diagonal that scales like
$\sV$ (all the $V_{ij}$ entries) to be $\nu$. So $b=b_{0}+\nu$.

In \ref{sub:kinetic-network-N0} in (\ref{eq:U-transformation}) we
defined a transformation $U$ to diagonalize $b_{0}$, if we extend
this transformation to the entire space $P\oplus C$ as
\[
\hat{U}=\one_{n}\oplus U
\]
we can apply it to $M$ directly and get
\[
\tilde{M}=\hat{U}^{\dagger}M\hat{U}=\left(\begin{array}{ccc|cccccc}
 &  &  & -V_{12} & -\overline{V}_{12} & -V_{13} & -\overline{V}_{13}\\
 & 0 &  & V_{12} & \overline{V}_{12} &  &  & -V_{23} & -\overline{V}_{23}\\
 &  &  &  &  & V_{13} & \overline{V}_{13} & V_{23} & \overline{V}_{23}\\
\hline \overline{V}_{12} & -\overline{V}_{12} &  & \alpha_{12} &  & -iV_{23} &  &  & -i\overline{V}_{13}\\
V_{12} & -V_{12} &  &  & \overline{\alpha}_{12} &  & i\overline{V}_{23} & iV_{13}\\
\overline{V}_{13} &  & -\overline{V}_{13} & -i\overline{V}_{23} &  & \alpha_{13} &  & i\overline{V}_{12}\\
V_{13} &  & -V_{13} &  & iV_{23} &  & \overline{\alpha}_{13} &  & -iV_{12}\\
 & \overline{V}_{23} & -\overline{V}_{23} &  & i\overline{V}_{13} & iV_{12} &  & \alpha_{23}\\
 & V_{23} & -V_{23} & -iV_{13} &  &  & -i\overline{V}_{12} &  & \overline{\alpha}_{23}
\end{array}\right)
\]
where $\alpha_{ij}=-\gamma_{ij}+iE_{ij}$. This new matrix consists
of the matrices $\tilde{a}$ and $\tilde{b}_{0}$ also introduced
in \ref{sub:kinetic-network-N0} 
\[
\tilde{M}=\begin{pmatrix}0 & -\tilde{a}^{\dagger}\\
\tilde{a} & \tilde{b}
\end{pmatrix}
\]
where 
\begin{align*}
\tilde{b} & =U^{\dagger}bU\\
 & =\tilde{b}_{0}+\tilde{\nu}
\end{align*}
with $\tilde{\nu}=U^{\dagger}\nu U$. The two kinetic networks are
\begin{align*}
N_{0} & =\tilde{a}^{\dagger}\tilde{b}_{0}^{-1}\tilde{a}\\
N & =\tilde{a}^{\dagger}\tilde{b}^{-1}\tilde{a}
\end{align*}
which also holds with all the tildes removed.

It is straightforward to generalize the matrices $\tilde{a}$ and
$\tilde{b}_{0}$ to $n>3$. Matrix $\tilde{a}$ connects the population
of site $k$ to the coherences between site $k$ and any other site
$l$ with strength $V_{kl}$, and matrix $\tilde{b}_{0}$ is a diagonal
matrix with entries $\alpha_{ij}$ and $\overline{\alpha}_{ij}$.
A bit more complicated is the matrix $\tilde{\nu}$ it is described
in the next subsection.

\section{General construction\label{app:Constructing-nu}}

Here we give a description of how to find $\tilde{a}$, $\tilde{b}_{0}$
and $\tilde{\nu}$ for general $n$. We number the $n$ dimensions
of population space $P$ with $k$ where $k=1,2,\dots n$ and the
$(n^{2}-n)$ dimensions of coherence space $C$ with $kl$ and $\overline{kl}$
where $k<l$ are numbers from 1 to $n$. According to the order defined
in \ref{sub:Converting-the-master} the first few dimensions of $C$
are called $12$, $\overline{12}$, $13$, ..., $23$, $\overline{23}$,
$24$, etc. .

\subsection{Constructing $\tilde{a}$ and $\tilde{b}_{0}$}

Matrix $\tilde{a}$ is an $n\times(n^{2}-n)$ complex matrix, with
the only nonzero entries
\begin{align*}
\tilde{a}_{k,kl}= & \bar{V}_{kl}=-\tilde{a}_{k,lk}\\
\tilde{a}_{k,\overline{kl}}= & V_{kl}=-\tilde{a}_{k,\overline{lk}}\,,
\end{align*}
hence in every column there are only $(n-1)$ nonzero entries.

Matrix $\tilde{b}_{0}$ is diagonal with entries
\begin{align*}
\left(\tilde{b}_{0}\right)_{kl,kl} & =-\gamma_{kl}+iE_{kl}\\
\left(\tilde{b}_{0}\right)_{\overline{kl},\overline{kl}} & =-\gamma_{kl}-iE_{kl}\,.
\end{align*}

\subsection{Constructing $\tilde{\nu}$}

The matrix $\tilde{\nu}=U^{\dagger}\nu U$ for any $n$ is a somewhat
complicated pattern of entries $V_{kl}$, signs and complex conjugates.
It connects coherences between sites $k$ and $l$ with coherences
between sites $k$ and $m$ with the strength $V_{lm}$. Entries of
$\tilde{\nu}$ are only non-zero if one number of the two double indices
match with further conditions on their conjugation. Table~\ref{tab:nu-entries}
shows the rules for the nonzero entries.
\begin{table}
\begin{tabular}{|c|c|c|c|c|c|}
\hline 
row & column & entry & row & column & entry\tabularnewline
\hline 
\hline 
$kl$ & $km$ & $-iV_{lm}$ & $\overline{kl}$ & $\overline{km}$ & $i\bar{V}_{lm}$\tabularnewline
\hline 
$lk$ & $mk$ & $i\bar{V}_{lm}$ & $\overline{lk}$ & $\overline{mk}$ & $-iV_{lm}$\tabularnewline
\hline 
$lk$ & $\overline{km}$ & $-i\bar{V}_{lm}$ & $\overline{lk}$ & $km$ & $iV_{lm}$\tabularnewline
\hline 
\end{tabular}\caption{The non-zero entries of $\tilde{\nu}$, always $l\neq m$\label{tab:nu-entries}}
\end{table}

\section{Calculations for applications\label{app:Calculations-for-applications}}

\subsection{Highly connected network}

Assume all sites are equally interacting, and have the same energies
and dephasing rates
\begin{align*}
V_{kl} & =\sV\\
E_{k} & =0\\
\gamma_{k} & =\sG\,.
\end{align*}
Then every column in $a$ has $2(n-1)$ non-zero entries all equal
to $\sV$. A simple calculation shows that
\[
a^{\dagger}a=2n\sV^{2}\left(\one_{n}-n\vec{e}\vec{e}^{\dagger}\right)
\]
so for any $\vec{v}\in I$ we have $a^{\dagger}a\vec{v}=2\sV^{2}n\vec{v}$,
hence 
\[
\left\Vert a\right\Vert =\sqrt{2n}\sV\,.
\]

Obviously,$b=-\sG\one_{C}$ and $\left\Vert b_{0}^{-1}\right\Vert =\sG^{-1}$.
This gives $\kappa=2n\sV^{2}\sG^{-2}$. Because
\[
a^{\dagger}b_{0}^{-1}a=-\sG^{-1}a^{\dagger}a
\]
we have $\mu_{0}=2n\sV^{2}\sG^{-1}$. Using $\mu\approx\mu_{0}$ we
find 
\begin{align*}
\alpha & =\min\left\{ \frac{1}{2}\left\Vert b^{-1}\right\Vert ^{-1},\,\frac{1}{4}\kappa^{-1}\mu\right\} \\
 & =\min\left\{ \frac{1}{2}\sG,\frac{1}{4}\frac{2n\sV^{2}\sG^{-1}}{2n\sV^{2}\sG^{-2}}\right\} \\
 & =\sG/4
\end{align*}
 and so 
\begin{align*}
\beta & =\max\left\{ 1,\,\alpha^{-1}\left\Vert b^{-1}\right\Vert ^{-1}\right\} \\
 & =4
\end{align*}
 and with Theorem~\ref{thm:relax-bound-1} we get the bounds
\begin{align*}
\Delta\tau & \leq\frac{20}{\pi}\sG^{-1}\\
\Delta\tau_{\mathrm{rel}} & \leq\frac{40}{\pi}n\sV^{2}\sG^{-2}\,.
\end{align*}

To get the bound on $\Delta\tau_{1,\,\mathrm{rel}}$ we also estimate
$\left\Vert \nu\right\Vert $, we use the fact that each column and
row of $\nu$ has $(n-2)$ nonzero entries and so
\[
\left\Vert \nu\right\Vert \geq vn\sV
\]
for a scaling and dimension independent constant $v$. Then Theorem~\ref{thm:relax-bound-2}
gives the bound
\[
\Delta\tau_{1,\,\mathrm{rel}}\leq4vn\sV\sG^{-1}\,.
\]
The condition for this bound is

\begin{align*}
\left\Vert \nu\right\Vert  & \leq\frac{1}{2}\left\Vert b^{-1}\right\Vert ^{-1}
\end{align*}
the LHS is bounded from below by $vn\sV$ and the RHS is constant,
so the condition does not hold for large $n$.

\subsection{Circular chain}

Assume the sites are positioned on a circle and only nearest neighbors
interact with strength $\sV$
\[
V_{kl}=\begin{cases}
\sV & |k-l|=1\\
0 & \mathrm{else}
\end{cases}
\]
where we set equivalence $n\equiv0$. Further $\gamma_{k}=\sG$ and
$E_{k}$ such that $E_{kl}=\sG E$ when $|k-l|=1$ which is possible
for $n$ even.

Now, the column for site $k$ in $a$ has only 4 entries, two each
for the coherences with $k-1$ and $k+1$. We calculate
\[
\left(a^{\dagger}a\right)_{kl}=\begin{cases}
4\sV^{2} & k=l\\
-2\sV^{2} & \left|k-l\right|=1\\
0 & \mathrm{else}
\end{cases}
\]
So $\left\Vert a^{\dagger}a\right\Vert =8\sV^{2}$ and $\left\Vert a\right\Vert =\sqrt{8}\sV$,
in particular there is no $n$ dependency. Also $\left\Vert b_{0}^{-1}\right\Vert =1/\sqrt{\sG^{2}+\sG^{2}E^{2}}$
and so $\kappa=\frac{8}{1+E^{2}}\sV^{2}\sG^{-2}$. We have
\[
N_{0}=\begin{cases}
-\frac{4\sV^{2}}{\sG(1+E^{2})} & k=l\\
\frac{2\sV^{2}}{\sG(1+E^{2})} & \left|k-l\right|=1\\
0 & \mathrm{else}
\end{cases}
\]
which has the spectrum
\begin{equation}
\lambda_{p}=-\frac{4\sV^{2}}{\sG(1+E^{2})}\left(1-\cos\left(\frac{2\pi p}{n}\right)\right)\label{eq:spectrum-linear-chain-1}
\end{equation}
with $p=1\dots n$. The nonzero eigenvalue smallest in magnitude is
$\mu_{0}$, so for large $n$ and small $\sV\sG^{-1}$, approximately
\[
\mu\approx\mu_{0}\approx\frac{2\sV^{2}}{\sG(1+E^{2})}\left(\frac{2\pi}{n}\right)^{2}
\]
 so $\alpha=\frac{1}{16}\sG\left(\frac{2\pi}{n}\right)^{2}$ and 
\begin{align*}
\beta & =\frac{16\sG\sqrt{1+E^{2}}}{\sG\left(\frac{2\pi}{n}\right)^{2}}\\
 & =\left(\frac{2n}{\pi}\right)^{2}\sqrt{1+E^{2}}\,.
\end{align*}
Moving the numbers into constants $k_{1}$ and $k_{2}$, and dropping
the $1$ in $1+\beta$ (fine for large $n$), we have
\begin{align*}
\Delta\tau & \leq k_{1}\sqrt{1+E^{2}}\sG^{-1}n^{4}\\
\Delta\tau_{\mathrm{rel}} & \leq\frac{k_{2}}{\sqrt{1+E^{2}}}\sV^{2}\sG^{-2}n^{2}\,.
\end{align*}
We again estimate $\left\Vert \nu\right\Vert $, now each column and
row of $\nu$ has $2$ or $4$ nonzero entries and so
\[
v_{1}\sV\leq\left\Vert \nu\right\Vert \leq v_{2}\sV
\]
for some scaling and dimension independent constants $v_{1}$ and
$v_{2}$. Then Theorem~\ref{thm:relax-bound-2} gives the bound
\[
\Delta\tau_{1,\,\mathrm{rel}}\leq\frac{4}{\pi^{2}}v_{2}n^{2}\sV\sG^{-1}\,.
\]
This time the condition

\begin{align*}
\left\Vert \nu\right\Vert  & \leq\frac{1}{2}\left\Vert b^{-1}\right\Vert ^{-1}
\end{align*}
does not break down for large dimensions, so the bound holds for all
$n$ when $\sV$ and $\sG$ are kept constant.

\bibliographystyle{plain}
\phantomsection\addcontentsline{toc}{section}{\refname}

\end{document}